\newtheorem{thrm}{Theorem}
\newtheorem{lma}{Lemma}
\newtheorem{defn}{Definition}
\newtheorem{exple}{Example}
\newtheorem{problem}{Problem}
\newcommand{\spara}[1]{\smallskip\noindent{\bf #1}}
\newcommand{\squishlist}{
 \begin{list}{$\bullet$}
  {  \setlength{\itemsep}{0pt}
     \setlength{\parsep}{3pt}
     \setlength{\topsep}{3pt}
     \setlength{\partopsep}{0pt}
     \setlength{\leftmargin}{2em}
     \setlength{\labelwidth}{1.5em}
     \setlength{\labelsep}{0.5em}
} }
\newcommand{\squishlisttight}{
 \begin{list}{$\bullet$}
  { \setlength{\itemsep}{0pt}
    \setlength{\parsep}{0pt}
    \setlength{\topsep}{0pt}
    \setlength{\partopsep}{0pt}
    \setlength{\leftmargin}{2em}
    \setlength{\labelwidth}{1.5em}
    \setlength{\labelsep}{0.5em}
} }
\newcommand{\squishdesc}{
 \begin{list}{}
  {  \setlength{\itemsep}{0pt}
     \setlength{\parsep}{3pt}
     \setlength{\topsep}{3pt}
     \setlength{\partopsep}{0pt}
     \setlength{\leftmargin}{1em}
     \setlength{\labelwidth}{1.5em}
     \setlength{\labelsep}{0.5em}
} }
\newcommand{\squishend}{
  \end{list}
}
\newcommand{\eat}[1]{}
\newcommand{\NP}{\ensuremath{\mathbf{NP}}\xspace}
\newcommand{\sharpP}{\ensuremath{\mathbf{\#P}}\xspace}
\newcounter{ccc}
\DeclareMathOperator*{\argmax}{arg\,max}
\newcommand{\bigO}{\mathcal{O}}
\begin{document}
\title{Fast Influence Maximization in Dynamic Graphs: \\ A Local Updating Approach}
	
\author{Vijaya Krishna Yalavarthi}
\affiliation{%
	\institution{NTU Singapore}
}
\email{yalavarthi@ntu.edu.sg}

\author{Arijit Khan}
\affiliation{%
	\institution{NTU Singapore}
}
\email{arijit.khan@ntu.edu.sg}

\begin{abstract}
We propose a {\em generalized} framework for influence maximization in large-scale, time evolving networks.
Many real-life influence graphs such as social networks, telephone networks, and IP traffic data exhibit dynamic characteristics,
e.g., the underlying structure and communication patterns evolve with time. Correspondingly, we develop a
dynamic framework for the influence maximization problem, where we perform effective {\em local updates} to
quickly adjust the top-$k$ influencers, as the structure and communication patterns in the network change.
We design a novel {\sf N-Family} method (N=1, 2, 3, $\ldots$) based on the maximum influence arborescence (MIA)
propagation model with approximation guarantee of $(1-1/e)$. We then develop heuristic algorithms
by extending the {\sf N-Family} approach to other information propagation models (e.g., independent cascade, linear threshold)
and influence maximization algorithms (e.g., CELF, reverse reachable sketch).
Based on a detailed empirical analysis over several real-world, dynamic, and large-scale networks,
we find that our proposed solution, {\sf N-Family} improves the updating time of the top-$k$ influencers by $1\sim2$
orders of magnitude, compared to state-of-the-art algorithms, while ensuring similar memory usage and influence spreads.
\end{abstract} 
	
\maketitle

\vspace{-3mm}
\section{Introduction}
\label{sec:intr}
\vspace{-1mm}
The problem of influence analysis \cite{KKT03} has been widely studied in the
context of social networks, because of the tremendous number of applications of this
problem in viral marketing and recommendations.
The assumption
in bulk of the literature on this problem is that a
static network has already been provided, and the objective is to identify the
top-$k$ seed users in the network such that the expected number of
influenced users, starting from those seed users and following
an influence diffusion model, is maximized.

In recent years, however, people recognized
the inherent usefulness
in studying the dynamic network setting  \cite{AS14}, and
influence analysis is no exception to this general trend~\cite{SLCHT17,OAYK16},
because many real-world social networks
evolve over time. In an evolving graph, new edges (interactions)
and nodes (users) are continuously added,
while old edges and nodes get dormant, or deleted.
In addition, the communication pattern and frequency may also change.
\begin{figure}[t!]
	\centering
	\includegraphics[scale=0.34]{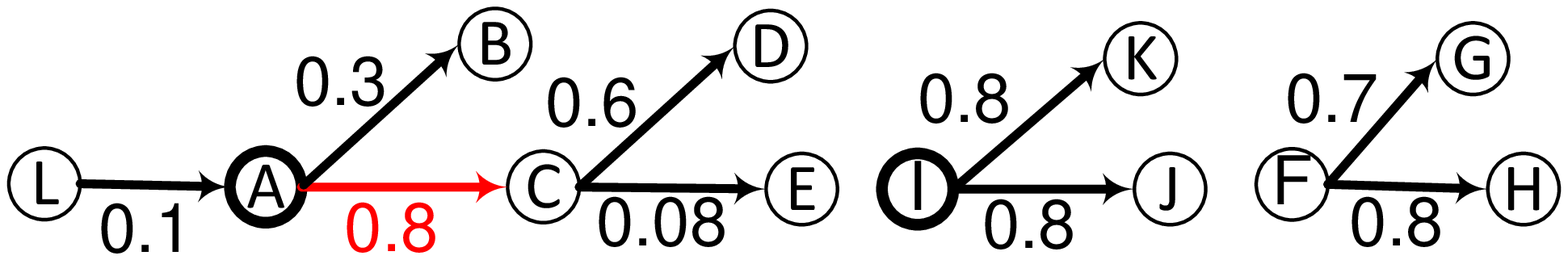}
	\vspace{-3mm}
	\caption{\small Running example: an influence graph}
	\label{fig:Inf_Prop}
	\vspace{-3mm}
\end{figure}
\begin{figure}[t!]
	\centering
	\includegraphics[scale=0.34]{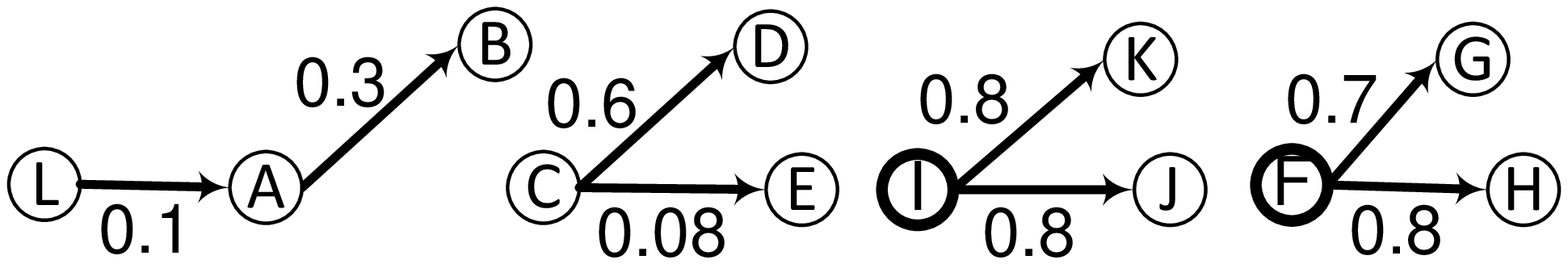}
	\vspace{-3mm}
	\caption{\small Influence graph after update operation: edge deletion $AC$}
	\label{fig:Inf_Prop_2}
	\vspace{-5mm}
\end{figure}
From an influence analysis perspective, even modest changes in the underlying
network structure (e.g., addition/ deletion of nodes and edges)
and communication patterns (e.g., update in influence probabilities over time)
may lead to changes in the top-$k$ influential nodes.
As an example, let us consider the influence graph in Figure~\ref{fig:Inf_Prop}
with 12 nodes, out of which the top-$2$ seed nodes are $A$ and $I$ (marked in bold),
following the Maximum Influence Arborescence (MIA) model and $\theta = 0.07$ \cite{CWW10} (we shall introduce
the details of the MIA model later). The influence spread obtained from this seed set, according to the MIA model, is: $2.58 + 2.6 = 5.18$.
Now, assume an update operation in the form of an edge removal $AC$ (marked in red).
The new influence spread obtained from the old seed nodes would be: $1.3 + 2.6 = 3.9$,
whereas if we recompute the top-$2$ seed nodes, they are $I$ and $F$, as shown in Figure~\ref{fig:Inf_Prop_2}.
The influence spread from these new seed nodes is: $2.6 + 2.5 = 5.1$.
It can be observed that there is a significant difference in the influence spread obtained with the old seed set vs. the new ones
(even for such a small example graph), which motivates us to efficiently update the seed nodes when the influence graph evolves.

However, computing the seed set from ground, after every update, is prohibitively expensive \cite{SLCHT17,OAYK16} ---
this inspires us to develop dynamic influence maximization algorithms.
By carefully observing, we realize that among the initial two seed nodes,
only one seed node, namely $A$ is replaced by $F$, whereas $I$ still continues to be a seed node.
It is because $A$ is in the {\em affected} region of the update operation, whereas $I$ is not affected by it.
Therefore, if we can identify that $A$ can no longer continue as a seed node, then we can remove it from the seed set; and
next, the aim would be to find one new seed node instead of two. Hence, we save almost $1/2$ of the computation in updating the seed set.

To this end, the two following questions are critical for identifying the top-$k$ seed nodes in a dynamic environment.
\vspace{-1mm}
\begin{itemize}
\setlength\itemsep{0.01em}
\item What are the regions affected when the network evolves?
\item How to efficiently update the seed nodes with respect to such affected regions?
\end{itemize}

\vspace{-2mm}
\spara{Affected region.} The foremost query that we address is identifying the affected region, i.e., the set of nodes potentially
affected due to the update. They could be: (1) the nodes (including some old seed nodes) whose influence spreads are {\em significantly}
changed due to the update operation, and also (2) those nodes whose {\em marginal gains} might change due to an affected seed node,
discovered in the previous step(s). Given a seed set $S$, the marginal gain of a node $v \not \in S$ is computed as the additional influence that $v$
can introduce when it is added to the seed set.

Given the influence graph and dynamic updates, we design an iterative algorithm to quickly identify the nodes in the affected region.
We call our method {\sf N-Family}, $N=1, 2, 3, \ldots,$ (until a base condition is satisfied), which we shall discuss in Section~\ref{sec:prop_sol}.

\vspace{-1mm}
\spara{Updating the seed nodes.} Once the affected region is identified, updating the top-$k$ seed set with respect to that affected region is also
a challenging problem. In this work, we develop an approximate algorithm under the MIA model of information diffusion, with {\em theoretical performance guarantee} of $1-1/e$.

Moreover, it should be understood that our primary aim is to maximize the influence spread as much as possible
with the new seed nodes, instead of searching for the exact seed nodes
(in fact, finding the exact seed nodes is \NP-hard \cite{KKT03}).
Therefore, we also show how to design more efficient heuristic algorithms, by carefully tuning
the parameters (e.g., by limiting $N=2$) of our {\sf N-Family} approach.

Our proposed framework to update the top-$k$ seed nodes is a {\em generic} one, and we develop heuristics by using it
on top of other information propagation models (e.g., independent cascade, linear threshold \cite{KKT03})
and several influence maximization (IM) algorithms (e.g., Greedy \cite{KKT03}, CELF \cite{LKGFVG07}, RR-sketch \cite{BBCL14,TSX15}).
In particular, we first find the affected region, and then update the seed nodes {\em only by adding a few sub-routines to the existing
static IM algorithms}, so that they can easily adapt to dynamic changes.

\vspace{-1mm}
\spara{Our contributions.} The contributions of our work can be summarized as follows.
\vspace{-1mm}
\begin{itemize}
\setlength\itemsep{0.01em}
\item We propose an iterative technique, {\sf N-Family} that systematically identifies affected nodes (including old seed nodes) due to
dynamic updates, and develop an incremental method that replaces the affected seed nodes with new ones, so to maximize the influence spread in the
updated graph. We derive theoretical performance guarantees of our algorithm under the MIA model.
\item We show how to develop efficient heuristics by extending proposed algorithm to other information propagation models and influence maximization
algorithms for updating the seed nodes in an evolving network.
\item We conduct a thorough experimental evaluation using several real-world, dynamic, and large graph datasets. The
empirical results with our heuristics attest $1\sim2$ orders of efficiency improvement, compared to state-of-the-art approaches \cite{SLCHT17,OAYK16}.
A snippet of our results is presented in Table~\ref{tab:result_summary}.
\end{itemize}
\begin{table}[tb!]
	\vspace{-2mm}
	\scriptsize
	\begin{center}
		\vspace{-1mm}
		\centering
		\caption{\small Average seed-set-updating time (sec) per node addition in the influence graph; the seed set consists of top-30 seed nodes;
			IC Model for influence cascade. For more details, we refer to Section~\ref{sec:experiments}\label{tab:result_summary}.}
		\vspace{-3mm}
		\begin{tabular} { c||c|c||c|c}
			\hline
			\textsf{Datasets}             & \textsf{UBI+} & \textsf{Family-CELF} & \textsf{DIA} & \textsf{Family-RRS} \\
			(\#nodes, \#edges)            & \cite{SLCHT17}& [our method]         & \cite{OAYK16}&[our method]   \\ \hline \hline
			{\em Digg} (30K, 85K)         &   3.36 sec    &  {\bf 0.008} sec      &    5.60 sec &    0.20 sec    \\ \hline
			{\em Slashdot} (51K, 130K)    &   11.3 sec    &  {\bf 0.05} sec      &   35.16 sec &  2.96   sec    \\ \hline
			{\em Epinions} (0.1M, 0.8M)   &   1111.21 sec   &  24.58 sec      &    134.68 sec &    {\bf 5.31} sec   \\ \hline
			{\em Flickr} (2.3M, 33M)      &   45108.09    sec   &  1939.40 sec          &    770.41 sec&    {\bf 273.50} sec   \\ \hline
		\end{tabular}
	\end{center}
	\vspace{-6mm}
\end{table} 	
\vspace{-2mm}
\section{Related Work}
\label{related}

\vspace{-1mm}
Kempe et al. \cite{KKT03} addressed the problem of
influence maximization in a social network as a discrete optimization problem,
and proposed a hill climbing greedy algorithm,
with an accuracy guarantee of $(1 -1/e)$.
They used the MC
simulation to compute the expected influence spread of a seed set.
Since the introduction of the influence maximization problem, many algorithms (see \cite{CLC13} for details)
have been developed, both heuristic and approximated, to improve the efficiency of the original greedy method.
Below, we survey some of these methods that are employed in our framework.
Leskovec et al. \cite{LKGFVG07} exploited the sub-modularity property of the greedy algorithm, and proposed more efficient {\sf CELF} algorithm.
Chen et al. \cite{CWW10} avoided MC simulations, and
developed the MIA model using maximum probable paths for the
influence spread computation. Addressing the inefficiency of MC simulations,
Borgs et al. \cite{BBCL14} introduced a reverse reachable sketching technique ({\sf RRS})
without sacrificing the accuracy guarantee.

In recent years, there has been interest in performing
influence analysis in dynamic graphs~\cite{ALY12,SLCHT17,LLLZSHX15, ZSTZS13,OAYK16,WFLT17}.
The work in~\cite{ALY12} was the first to propose methods that maximize the influence over a specific
interval in time; however, it was not designed for the online setting.
The work in~\cite{ZSTZS13} probed a subset of the nodes for detecting the underlying changes.
Liu et al. \cite{LLLZSHX15} considered an evolving network model (e.g., preferential attachment) for influence
maximization. Subbian et al.~\cite{SAS16b} discussed the problem of finding
influencers in social streams, although they employed frequent pattern mining techniques over
the underlying social stream of content. This is a different modeling assumption than the dynamic graph
setting considered in this work. Recently, Wang et al. \cite{WFLT17} considered a {\em sliding window
model} to find influencers based on the most recent interactions. Once again, their framework is {\em philosophically different
from the classical influence maximization setting \cite{KKT03}, as they do not consider any edge probabilities};
and hence, not directly comparable to ours.

In regards to problem formulation, recent works in \cite{SLCHT17,OAYK16,YWPC17} are closest to ours.
{\sf UBI+} \cite{SLCHT17} was designed for MC-simulation based algorithms and IC model. It performs
greedy exchange for multiple times --- every time an old seed node is replaced
with the best possible non-seed node. If one continues such exchanges until there is no improvement, the method guarantees
0.5 approximation. {\sf DIA} \cite{OAYK16} and \cite{YWPC17} work on top of RR-Sketches. These methods generate
all RR-sketches only once; and after every update, quickly modifies those existing sketches.
After that, {\em {\sf DIA} \cite{OAYK16} identifies all seed nodes from ground using modified sketches}.
This is the key difference with our framework, since {\em we generally need to identify
only a limited number of new seed nodes}, based on affected regions due to updates.
On the contrary, \cite{YWPC17} reports the top-$k$ nodes having maximum influence
spreads individually with the modified sketches. Thus, the objective of \cite{YWPC17} is
different from that of classic influence maximization, which we study in this work.

Moreover, it is non-trivial to adapt both {\sf UBI+} and {\sf DIA} for other
influence models and IM algorithms, than their respective ones. A drawback of this is as follows.
Sketch based methods (e.g., {\sf DIA}) consume higher memory for storing multiple sketches.
In contrast, MC-simulation based methods (e.g., {\sf UBI+}) are slower over large graphs.
On the other hand, our proposed {\sf N-Family} approach can
be employed over many IM models and algorithms, and {\em due to the local updating principle,
it significantly improves the efficiency under all scenarios}. Therefore, one can select the underlying
IM models and algorithms for the {\sf N-Family} approach based on system specifications and application
requirements. This demonstrates the {\em generality} of our solution.

\vspace{-3mm}
\section{Preliminaries}
\label{sec:preliminaries}
\vspace{-1mm}
An influence network can be modeled as an uncertain graph $\mathcal{G}(V,E,$ $P)$, where $V$ and $E\subseteq V\times V$ denote the sets of nodes (users) and directed edges (links between users) in the network, respectively. $P$ is a function $P : E \rightarrow (0, 1)$ that assigns a probability to every edge $uv \in E$, such that $P_{uv}$ is the strength at which an active user $u \in V$ influences her neighbor $v \in V$. The edge probabilities can be learnt (from past propagation traces), or inferred (following various models), as discussed in \cite{CLC13}. In this work, we shall assume that $\mathcal{G}(V,E,P)$ is given as an input to our problem.
\begin{algorithm}[tb!]
	\caption{\small $Greedy(\mathcal{G},S,k)$: for IM in static networks}
    \label{alg:greedy}
	\begin{algorithmic}[1]
		\Require Graph $\mathcal{G}(V, E, P)$, seed set $S$ (initially empty), positive integer $k$
		\Ensure Seed set $S$ having the top-$k$ seed nodes
		\While {$|S| <= k$}
		\State $u^* = \argmax_{u \in V\setminus S} \{\sigma (S\cup \{u\}) - \sigma (S)\}$
		\State $S = S\cup u^*$
		\EndWhile
		\State Output $S$
	\end{algorithmic}	
\end{algorithm}
\vspace{-3mm}
\subsection{Influence Maximization in Static Graphs}
\label{sec:statIM}
\vspace{-1mm}
Whenever a social network user $u$ buys a product, or endorses an action (e.g., re-tweets a post),
she is viewed as being influenced or activated. When $u$ is active, she automatically becomes eligible to influence her
neighbors who are not active yet. While our designed framework can be applied on top of a varieties of influence diffusion models; due to brevity, we
shall introduce maximum influence arborescence (MIA) \cite{CWW10} and independent cascade (IC) \cite{KKT03} models.
We develop an approximate algorithm with theoretical guarantee on top of MIA, and an efficient heuristic with IC.

\spara{MIA model.} We start with an already active set of nodes $S$, called the seed set, and the influence from the seed nodes propagates only via the {\em maximum influence paths}.
A path $Pt$ from a source $u$ to a destination node $v$ is called the maximum influence path $MIP(u,v)$ if this has the highest probability compared
to all other paths between the same pair of nodes. Ties are broken in a predetermined and consistent way, such that the maximum
influence path between a pair of nodes is always unique. Formally,
\vspace{-1mm}
\begin{align}
MIP(u,v) = \argmax_{Pt \in \mathcal{P}(u,v)} \{\prod_{e\in Pt} P_e\} \label{eq:mip}
\vspace{-2mm}
\end{align}
Here, $\mathcal{P}(u,v)$ denotes the set of all paths from $u$ to $v$. In addition, an influence threshold $\theta$ (which is
an input parameter to trade off between efficiency and accuracy \cite{CWW10}) is used to eliminate
maximum influence paths that have smaller propagation probabilities than $\theta$.

\vspace{-1mm}
\spara{IC model.} This model assumes that diffusion process from the seed nodes continue in {\em discrete time steps}. When some node $u$ first becomes active at step
$t$, it gets a single chance to activate each of its currently inactive out-neighbors $v$; it succeeds with probability $P_{uv}$. If
$u$ succeeds, then $v$ will become active at step $t+1$. Whether or not $u$ succeeds at step $t$, it cannot make any further attempts in
the subsequent rounds. Each node can be activated only once and it stays active until the end. The campaigning process runs until no more activations are possible.

\spara{Influence estimation problem.} All active nodes at the end, due to a diffusion process, are considered as the nodes influenced by $S$.
In an uncertain graph $\mathcal{G}(V,E,P)$, influence estimation is the problem of identifying the expected influence spread $\sigma(S)$ of $S \subseteq V$.

It has been proved that the exact estimation of influence spread is a \sharpP-hard problem,
under the IC model \cite{CWW10}. However, influence spread can be computed in polynomial time for the MIA model.

\spara{Marginal influence gain.} Given a seed set $S$, the marginal gain $MG(S,u)$ of a node $u \not \in S$ is computed as the additional influence that $u$
can introduce when it is added to the seed set.
\begin{align}
	MG(S,u) = \sigma(S\cup\{u\})-\sigma(S) \label{eq:marginal}
\end{align}
\vspace{-1mm}
\spara{Influence maximization (IM) problem.}
Influence maximization is the problem of identifying the seed set $S^*$ of cardinality $k$ that has the maximum expected influence spread in the network.
\begin{table}[tb!]
	\vspace{-2mm}
	\begin{small}
		\begin{center}
			\vspace{-1mm}
			\centering
			\caption{\small Notations used and their meanings\label{tab:notation}}
			\vspace{-3mm}
			\begin{tabular} { |l|l|}
				\hline
				$\quad$ \textsf{Symbol} & $\qquad$ $\qquad$\textsf{Meaning} \\ \hline \hline
				$\mathcal{G}(V,E,P)$ & uncertain graph \\ \hline
				$P_e$ & probability of edge $e$ \\ \hline
				$Pt$ & a path \\ \hline
				$\mathcal{P}(u,v)$ & set of all paths from $u$ to $v$ \\ \hline
				$MIP(u,v)$ & the highest probability path from $u$ to $v$ \\ \hline
				$S$ & seed set \\ \hline
				$S^{i-1}$ & seed set formed after $(i-1)$ iterations of Greedy algorithm\\ \hline
				$s^i$ & seed node added at the $i$-th iteration of Greedy algorithm\\ \hline
				$\sigma(S)$ & expected influence spread from $S$ \\ \hline
				$pp(S,u)$ & probability that $u$ gets activated by $S$ \\ \hline
				$MG(S,u)$ & marginal influence gain of $u$ w.r.t. seed set $S$ \\ \hline
				$Q$ & priority queue that sorts non-seed nodes in descending order \\
				& of marginal gains (w.r.t. seed set) \\ \hline
			\end{tabular}
		\end{center}
	\end{small}
	\vspace{-4mm}
\end{table}

The influence maximization is an \NP-hard problem, under both MIA and IC models \cite{CWW10,KKT03}.

In spite of the aforementioned computational challenges of influence estimation and maximization,
the following properties of the influence function, $\sigma(S)$ assist us in developing a {\em Greedy} Algorithm (presented in Algorithm~\ref{alg:greedy}) with approximation guarantee of $(1-\frac{1}{e})$ \cite{KKT03}
\vspace{-1mm}
\begin{lma} \label{lma:submodularity}
Influence function is sub-modular \cite{KKT03,CWW10}.
A function $f$ is sub-modular if $f(S \cup \{x\}) - f(S) \geq f(T \cup \{x\}) - f(T)$ for any $x$, when $S\subseteq T$.
\end{lma}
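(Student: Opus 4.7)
The plan is to prove sub-modularity via the standard \emph{possible-worlds} (live-edge) coupling argument, treating the stochastic diffusion process as a convex combination of deterministic reachability problems on sampled graphs. First I would fix the probability space: for each edge $e \in E$, flip an independent coin with bias $P_e$ to decide whether $e$ is ``live'' or ``blocked.'' This produces a random deterministic subgraph $X \subseteq \mathcal{G}$. The key observation for the IC model is that the set of nodes activated starting from a seed set $S$ has the same distribution as the set of nodes reachable from $S$ in $X$; hence
\[
\sigma(S) \;=\; \mathbb{E}_X\!\left[\,|R_X(S)|\,\right],
\]
where $R_X(S)$ denotes the set of vertices reachable from $S$ in the realization $X$.

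Next I would establish the deterministic core: for every fixed directed graph $X$, the function $S \mapsto |R_X(S)|$ is sub-modular. For any $x$ and any $S \subseteq T$, the marginal reachability gain $|R_X(S\cup\{x\})| - |R_X(S)|$ counts the vertices reachable from $x$ in $X$ that are \emph{not} already reachable from $S$. Since $S \subseteq T$, every vertex reachable from $S$ is also reachable from $T$, so the set of ``newly covered'' vertices when $x$ is added to $T$ is a subset of the ``newly covered'' vertices when $x$ is added to $S$. This yields the desired inequality pointwise in $X$.

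The final step is to lift the pointwise inequality to expectations. Because a non-negative convex combination (in particular an expectation) of sub-modular functions is sub-modular, taking $\mathbb{E}_X[\,\cdot\,]$ of both sides preserves the inequality, giving sub-modularity of $\sigma$ under IC. For the MIA model, I would adapt the argument by replacing the live-edge sampling with the deterministic projection that keeps only edges lying on some maximum-influence path $MIP(u,v)$ with $\prod_{e \in MIP(u,v)} P_e \ge \theta$; the activation probability $pp(S,u)$ is then computed on each per-target in-arborescence, and the same ``newly covered vertex'' comparison (now weighted by activation probabilities along the arborescence) carries through. The main technical obstacle I anticipate is the MIA case: unlike IC, the influence spread is not literally an expectation over i.i.d. edge samples, so the argument must be phrased in terms of per-node activation probabilities $pp(S,\cdot)$ along fixed arborescences, and one must verify that for each target $v$ the map $S \mapsto pp(S,v)$ is itself sub-modular, which follows from an inclusion-exclusion expansion over the in-arborescence rooted at $v$ combined with the same monotonicity-of-newly-covered-paths idea used in the deterministic step.
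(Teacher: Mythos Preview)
The paper does not actually prove this lemma; it merely states it and cites \cite{KKT03,CWW10} as the sources, so there is no in-paper argument to compare against. Your IC argument is precisely the Kempe--Kleinberg--Tardos live-edge coupling from \cite{KKT03}, and it is correct as written.

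For the MIA part, your sketch lands in the right place but the intermediate description is slightly off: MIA is not a ``deterministic projection'' of the graph---activation along each in-arborescence is still probabilistic. What \cite{CWW10} actually does (and what you converge to in your last sentence) is show that for every target $v$, the map $S \mapsto pp(S,v)$ computed on the fixed tree $MIIA(v,\theta)$ is sub-modular; summing over $v$ then gives sub-modularity of $\sigma$. On a tree this can be obtained either by a recursion on the arborescence structure or, equivalently, by running your live-edge argument on each $MIIA(v,\theta)$ separately (since the arborescence is fixed, the reachable-set comparison goes through verbatim). The ``inclusion-exclusion expansion'' phrasing is not how \cite{CWW10} argues it and is not really needed; the per-arborescence live-edge coupling is cleaner and suffices.
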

\vspace{-3mm}
\begin{lma} \label{lma:monotone}
Influence function is monotone \cite{KKT03,CWW10}.
A function $f$ is monotone if $f(S\cup\{x\}) \geq f(S)$ for any x.
\end{lma}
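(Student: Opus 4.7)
The plan is to prove monotonicity for both the MIA and IC models via a coupling argument that reduces the stochastic inequality $\sigma(S) \leq \sigma(S \cup \{x\})$ to a deterministic set-inclusion at the level of individual propagation realizations. Concretely, I would introduce a probability space $\Omega$ of ``propagation outcomes'' (random draws that determine which activation attempts succeed), and for each $\omega \in \Omega$ and seed set $T$, let $A(T,\omega)$ denote the set of nodes eventually activated in realization $\omega$ when the seed set is $T$; then $\sigma(T)=\mathbb{E}_\omega[\,|A(T,\omega)|\,]$. If I can establish the pointwise inclusion $A(S,\omega)\subseteq A(S\cup\{x\},\omega)$ for every $\omega$, monotonicity of the influence function follows by taking expectations on both sides and invoking monotonicity of cardinality.

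For the IC model, I would invoke the classical equivalence with the live-edge representation: independently flip a biased coin for each edge $e$, declaring $e$ ``live'' with probability $P_e$. Under this coupling, $A(T,\omega)$ is precisely the set of nodes reachable in the resulting live-edge subgraph from some node in $T$. Since graph reachability is monotone in the source set, $A(S,\omega)\subseteq A(S\cup\{x\},\omega)$ is immediate, and the claim follows.

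For the MIA model the argument proceeds node by node rather than through a single reachability coupling. For any non-seed vertex $v$, the activation probability $pp(S,v)$ is determined by the collection of maximum influence paths $MIP(u,v)$ with $u\in S$ whose propagation probability exceeds $\theta$. Adding $x$ to $S$ cannot remove any of these paths (the MIPs are a property of $\mathcal{G}$, not of the seed set), and may introduce the additional chance of activation via $MIP(x,v)$, so $pp(S,v)\leq pp(S\cup\{x\},v)$. Simultaneously, $x$ itself transitions from contributing $pp(S,x)\leq 1$ to contributing $1$ to the sum $\sigma=\sum_v pp(\cdot,v)$. Summing these pointwise inequalities over all $v\in V$ yields $\sigma(S)\leq \sigma(S\cup\{x\})$.

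The main obstacle is conceptual rather than technical: the statement is intuitive, but one must be careful in the MIA case that the arborescences used to define $pp(\cdot,v)$ are indexed by the fixed graph $\mathcal{G}$ and threshold $\theta$, so that inserting a new seed genuinely augments (rather than alters) the set of activation channels available to each downstream node. Once this formalization is fixed, monotonicity reduces to the elementary observation that the probability of a disjunction of independent activation events can only weakly increase when an additional event is included.
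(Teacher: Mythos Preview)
Your argument is correct and follows the standard route: the live-edge coupling for IC is exactly the argument of Kempe, Kleinberg, and Tardos, and the per-node monotonicity of $pp(S,v)$ under the fixed arborescence $MIIA(v,\theta)$ is the argument in Chen, Wang, and Wang. One small refinement for the MIA case: rather than speaking of ``the probability of a disjunction of independent activation events,'' it is cleaner to invoke the recursive definition of $pp(S,v)$ over the tree $MIIA(v,\theta)$ and argue by induction on the tree structure that each internal activation probability is nondecreasing in $S$; the paths $MIP(u,v)$ for distinct $u\in S$ need not be edge-disjoint, so the ``independent events'' phrasing is slightly loose.

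As for comparison with the paper: the paper does not prove this lemma at all. It is stated with citations to \cite{KKT03,CWW10} and used as a black box; the appendix contains proofs only for Lemmas~3--6 and the approximation guarantee. So your write-up is not competing with an alternative argument in the paper --- you are supplying the proof the paper deliberately outsourced to the literature, and you are doing so along the same lines as those references.
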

The Greedy algorithm repeatedly selects
the node with the maximum marginal influence gain (line 2), and adds it to the current seed set (line 3)
until $k$ nodes are identified.

As given in Table~\ref{tab:notation}, we denote by $S^{i-1}$ the seed set formed at the end of the $(i-1)$-th iteration of Greedy, whereas $s_i \in S$
is the seed node added at the $i$-th iteration. Clearly, $1 \le i \le k$. One can verify that the following inequality holds for all $i$, $1\le i < k$.
\begin{align} \label{eq:greedy_marginal}
	MG(S^{i-1},s^i) \ge MG(S^i,s^{i+1})
	\vspace{-1mm}
\end{align}
\vspace{-3mm}
\subsection{IM in Dynamic Graphs}
\label{sec:dynIM}
Classical influence maximization techniques are developed for static graphs. The real-time influence graphs, however, are seldom static and evolves over time with various graph updates.

\spara{Graph update categories.} We recognize six update operations among which four are edge operations and two are node operations in dynamic graphs:
1. increase in edge probability, 2. adding a new edge, 3. adding a new node, 4. decrease in edge probability,
5. deleting an existing edge, and 6. deleting an existing node. We refer to the first three update operations as {\em additive updates},
because the size of the graph and its parameters increase with these operations; and the remaining as {\em reductive updates}.
Hereafter, we use a general term {\em update} for any of the above operations, until and unless specified, and we denote an update operation with $o$.

\spara{Dynamic influence maximization problem.}
\vspace{-1mm}
\begin{problem}
	\label{prob:dyn_inf}
	Given an initial uncertain graph $\mathcal{G}(V,E,P)$, old set $S^*_{old}$ of top-$k$ seed nodes, and a series of consecutive graph updates $\{o_1,o_2,\ldots,o_t\}$,
	find the new set $S^*_{new}$ of top-$k$ seed nodes for this updated graph.
\end{problem}
\vspace{-1mm}
The baseline method to solve the dynamic influence maximization problem will be to find the updated graph at every time, and then execute an IM
algorithm on the updated graph, which returns the new top-$k$ seed nodes. However, {\em computing all seed nodes from ground
	at every snapshot is prohibitively expensive}, even for moderate size graphs \cite{SLCHT17,OAYK16}.
Hence, our work aims at {\em incrementally updating the seed set}, without explicitly running the
complete IM algorithm at every snapshot of the evolving graph.

\vspace{-2mm}
\section{Approximate Solution: MIA Model}
\label{sec:prop_sol}
We propose a novel {\sf N-Family} framework for dynamic influence maximization,
which can be adapted to many influence maximization algorithms and several influence diffusion models.
We first introduce our framework under the MIA model that illustrates how an update affects the nodes in the graph
(Section~\ref{sec:alg_affected}), and how to re-adjust the top-$k$ seed nodes with a theoretical performance
guarantee (Section~\ref{sec:seed_update}). Initially, we explain our technique for a single dynamic update,
and later we show how it can be extended to batch updates (Section~\ref{sec:batch_update}).
In Section~\ref{sec:IC_model}, we show how to extend our algorithm to IC and LT models, with efficient heuristics.
\vspace{-2mm}
\subsection{Finding Affected Regions}
\label{sec:alg_affected}
\vspace{-1mm}
Given an update, the influence spread of several nodes in the graph could be affected.
However, the nearby nodes would be impacted heavily, compared to a distant node.
We, therefore, design a threshold ($\theta$)-based approach to find the affected regions,
and our method is consistent with the notion of the MIA model. Recall that in MIA model,
an influence threshold $\theta$ is used to eliminate maximum influence paths that have
smaller propagation probabilities than $\theta$. Clearly, $\theta$ is an
input parameter to trade off between efficiency and accuracy, and its optimal value
is decided empirically.
\vspace{-1mm}
\begin{problem}
	\label{prob:affected}
	Given an update operation $o$ in an uncertain graph $\mathcal{G}(V,E,P)$, find all nodes $v \in V$ for which the expected influence spread $\sigma(\{v\})$ is changed by at least $\theta$.
\end{problem}
\vspace{-1mm}
In MIA model, the affected nodes could be computed exactly in polynomial time (e.g., by exactly finding
the expected influence spread of each node before and after the update, with the MIA model). In this work, we, however,
consider a more efficient upper bounding technique as discussed next.
\vspace{-1mm}
\subsubsection{Definitions}
We start with a few definitions.
\vspace{-1mm}
\begin{defn}[Maximum Influence In-Arborescence]
	Maximum Influence In-Arborescence (MIIA) \cite{CWW10} of a node $u\in V$ is the union of all the maximum influence paths to $u$ where every node in that path reaches $u$ with a minimum propagation probability of $\theta$, and it is denoted as $MIIA(u,\theta)$. Formally,
\vspace{-1mm}
\begin{align}
		MIIA(u,\theta) = \underset{v\in V}{\cup}\{MIP(v,u): \prod_{e\in MIP(v,u)} P_e \ge \theta\}
\end{align}
\end{defn}
\vspace{-2mm}
\begin{defn}[Maximum Influence Out-Arborescence]
Maximum Influence Out-Arborescence (MIOA) \cite{CWW10} of a node $u\in V$ is the union of all the maximum influence paths from $u$ where $u$ can reach every node in that path with a minimum propagation probability of $\theta$, and it is denoted as $MIOA(u,\theta)$.
\vspace{-1mm}
\begin{align}
		MIOA(u,\theta) = \underset{w\in V}{\cup}\{MIP(u,w): \prod_{e\in MIP(u,w)} P_e \ge \theta\}
\end{align}
\end{defn}
\vspace{-3mm}
\begin{defn}[{\sf 1-Family}]
	For every node $u \in V$, {{\sf 1-Family}} of $u$, denoted as $F_1(u)$, is the set of nodes that influence $u$, or get influenced by $u$ with
	minimum probability $\theta$ through the maximum influence paths, i.e.,
\vspace{-1mm}
\begin{align}
		F_1(u) = MIIA(u,\theta) \cup MIOA(u,\theta)
\end{align}
\end{defn}
\begin{defn}[{\sf 2-Family}]
For every node $u \in V$, {{\sf 2-Family}} of $u$, denoted as $F_2(u)$, is the union of the set of nodes present in {\sf 1-Family} of every node in $F_1(u)$, i.e.,
\vspace{-1mm}
\begin{align}
		F_2(u) = \underset{w\in F_1(u)}{\cup}{F_1(w)}
\end{align}
\end{defn}
Note that {\sf 2-Family} is always a superset of {\sf 1-Family} of a node.
\begin{exple}\label{ex:prtset}
	In Figure~\ref{fig:Inf_Prop}, let us consider $\theta = 0.07$. Then, $pp(\{C\},$ $C) = 1, pp(\{A\},C) = 0.8$, and $pp(\{L\},C) = 0.8\times0.1 = 0.08$. For any other node in the graph,
	its influence on $C$ is 0. Hence, $MIIA(C,0.07)=\{C, A, L \}$. Similarly, $MIOA(C,0.07)=\{C, D, E \}$.
	$F_1(C)$ will contain $\{C, A, L, D, E\}$. Analogously $F_2(C)$ will contain $\{C, A, L, D, E, B\}$. Since the context is clear, for brevity
	we omit $\theta$ from the notation of family.
\end{exple}
\vspace{-1mm}
We note that Dijkstra's shortest path algorithm, with time complexity $\bigO(|E| + |V|\log|V|)$ \cite{CWW10}, can be used to identify the $MIIA$, $MIOA$, and {\sf 1-Family} of a node. The time complexity for computing {\sf 2-Family} is $\bigO(|E| + |V|\log|V|)^2$. For simplicity, we refer to {\sf 1-Family} of a node as its family.

The {\sf 2-Family} of a seed node satisfies an interesting property (given in Lemma~\ref{lma:rem_mg}) in terms of marginal gains.
%
\begin{lma}\label{lma:rem_mg}
	Consider $s \in S$, then removing $s$ from the seed set $S$ does not change the marginal gain of any node that is not in $F_2(s)$. Formally, $MG(S,u)=MG(S\setminus\{s\},u)$,
	for all $u \in V \setminus F_2(s)$, according to the MIA model.
\end{lma}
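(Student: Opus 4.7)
The plan is to exploit the locality inherent in the MIA model: in this model the expected influence spread decomposes as $\sigma(S)=\sum_{v\in V} pp(S,v)$, and $pp(S,v)$ is computed solely from the in-arborescence $MIIA(v,\theta)$. Consequently, $pp(S,v)$ depends only on $S\cap MIIA(v,\theta)$, or equivalently only on those seeds $x\in S$ with $v\in MIOA(x,\theta)$. I will use this to match terms between $\sigma(S\cup\{u\})-\sigma(S)$ and $\sigma((S\setminus\{s\})\cup\{u\})-\sigma(S\setminus\{s\})$ node by node.

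First I would fix an arbitrary $v\in V$ and compare the four activation probabilities $pp(S,v)$, $pp(S\cup\{u\},v)$, $pp(S\setminus\{s\},v)$, and $pp((S\setminus\{s\})\cup\{u\},v)$ by splitting into four cases according to whether $v\in MIOA(s,\theta)$ and whether $v\in MIOA(u,\theta)$. In the three easy cases (both out, only $s$ reaches $v$, only $u$ reaches $v$) the adding of $u$ and the removal of $s$ act on disjoint ``sides'' of the expression: either both marginal contributions are zero, or removing $s$ simply cancels out of the difference because $s\notin MIIA(v,\theta)$, or adding $u$ cancels out because $u\notin MIIA(v,\theta)$. In each situation a short calculation gives
\[
pp(S\cup\{u\},v)-pp(S,v)=pp((S\setminus\{s\})\cup\{u\},v)-pp(S\setminus\{s\},v).
\]

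The main obstacle, and the one place where the hypothesis $u\notin F_2(s)$ is used, is the fourth case $v\in MIOA(s,\theta)\cap MIOA(u,\theta)$, where both $s$ and $u$ genuinely influence $v$ through its in-arborescence and the differences need not match a priori. I would rule this case out by a direct chase through the definitions. If $v\in MIOA(s,\theta)$, then $v\in F_1(s)$, and therefore $F_1(v)\subseteq F_2(s)$. Since $v\in MIOA(u,\theta)$ is equivalent to $u\in MIIA(v,\theta)\subseteq F_1(v)$, this would force $u\in F_2(s)$, contradicting our hypothesis. Hence no such $v$ exists.

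Finally I would sum the per-node identity over all $v\in V$ to obtain $\sigma(S\cup\{u\})-\sigma(S)=\sigma((S\setminus\{s\})\cup\{u\})-\sigma(S\setminus\{s\})$, i.e., $MG(S,u)=MG(S\setminus\{s\},u)$ for every $u\in V\setminus F_2(s)$. The only subtle step is the one above that converts the structural definition of $F_2(s)$ into a genuine containment statement about MIIA/MIOA; everything else is a bookkeeping argument using that $pp(\cdot,v)$ reads its argument only through $MIIA(v,\theta)$.
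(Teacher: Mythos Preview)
Your proposal is correct and is essentially the same argument as the paper's: both decompose $\sigma(\cdot)$ as $\sum_v pp(\cdot,v)$, use that $pp(\cdot,v)$ depends only on seeds in $MIIA(v,\theta)$, and exploit the key observation that $u\notin F_2(s)$ forces $MIOA(s,\theta)\cap MIOA(u,\theta)=\emptyset$ (the paper phrases this as ``$u$ has zero influence on any node in $F_1(s)$''). The only cosmetic difference is that the paper splits the sum two ways on $w\in F_1(s)$ versus $w\notin F_1(s)$ and then does some add/subtract bookkeeping, whereas you split four ways on membership in $MIOA(s,\theta)$ and $MIOA(u,\theta)$ and handle each case directly; the substance is identical.
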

\vspace{-1mm}

Formal proofs of all our lemma and theorems are given in the Appendix.
Intuitively, Lemma~\ref{lma:rem_mg} holds because the marginal gain of a node $u$ depends on the influence of seed nodes over those nodes that $u$ influences. For a node $u$ that is outside $F_2(s)$, there is no node that can be influenced by both $s$ and $u$.
It follows from the fact that a node influences, or gets influenced by the nodes that are present
only in its family, based on the MIA model.

\spara{Change in family after an update.}
During the additive update, e.g., an edge addition, the size of the family of a node nearby the update may increase. A new edge would help in more influence spread, as demonstrated below.
\vspace{-2mm}
\begin{exple}
	Consider Figure~\ref{fig:Inf_Prop_2} as the initial graph. When $\theta=0.07$, $F_1(A) = \{A, L, B\}$.
	Let us assume that a new edge $AC$ with probability $0.8$ is added, that is, the updated graph is
	now Figure~\ref{fig:Inf_Prop}. If we recompute $F_1(A)$ in Figure~\ref{fig:Inf_Prop}, then we get $F_1(A) = \{A, B, C, D, L\}$.
\end{exple}
\vspace{-1mm}
Analogously, during the reductive update, e.g., an edge deletion, the size of family of a node surrounding the update may decrease.
Deleting the edge eliminates paths for influence spread, as follows.
\vspace{-4mm}
\begin{exple}
	Consider Figure~\ref{fig:Inf_Prop} as the initial graph. $F_1(A) = \{A, B, C,$ $D, L\}$. Now, the edge $AC$ with probability $0.8$ is deleted.
	If we recompute $F_1(A)$ after modifying the graph (i.e., Figure~\ref{fig:Inf_Prop_2}), we get $F_1(A) = \{A, B, L\}$.
\end{exple}
\vspace{-1mm}
Thus, for soundness, in case of an additive update, we compute $MIIA$, $MIOA$, and family on the updated graph.
On the contrary, for a reductive update, we compute them on the old graph, i.e., before the update.
Next, we show in Lemma~\ref{lma:FIR3} that $MIIA(u,\theta)$ provides a safe bound on affected region for any
update originating at node $u$, according to the MIA model.
\vspace{-1mm}
\begin{lma}\label{lma:FIR3}
	In an influence graph $\mathcal{G}(V,E,P)$, adding a new edge $uv$ does not change the influence spread of any node outside $MIIA(u,$ $\theta)$ by
	more than $\theta$, according to the MIA model.
\end{lma}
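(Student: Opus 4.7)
The plan is to prove the slightly stronger statement that, for every $w \notin MIIA(u,\theta)$, the activation probability $pp(\{w\},x)$ is in fact \emph{unchanged} for every $x\in V$ after adding the edge $uv$. Summing over $x$ then gives $\sigma_{\text{new}}(\{w\}) = \sigma_{\text{old}}(\{w\})$, and the lemma (with its slack of $\theta$) follows immediately.

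The first step is to unpack the hypothesis: $w \notin MIIA(u,\theta)$ means, by definition of $MIIA$, that $\prod_{e\in MIP(w,u)} P_e < \theta$ in the old graph. Since MIPs range over simple paths and the newly added edge $uv$ has head $v \neq u$, no simple $w$-to-$u$ path can traverse $uv$, so the same MIP remains maximum after the update and still has probability strictly less than $\theta$. The second step is a key product bound: any \emph{new} path from $w$ to some target $x$ in the updated graph must use $uv$, and therefore decomposes as a simple $w$-to-$u$ path (not using $uv$), then the edge $uv$, then a simple $v$-to-$x$ path (not using $uv$). The product of edge probabilities along any such path is at most $\bigl(\prod_{e\in MIP(w,u)} P_e\bigr)\cdot P_{uv} \cdot \bigl(\prod_{e\in MIP(v,x)} P_e\bigr) < \theta\cdot 1 \cdot 1 = \theta$.

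I would then finish via a case split on each $x$. If $\prod_{e\in MIP(w,x)}P_e \ge \theta$ in the old graph, no new path through $uv$ can match or exceed it (new paths have probability strictly below $\theta$), so $MIP(w,x)$, and hence $pp(\{w\},x)$, are unchanged. If instead $\prod_{e\in MIP(w,x)}P_e < \theta$, then $x \notin MIOA(w,\theta)$ and under the MIA convention $pp_{\text{old}}(\{w\},x) = 0$; since every new candidate MIP via $uv$ also has probability below $\theta$, $pp_{\text{new}}(\{w\},x) = 0$ as well. Either way, the activation probability of $x$ does not change, completing the argument.

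The main obstacle I anticipate is nailing down the MIA-model convention that a node whose MIP from $w$ has probability below $\theta$ contributes zero to $\sigma(\{w\})$, and being careful about the simple-path restriction on MIPs that is precisely what prevents $uv$ (whose head is $v$, not $u$) from improving $MIP(w,u)$. Once these two bookkeeping points are made explicit, the product bound above does the rest of the work.
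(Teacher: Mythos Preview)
Your proposal is correct and follows essentially the same line as the paper's proof: both hinge on the observation that any new influence path from $w$ must pass through $u$, and since $w\notin MIIA(u,\theta)$ gives $pp(w,u)<\theta$, the product along any such path is below $\theta$ and hence ignored under the MIA threshold. Your version is more careful than the paper's (which only explicitly bounds $pp(w,u)\cdot P_{uv}$ and leaves the extension to arbitrary downstream targets $x$ implicit), and your case split on whether the old $MIP(w,x)$ already met the threshold makes the ``unchanged activation probability'' conclusion fully explicit.
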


\vspace{-1mm}
Lemma~\ref{lma:FIR3} holds because a node $u$ cannot be influenced by any node that is not in $MIIA(u, \theta)$, according to the MIA model. Hence, adding an edge
$uv$ does not change the influence spread (at all) of any node outside $MIIA(u, \theta)$.
This phenomenon can be extended to edge deletion, edge probability increase, and for edge probability decrease.
Moreover, for a node update (both addition and deletion) $u$, $MIIA(u,\theta)$ gives a safe upper bound of the affected region.
We omit the proof due to brevity. Therefore, $MIIA(u,\theta)$ is an efficient (computing time $\bigO(|E| + |V|\log|V|)$) and a
safe upper bound for the affected region.
\vspace{-2mm}
\subsubsection{Infected Regions}
\label{sec:inf_reg}
Due to an update in the graph, we find that a node may get affected in two ways:
(1) the nodes (including a few old seed nodes) whose influence spreads are significantly affected due
to the update operation, and also (2) those nodes whose marginal gains might change due to an affected
seed node, discovered in the previous step(s). This gives rise to a recursive definition, and multiple
levels of {\em infected} regions, as introduced next.

\spara{First infected region ({\sf 1-IR}).}
Whenever an update operation $o$ takes place, the influence spread of the nodes surrounding it, will change.
Hence, we consider the first infected region as the set of nodes, whose influence spreads change at least by $\theta$.
\vspace{-2mm}
\begin{defn}[First infected region ({\sf 1-IR})]
	In an influence graph $\mathcal{G}(V,E,P)$ and given a probability threshold $\theta$,
	for an update operation $o$, {\sf 1-IR}$(o)$ is the set of nodes whose influence spread changes greater than or equal to $\theta$. Formally,
	\vspace{-1mm}
	\begin{align}
		\text{\sf 1-IR}(o) = \{v \in V: |\sigma_{\mathcal{G}}(v)-\sigma_{\mathcal{G},o}(v)|\ge \theta\} \label{eq:1IR}
	\end{align}
\end{defn}
\vspace{-1mm}
In the above equation, $\sigma_\mathcal{G}(v)$ denotes the expected influence spread of $v$ in $\mathcal{G}$, whereas
$\sigma_{\mathcal{G},o}(v)$ is the expected influence spread of $v$ in the updated graph. Following our earlier discussion,
we consider $MIIA(u,\theta)$ as a proxy for {\sf 1-IR}$(o)$, where $u$ is the starting node for the update operation $o$.
\vspace{-2mm}
\begin{exple}
	In Figure~\ref{fig:Inf_Prop}, consider the removal of edge $AC$.
	Assuming $\theta=0.07$, {\sf 1-IR}$(o)$=$MIIA(A,0.07) =\{A, L \}$.
\end{exple}
\vspace{-2mm}
\spara{Second infected region ({\sf 2-IR}).}
We next demonstrate how infection propagates from the first infected region to other parts of the graph through the {\em family} of
affected seed nodes.

First, consider a seed node $s \in S$, a non-seed node $u \not \in S$,
and $s \in F_2(u)$. If the influence spread of $u$ has increased due to an update,
then to ensure that $s$ continues as a seed node, we have to remove $s$ from the seed set,
and recompute the marginal gain of every node in $F_2(s)$. The node, which has the maximum gain,
will be the new seed node. Second, if a seed node $s$ gets removed from the seed set in this
process, the marginal gains of all nodes present in $F_2(s)$ will change. We are now ready to define
the second infected region.
\vspace{-2mm}
\begin{defn}[Second infected region ({\sf 2-IR})]
	For an additive update $(o_a)$, the influence spread of every node present in {\sf 1-IR}$(o_a)$ increases which gives the possibility for any node in {\sf 1-IR} to become a seed node.
	Hence, the union of 2-Family of all the nodes present in {\sf 1-IR}$(o_a)$ is called the second infected region
	{\sf 2-IR}$(o_a)$. On the contrary, in a reductive update operation $o_r$, there is no increase in influence spread of any node in
	{\sf 1-IR}$(o_r)$. Hence, the union of 2-Family of old seed nodes present in {\sf 1-IR}$(o_r)$ is considered as the second infected
	region {\sf 2-IR}$(o_r)$.
	\begin{align}
		& \text{\sf 2-IR}(o_r) = \{ F_2(s) : s \in \text{\sf 1-IR}(o_r) \cap S\} & \label{eq:2IR_r}\\
		& \text{\sf 2-IR}(o_a) = \{F_2(u) : u\in \text{ \sf 1-IR}(o_a) \}\label{eq:2IR_a}
	\end{align}
\end{defn}
The time complexity to identify {\sf 2-IR} is $\bigO(m(|E| + |V|\log|V|)^2)$, where $m$ is the number of nodes in {\sf 1-IR}.
\vspace{-2mm}
\begin{exple}
	In Figure~\ref{fig:Inf_Prop}, consider the removal of edge $AC$.
	Assuming $\theta=0.07$, {\sf 2-IR}$(o)$=$F_2(A)$. This is because
	$A$ is an old seed node present in {\sf 1-IR}$(o)$ for this
	reductive update. Furthermore, because this is a reductive
	update, the family of $A$ needs to be computed before the update.
	Therefore, {\sf 2-IR}$(o)$=$F_2(A)=\{A,B,C,D,L,E\}$.
\end{exple}
\vspace{-2mm}
\spara{Iterative infection propagation.}
Whenever there is an update, the infection propagates through the 2-Family of the nodes whose marginal gain changes as discussed above.
For $N\ge 3$, the infection propagates from the $(N-1)^{th}$ infected region to the $N^{th}$ infected region through old seed nodes that are
present in the 2-Family of nodes in {\sf (N-1)-IR}.
\vspace{-2mm}
\begin{defn}[$N(\ge 3)$ infected region ({\sf N-IR})]
	The 2-Family of seed nodes, that are in the 2-Family of infected nodes in {\sf (N-1)-IR}, constitute
	the $N^{th}$ infected region.
	\vspace{-1.5mm}
	\begin{align}
		\text{\sf N-IR} = \{F_2(s): s \in F_2(u) \cap S, u\in {\text{\sf (N-1)-IR}}\}
		\label{eq:nIR}
	\end{align}
\end{defn}
\begin{figure}[t!]
	\centering
	\includegraphics[scale=0.21]{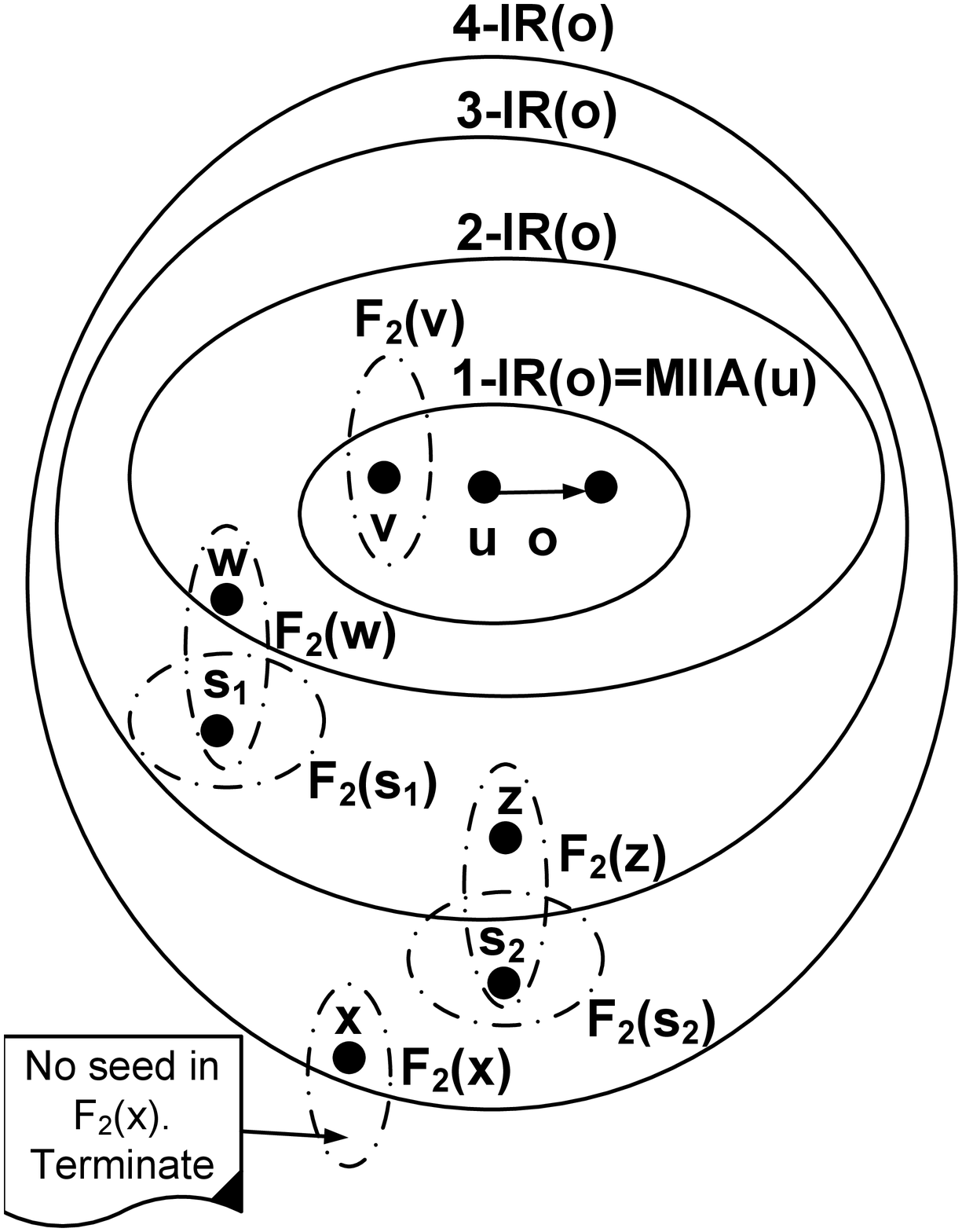}
	\vspace{-4mm}
	\caption{\small Iterative infection propagation: $o$ is an additive update operation originating at node $u$.
		$s_1$ and $s_2$ are two old seed nodes. $v$, $w$, $z$, $x$ are nodes, not necessarily old seed nodes.}
	\label{fig:tir}
	\vspace{-6mm}
\end{figure}
We demonstrate the iterative computation of infected regions, up to {\sf 4-IR} for an additive update, in Figure~\ref{fig:tir}. We begin with node $u$ which is the starting node of the update, and $MIIA(u,\theta)$ is the {\sf 1-IR}. The update being an additive one, union of 2-Family of all the nodes $v \in \text{\sf 1-IR}$ is considered as the {\sf 2-IR}. For all nodes $w \in \text{\sf 2-IR}$, we compute $F_2(w)$. Now, union of 2-Family of all seed nodes $s_1 \in F_2(w)$ is considered as {\sf 3-IR}. Similarly, {\sf 4-IR} can be deduced, and as there is no seed node present in the 2-Family of all nodes $x\in \text{\sf 4-IR}$, we terminate the infection propagation.

\spara{Termination of infection propagation.}
The infection propagation stops when no further old seed node is identified in the 2-Family of any node in the $N^{th}$ infected region. Due to this, there shall be no infected node present in 2-Family of any uninfected seed node.
For a seed set of cardinality $k$, it can be verified that the maximum value of $N$ can be between $1$ and $(k+1)$ for reductive update and between $2$ and $(k+2)$ for additive update.

\spara{Total infected region ({\sf TIR}).}
The union of all infected regions is referred to as the total infected region ({\sf TIR}).
\vspace{-1.5mm}
\begin{align}
{\sf TIR} = {\text{\sf 1-IR}} \cup {\text{\sf 2-IR}} \cup {\text{\sf 3-IR}} \cup \ldots \text{until termination} \label{eq:TIR}
\end{align}

Our recursive definition of {\sf TIR} ensures the following properties.
\vspace{-5mm}
\begin{lma} \label{lma:tir_out}
	The marginal gain of every node outside {\sf TIR} does not change, according to the MIA model.
	Formally, let $S$ be the old seed set, and
	we denote by $S_{rem}$ the \underline{rem}aining old seed nodes outside {\sf TIR},
	i.e., $S_{rem}=S\setminus {\text{\sf TIR}}$. Then, the following holds:
	$MG(S,v)=MG(S_{rem},v)$, for all nodes $v \in V\setminus{\text{\sf TIR}}$.
\end{lma}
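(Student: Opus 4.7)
The plan is to deduce the lemma by iterating Lemma~\ref{lma:rem_mg}. Enumerate $S \cap \text{TIR} = \{s_1,\ldots,s_t\}$ in any order and set $S_i = S\setminus\{s_1,\ldots,s_i\}$, so that $S_0 = S$ and $S_t = S_{rem}$. Lemma~\ref{lma:rem_mg}, applied to the pair $(S_{i-1}, s_i)$, gives $MG(S_{i-1}, v) = MG(S_i, v)$ whenever $v \notin F_2(s_i)$. Chaining these $t$ equalities, if $v \notin F_2(s_i)$ for every $i$, we obtain $MG(S, v) = MG(S_{rem}, v)$, which is exactly the statement of the lemma. Therefore the entire proof reduces to the structural claim that $F_2(s) \subseteq \text{TIR}$ for every $s \in S \cap \text{TIR}$; once this is established, any $v$ outside TIR automatically satisfies $v \notin F_2(s_i)$ for all $i$, and the chain of equalities goes through.

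To prove the structural claim, I would induct on the smallest level $N$ for which $s \in \text{N-IR}$. \emph{Base case} $N = 1$: for a reductive update, Eq.~\ref{eq:2IR_r} directly contributes $F_2(s) \subseteq \text{2-IR}$, and for an additive update, Eq.~\ref{eq:2IR_a} contributes $F_2(u) \subseteq \text{2-IR}$ for every $u \in \text{1-IR}$, which in particular applies with $u = s$. \emph{Inductive step} $N \ge 2$: because every node $w$ reaches itself via the trivial zero-length path with probability $1 \ge \theta$, we have $w \in F_2(w)$ for all $w$. Thus, taking $u = s \in \text{N-IR}$ in Eq.~\ref{eq:nIR} at the next level, together with the witness seed $s \in F_2(s) \cap S$, forces $F_2(s) \subseteq \text{(N+1)-IR} \subseteq \text{TIR}$.

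The most delicate point is the precise meaning of the termination rule for the recursive construction of TIR. The expansion rules in Eqs.~\ref{eq:2IR_r}--\ref{eq:nIR}, read literally, would re-add $F_2(s)$ at every subsequent level through the self-containment $s \in F_2(s)$, so they only make sense under the BFS convention implicit in the paper, namely that each old seed is expanded at most once and the process halts when a round discovers no \emph{new} seed to expand. With this reading, TIR is exactly the closure of 1-IR (augmented, for additive updates, by the 2-Families of all of its nodes) under the operation ``if a seed lies in the current set, adjoin its 2-Family.'' This closure property is precisely what the inductive step needs, and it certifies $F_2(s) \subseteq \text{TIR}$ for every $s \in S \cap \text{TIR}$ by construction, completing the reduction and hence the proof.
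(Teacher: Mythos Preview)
Your proposal is correct and follows essentially the same approach as the paper. The paper's own argument (stated informally right after the lemma and again, very tersely, in the appendix) is precisely: any node $v$ outside {\sf TIR} lies outside $F_2(s)$ for every seed $s\in S\cap\text{\sf TIR}$, so Lemma~\ref{lma:rem_mg} applies and the marginal gain is unchanged. You make this rigorous by explicitly chaining Lemma~\ref{lma:rem_mg} over an enumeration of $S\cap\text{\sf TIR}$ and by proving the structural inclusion $F_2(s)\subseteq\text{\sf TIR}$ via induction on the level~$N$; the paper merely asserts the latter as a consequence of how the infection propagation is terminated. Your discussion of the BFS reading of the termination rule is a useful clarification that the paper leaves implicit.
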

\vspace{-1mm}
Lemma~\ref{lma:tir_out} holds because any node outside {\sf TIR} does not belong to {\sf 2-Family} of any seed node present in TIR. Hence, by Lemma~\ref{lma:rem_mg}, its marginal gain does not change.
\vspace{-2mm}
\begin{lma} \label{lma:tir_in_out}
	Any old seed node outside {\sf TIR} has no influence on the nodes inside {\sf TIR}, following the MIA model.
	Formally, $pp(S_{rem},u)=0$, for all nodes $u \in {\text{\sf TIR}}$.
\end{lma}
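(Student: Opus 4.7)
Proof proposal. The plan is a proof by contradiction: assume there exist $s \in S_{rem}$ and $u \in$ {\sf TIR} with $pp(\{s\}, u) > 0$. Under the MIA model, $u$ can only be activated by $s$ through a maximum influence path of probability at least $\theta$, so $s \in MIIA(u, \theta) \subseteq F_1(u) \subseteq F_2(u)$. In particular $s \in F_2(u) \cap S$, and it suffices to show that this forces $s \in$ {\sf TIR}, contradicting $s \in S_{rem} = S \setminus$ {\sf TIR}.

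To derive the contradiction I exploit the iterative layer structure of {\sf TIR}. Let $N$ be the smallest integer such that $u$ lies in the $N$-th infected region. If $N \ge 2$, the recurrence in Equation~\eqref{eq:nIR} (applied with $v = u$ and $s' = s$) yields $F_2(s) \subseteq$ {\sf ($N{+}1$)-IR} $\subseteq$ {\sf TIR}, and since $s \in F_2(s)$ this gives $s \in$ {\sf TIR}. If instead the iteration has already terminated at level $N$, the termination criterion states that no seed lies in $F_2(v) \cap S$ for any $v$ in the $N$-th infected region, directly contradicting $s \in F_2(u) \cap S$. For the base case $N = 1$ under an additive update, Equation~\eqref{eq:2IR_a} yields $F_2(u) \subseteq$ {\sf 2-IR}, placing $s$ into {\sf TIR} at once; for a reductive update with $u \in S$, Equation~\eqref{eq:2IR_r} gives the same inclusion.

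The remaining sub-case --- a reductive update with $u \in$ {\sf 1-IR} but $u \notin S$ --- is the main obstacle, because Equation~\eqref{eq:2IR_r} only expands the 2-family of seed nodes in {\sf 1-IR} and hence does not directly capture $F_2(u)$. I plan to handle it by combining two ingredients: (i) Lemmas~\ref{lma:rem_mg} and~\ref{lma:tir_out}, which already constrain how any seed $s \notin$ {\sf TIR} can interact with marginal gains inside the affected region; and (ii) a structural argument on the concatenation of the maximum influence paths $MIP(s, u)$ and $MIP(u, u_o)$, where $u_o$ is the update node --- inspecting where this composite path first drops below the threshold $\theta$ should expose an intermediate node whose 2-family already contains $s$ and which is itself swept into some earlier level of the iteration, so that $s$ is forced into a later infected region via the generic rule of Equation~\eqref{eq:nIR}. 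I expect this MIP-concatenation bookkeeping in the reductive non-seed case to be the most delicate part of the proof.
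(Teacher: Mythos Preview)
Your case analysis for $N\ge 2$, for $N=1$ under an additive update, and for $N=1$ reductive with $u\in S$ is exactly what the paper's one-sentence proof is asserting (``any uninfected seed node is more than 2-Family away from any node present in {\sf TIR}; this is how we terminate infection propagation''). On those branches you are simply making the paper's claim explicit, not taking a different route.

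The branch you flag as the obstacle --- reductive update, $u\in\text{\sf 1-IR}$, $u\notin S$ --- is a genuine gap, and your MIP-concatenation patch does not close it. Equation~\eqref{eq:2IR_r} expands only the 2-families of \emph{seeds} lying in {\sf 1-IR}; non-seed members of {\sf 1-IR} are never expanded at the $1\to 2$ step. So take a reductive update at $u_0$ with $\text{\sf 1-IR}=MIIA(u_0,\theta)$ containing no seed at all: then $\text{\sf 2-IR}(o_r)=\emptyset$, the iteration halts, and $\text{\sf TIR}=\text{\sf 1-IR}$. If some non-seed $u\in\text{\sf 1-IR}$ has a seed $s\in MIIA(u,\theta)\setminus MIIA(u_0,\theta)$ (for instance edges $s\to u$ and $u\to u_0$ each of probability $0.5$ with $\theta=0.3$), then $s\in S_{rem}$ yet $pp(\{s\},u)>0$, contradicting the statement outright. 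No ``intermediate node swept into an earlier level'' can save this, because for a reductive update the only way to advance past level~1 is via a seed already inside {\sf 1-IR}, and by construction there is none. The paper's one-line argument overlooks this same case; the lemma as written seems to need the side hypothesis $\text{\sf 1-IR}\cap S\neq\emptyset$ in the reductive setting (harmless for the algorithm, since otherwise $S_{rem}=S$ and no seed is replaced anyway).
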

\vspace{-1mm}
Lemma~\ref{lma:tir_in_out} holds because any uninfected seed node is more than 2-Family away from any node present in {\sf TIR} (This is how we terminate infection propagation). Hence, there is no node present in {\sf TIR} that belongs to the family of any seed node outside {\sf TIR}.
\begin{algorithm}[tb!]
	\caption{\,\,\,{\sf N-Family} seeds updating method on top of {\sf Greedy}}
    \label{alg:prop}
	\begin{algorithmic}[1]
		\Require {Graph $\mathcal{G}(V, E, P)$, total infected region {\sf TIR}, old seed set $S$, $|S| = k$, old priority queue $Q$}
		\Ensure {Compute the new seed set $S_{new}$} of size $k$
		\State $S_{rem} \leftarrow S \setminus {\sf TIR}$
		\ForAll {$u \in {\sf TIR}$}
		\State  $Q(u) \leftarrow \sigma(u)$
		\EndFor
		\While {TRUE}
		\State $S_{new} =$ {Greedy}$(\mathcal{G},S_{rem}, k)$ \\
		/\/* Starting w/\ seed set $S_{rem}$, add $k-|S_{rem}|$ seeds by {Greedy} */\
		\State $S_{order} \leftarrow$ Sort nodes in $S_{new}$ in {Greedy} inclusion order
		\State $w \leftarrow Q$[top]
		\If {$(S^{k-1}_{order},s^k_o) < MG(S_{order},w)$}
		\ForAll {$u \in F_2(s^k_o)\setminus S^{k-1}_{order}$}
		\State $Q(u) \leftarrow MG(S^{k-1}_{order},u)$
		\State $S_{rem} \leftarrow S^{k-1}_{order}$
		\EndFor
		\Else
		\State Output {$S_{order}$}
		\EndIf
		\EndWhile
	\end{algorithmic}
\end{algorithm}

The old seed nodes inside {\sf TIR}
may no longer continue as seeds, therefore we need to discard them from the seed set, and the same number of
new seed nodes have to be identified.
We discuss the updating procedure of seed nodes in the following section.
\vspace{-3mm}
\subsection{Updating the Seed Nodes}
\label{sec:seed_update}
\vspace{-1mm}
We now describe our seed updating method over the Greedy IM algorithm,
under the MIA model of influence cascade. Later we prove that the new seed nodes reported by our technique (Algorithm~\ref{alg:prop}) will be the same as the top-$k$ seed nodes found by Greedy
on the updated graph and with the MIA model, thereby maintaining $(1-\frac{1}{e})$ approximation guarantee to the optimal solution \cite{CWW10}.
\vspace{-4mm}
\subsubsection{Approximation Algorithm}
\label{sec:update_algo}
\vspace{-1mm}
We present our proposed algorithm for updating the seed set in Algorithm~\ref{alg:prop}. Consider Greedy (Algorithm~\ref{alg:greedy}) over the MIA model on the initial graph, and assume that we obtained the seed set $S$, having cardinality $k$.
Since Greedy works in an iterative manner, let us denote by $S^{i-1}$ the seed set formed at the end of the $(i-1)$-th iteration, whereas $s_i \in S$
is the seed node added at the $i$-th iteration. Clearly, $1\le i \le k$, $|S^{i-1}|=i-1$, and $S=S^k=\cup_{i=1}^k s_i$.
Additionally, we use a priority queue $Q$, where its top node $w$ has the maximum marginal gain $MG(S,w)$ among all the non-seed nodes.

After the update $o$, we first compute the total infected region, {\sf TIR} using Equation~\ref{eq:TIR}.
Consider $S_{rem}$, of size $|S_{rem}|=k'$, as the set of old seed nodes outside {\sf TIR}, i.e.,
$S_{rem}=S\setminus {\text{\sf TIR}}$. Then, we remove $(k-k')$ old seed nodes inside {\sf TIR},
and our next objective is to identify $(k - k')$ new seed nodes from the updated graph.

Note that inside $S_{rem}$, the seed nodes are still sorted in descending order of their marginal gains,
computed at the time of insertion in the old seed set $S$ following the Greedy algorithm.
In particular, we denote by $s_{r}^j$ the $j$-th seed node in descending order inside $S_{rem}$, where $1\le j \le k'$.
Due to Lemma~\ref{lma:tir_out}, $MG(S,v)=MG(S_{rem},v)$, for all nodes $v \in V\setminus {\text{\sf TIR}}$.
Thus, for all $j$,  $1\le j <k'$ the following inequalities hold.
\begin{align}
	MG(S_{rem}^{j-1},s_r^j)\ge MG(S_{rem}^j, s_r^{j+1})\label{eq:order_rem} \\
	MG(S_{rem}^{k'-1},s_r^{k'}) \ge MG(S_{rem}^{k'-1},v) \label{eq:order_rem_2}
\end{align}

Now, after removing the old seed nodes present in {\sf TIR} from the seed set, we compute the influence spread $\sigma(u)$
of every node $u \in {\text{\sf TIR}}$ and, we update these nodes $u$ in the priority queue $Q$, based on their new marginal
gains $\sigma(u)$ (lines 1-4).
It can be verified that $MG(S_{rem},u)=\sigma(u)$, for all $u \in {\text{\sf TIR}}$, due to Lemma~\ref{lma:tir_in_out}.

Now, we proceed with greedy algorithm and find the new $(k-k')$ seed nodes. Let us denote by $S_{new}$ the new seed set (of size $k$) found in this manner (line 6). Next, we sort the seed nodes in $S_{new}$ in their {\em appropriate} inclusion order according to the Greedy algorithm over the updated graph
(line 7).
This can be efficiently achieved by running Greedy {\em only} over the seed nodes in $S_{new}$, while computing their influence spreads
and marginal gains in the updated graph. The sorted seed set is denoted by $S_{order}$.
Let us denote by $s_o^k$ the last (i.e., $k$-th) seed node in $S_{order}$, whereas $S_{order}^{k-1}$ represents
the set of top-$(k-1)$ seed nodes in $S_{order}$. We denote by $w$ the top-most seed node in the priority queue
$Q$. If $MG(S_{order}^{k-1},s_o^k) \ge MG(S_{order},w)$, we terminate our updating algorithm (line 15).

\vspace{-0.7mm}
\spara{Iterative seed replacement.}
On the other hand, if $MG(S_{order}^{k-1},s_o^k) \hspace{-1mm} < \hspace{-1mm} MG(S_{order},w)$, we remove the last seed node $s_o^k$
from $S_{order}$. For every node $u$ in the
$F_2(s_o^k)\setminus S_{order}^{k-1}$, we compute marginal gain
$MG(S_{order}^{k-1},u)$ and update the priority queue $Q$ (lines 10-11). Next, we compute a new seed node
using Greedy and add it to $S_{order}^{k-1}$, thereby updating the seed set $S_{order}$.
We also keep the nodes in $S_{order}$ sorted after every update in it. Now, we again verify the condition:
if $MG(S_{order}^{k-1},s_o^k) \hspace{-1mm} < \hspace{-1mm} MG(S_{order},w)$, where $w$ being the new top-most node in the priority queue $Q$, then we repeat the above steps,
each time replacing the last seed node $s_o^k$ from $S_{order}$, with the top-most node from the updated priority queue $Q$.
This iterative seed replacement phase terminates when $MG(S_{order}^{k-1},s_o^k) \hspace{-1mm} \ge \hspace{-1mm} MG(S_{order},w)$. Clearly, this
seed replacement can run for at most $|S_{rem}|=k'$ rounds; because in the worst
scenario, all old seed nodes in $S_{rem}$
could get replaced by new seed nodes from {\sf TIR}.
Finally, we report $S_{order}$ as the new seed set.
\vspace{-2mm}
\subsubsection{Theoretical Performance Guarantee}
\label{sec:perf:guar}
We show in the Appendix that the top-$k$ seed nodes reported by our {\sf N-Family} method are the same as the top-$k$ seed nodes obtained by running the Greedy
on the updated graph under the MIA model. Since, the Greedy algorithm provides the approximation guarantee of $1-\frac{1}{e}$
under the MIA model \cite{CWW10}, our {\sf N-Family} also provides the same approximation guarantee.
\vspace{-3mm}
\subsection{Extending to Batch Updates}
\label{sec:batch_update}
\vspace{-1mm}
We consider the difference of nodes and edges present in two snapshots at different time intervals of the
evolving network as a set of batch updates. Clearly, we consider only the
final updates present in the second snapshot, avoiding the intermediate ones.
For example, in between two snapshot graphs, if an edge $uv$ is added and then gets deleted, we will not consider it as an update because there is no change in the graph with respect to $uv$ after the final update.

One straightforward approach would be to apply our algorithm for every update
sequentially. However, we develop a more efficient technique as follows.
For a batch update consisting of $m$ individual updates,
every update $o_i$ has its own {\sf TIR}$(o_i), i = 1, 2, 3, \ldots, m$.
The {\sf TIR} of the batch update is the union of {\sf TIR}$(o_i)$, for all $i\in(1,m)$.
\begin{align}
	\displaystyle \text{\sf TIR} = \cup_{i=1}^m \text{\sf TIR}(o_i)
\end{align}

Once the {\sf TIR} is computed corresponding to a batch update, we update the seed set using Algorithm~\ref{alg:prop}.
Processing all the updates in one batch is more efficient than the sequential updates. For example, if a seed node is
affected multiple times during sequential updates, we have to check if it remains the seed node every time.
Whereas in batch update, we need to verify it only once.
\vspace{-2mm}
\section{Heuristic Solution: IC and LT models}
\label{sec:IC_model}
Here, we will show how one can develop efficient heuristics by extending the proposed {\sf N-Family} approach to IC and LT
models \cite{KKT03}. We start with IC model.

\spara{Computing TIR.}
For IC model, one generally does not use any probability threshold $\theta$ to discard
smaller influences; and perhaps more importantly, finding the nodes whose influence spread changes
by at least $\theta$ (due to an update operation) is a \sharpP-hard problem. Hence, computing
{\sf TIR} under IC model is hard as well, and one can {\em no longer ensure a theoretical performance
	guarantee} of $(1-\frac{1}{e})$ as earlier. Instead, we {\em estimate} {\sf TIR} analogous to MIA
model (discussed in Section~\ref{sec:inf_reg}), which generates high-quality results
as verified in our detailed empirical evaluation. This is because the maximum influence paths considered by MIA model
play a crucial role in influence cascade over real-world networks \cite{CWW10}.

\spara{Updating Seed set.}
Our method for updating the seed set in IC model follows the same outline as given in Algorithm~\ref{alg:prop} with two major differences. In lines 3 and 11 of Algorithm~\ref{alg:prop}, we compute the marginal gains and update the priority queue, but now we employ more efficient techniques based on the IM algorithm used for the purpose. In particular, as discussed next, we derive two efficient heuristics, namely, Family-CELF (or, F-CELF) and Family-RRS (or F-RRS) by employing our {\sf N-Family} approach on top of two efficient IM algorithms CELF \cite{LKGFVG07} and RR sketch \cite{BBCL14}, respectively.

\vspace{-2mm}
\subsection{N-Family for IM Algorithms in IC model}
First, we explain static IM algorithms briefly, and then we introduce the methods to adapt them to a dynamic setting.
\vspace{-2mm}
\subsubsection{CELF}
In the Greedy algorithm discussed in Section~\ref{sec:preliminaries},
marginal influence gains of all remaining nodes need to be repeatedly calculated at every round, which makes it inefficient
(see Line 3, Algorithm~\ref{alg:greedy}). However, due to the sub-modularity property of the influence function, the marginal gain of a node in the present iteration
cannot be more than that of the previous iteration. Therefore, the CELF algorithm \cite{LKGFVG07} maintains a priority queue containing the nodes and their
marginal gains in descending order. It associates a {\em flag} variable with every node, which stores the iteration number in which the
marginal gain for that node was last computed. In the beginning, (individual) influence spreads of all nodes are calculated and added to the priority queue,
and flag values of all nodes are initiated to zero. In the first iteration, the top node in the priority queue is removed, since it has the maximum influence spread,
and is added to the seed set. In each subsequent iteration, the algorithm takes the first element from the priority queue, and verifies the status of its flag.
If the marginal gain of the node was calculated in the current iteration, then it is considered as the next seed node;
else, it computes the marginal gain of the node, updates its flag, and re-inserts the node in the priority queue. This process repeats until $k$ seed nodes are identified.

\vspace{-1.2mm}
\spara{FAMILY-CELF}
We refer to the {\sf N-Family} algorithm over CELF as FAMILY-CELF (or, {\sf F-CELF}).
In particular, we employ MC-sampling to compute marginal gains in lines 3 and
11 of Algorithm~\ref{alg:prop}, and then update the priority queue.
Given a node $u$ and the current seed set $S$, the corresponding marginal gain can be derived
with two influence spread computations, i.e., $\sigma(S\cup\{u\})-\sigma(S)$. However,
thanks to the {\em lazy forward} optimization technique in CELF, one may insert
any upper bound of the marginal gain in the priority queue. The actual marginal
gain needs to be computed only when that node is in the top of the priority queue
at a later time. Therefore, we only compute the influence spread of $u$, i.e., $\sigma(\{u\})$,
which is an upper bound to its marginal gain, and insert this upper bound in the priority queue.
\vspace{-3mm}
\subsubsection{Reverse Reachable (RR) Sketch}
In this method, first proposed by Borgs et al. \cite{BBCL14} and later improved by Tang et al. \cite{TSX15,TXS14},
subgraphs are repeatedly constructed and stored as {\em sketches} in index $\mathbb{I}$. For each subgraph $H_i$, an arbitrary node $z_i \in V$, selected uniformly at random,
is considered as the target node. Using a reverse Breadth First Search (BFS) traversal, it finds all nodes that influence $z_i$ through {\em active edges}.
An activation function $x_i:E \rightarrow (0,1)$ is selected uniformly at random, and for each edge $uv \in E$, if $x_i(uv) \le P_{uv}$,
then it is considered active. The subgraph $H_i$ consists of all nodes that can influence $z_i$ via these active edges.
Each sketch is a tuple containing $(z_i, x_i, H_i)$. This process halts when the total number of edges examined exceeds a pre-defined
threshold $\tau = \Theta( \frac{1}{\epsilon^2}k(|V| + |E|)\log{|V|})$, where $\epsilon$ is an error function associated with
the desired quality guarantee $(1-\frac{1}{e}-\epsilon)$. The intuition is that if a node appears in a large number of subgraphs,
then it should have a high probability to activate many nodes, and therefore, it would be a good candidate for a seed node.
Once the sufficient number of sketches are created as above, a greedy algorithm repeatedly identifies the node present in the majority of sketches,
adds it to the seed set, and the sketches containing it are removed. This process continues until $k$ seed nodes are found.

\spara{FAMILY-RRS}
We denote the {\sf N-FAMILY} algorithm over RR-Sketch as FAMILY-RRS (or, {\sf F-RRS}).
RRS technique greedily
identifies the node present in the majority of sketches,
adds it to the seed set, and the sketches containing it are deleted.
This process continues until $k$ seed nodes are identified.
In our {\sf F-RRS} algorithm, instead of deleting sketches as above,
we remove them from $\mathbb{I}$, and store them in another index $\mathbb{R}$,
since these removed sketches could be used later in our seeds updating procedure.

Let $\mathbb{I}_v \subseteq \mathbb{I}$ be the set of sketches $(z,x,H) \in \mathbb{I}$
with $v \in H$. Similarly, $\mathbb{R}_u \subseteq \mathbb{R}$ represents the set of
all sketches $(z, x, H) \in \mathbb{R}$ with $u \in H$. Furthermore, $\mathbb{I}^S$ (similarly $\mathbb{R}^S$)
denotes $\mathbb{I}$ (similarly $\mathbb{R}$) after the seed set $S$ is identified.
Clearly, the sketches in $\mathbb{I}^S$ will not have any seed node in their subgraphs.
Also note that $MG(S,v)$ is proportional to $|\mathbb{I}^S_v|$, by following the RRS technique.

After an update operation, we need to modify the sketches (both in $\mathbb{I}$ and $\mathbb{R}$),
and also to possibly swap some sketches between these two indexes, as discussed next.

\spara{Modifying sketches after dynamic updates.}
In the following, we only discuss sketch updating techniques corresponding to an edge addition.
Sketch updating methods due other updates (e.g., node addition, edge deletion, etc.) are similar \cite{OAYK16},
and we omit them due to brevity. To this end, we present three operations:

\underline{Expanding sketches:}
Assume that we added a new edge $uv$. We examine every sketch $(z, x, H)$ both in $\mathbb{I}^S_v$ and $\mathbb{R}^S_v$,
and add every new node $w$ that can reach $v$ through active edges in $H$.
We compute these new nodes using a reverse breadth first search from $v$.
In this process, the initial subgraph $H$ is extended to $H^e$.

Next, we need to update $\mathbb{I}^S$ and $\mathbb{R}^S$ in such a way that
sketches in $\mathbb{I}^S$ do not have a seed node in their (extended) subgraphs.
For every sketch $(z, x, H^e) \in \mathbb{I}^S$, if $H^e \cap S \ne \phi$,
we then remove $(z, x, H^e)$ from $\mathbb{I}^S$, and add it to $\mathbb{R}^S$.

\underline{Deleting sketches:}
If the combined weight of indexes except the last sketch exceeds the
threshold ($\tau = \Theta( \frac{1}{\epsilon^2}k(|V| + |E|)\log{|V|})$), we delete the last sketch $(z, x, H)$ from the index
where it belongs to (i.e., either from $\mathbb{I}^S$ or $\mathbb{R}^S$).

\underline{Adding sketches:}
If the combined weight of indexes is less than the threshold $\tau$, we select a target node $z \in V$ uniformly at random,
and construct a new sketch ${(z, x, H)}$. If $H \cap S = \phi$, we add the new sketch to $\mathbb{I}^S$, otherwise to $\mathbb{R}^S$.

\spara{Sketch swapping for computing marginal gains.}
Assume that we computed {\sf TIR},  $S_{inf} = S\cap \text{\sf TIR}$, and $S_{rem}=S\setminus\text{\sf TIR}$.
For every infected old seed node $s\in S_{inf}$, we identify all sketches $(z, x, H)$ with $s \in H$, that are
present in $\mathbb{R}^S$. Then, we perform the following sketch swapping to ensure that all infected seed nodes are removed from the old seed set.
{\bf (1)} If there is no uninfected seed node in $H$ (i.e, $H \cap S_{rem} = \phi$), where $(z, x, H) \in \mathbb{R}^S$,
we move $(z, x, H)$ from $\mathbb{R}^{S}$ to $\mathbb{I}^{S}$. {\bf (2)} If there is an uninfected seed node in $H$ $(i.e., H \cap S_{rem} \ne \phi)$,
where $(z, x, H) \in \mathbb{R}^S$, we keep $(z, x, H)$ in $\mathbb{R}^{S}$.

Finally, we identify $(k-k')$ new seed nodes using updated $\mathbb{I}^{S}$.
Marginal gain computation at line 11 (Algorithm~\ref{alg:prop}) follows a similar
sketch replacement method, and we omit the details for brevity.
\begin{table} [tb!]
	\vspace{-3mm}
	\caption{\small Properties of datasets}
	\scriptsize
	\vspace{-4mm}
	\centering
	\begin{tabular} { |l|c|c|c|c| }
		{\textsf{Dataset}} & {\textsf{\#Nodes}}  & {\textsf{\#Edges}}  &  \multicolumn{2}{c|}{\textsf{ Timestamps}} \\
		\cline{4-5}
		&&& \textsf{From} &  \textsf{To}\\ \hline
		{\sf Digg}      & {30\,398}   & {85\,247}   & 10-05-2002         & 11-23-2015          \\
		{\sf Slashdot}	& {51,083}    & {130\,370}  & 11-30-2005         & 08-15-2006           \\
		{\sf Epinions}	& {131\,828}  & {840\,799}  & 01-09-2001         & 08-11-2003           \\
		{\sf Flickr}	& {2\,302\,925} & {33\,140\,017}& 11-01-2006     &  05-07-2007     \\ 
	\end{tabular}
	\vspace{-3mm}
	\label{tab:data}
	\vspace{-3mm}
\end{table}
\vspace{-6mm}
\subsection{Implementation with the LT Model}
\label{sec:apply_lt}
\vspace{-1mm}
As we discussed earlier, the {\sf N-Family} algorithm can be implemented on top of both Greedy and CELF. However, these IM algorithms
also work with the linear threshold (LT) model. Hence, our algorithm can be used with the LT model. We omit details due to brevity.
\vspace{-6mm}
\subsection{Heuristic TIR Finding to Improve Efficiency}
\label{sec:tech_imp_eff}
\vspace{-1mm}
We propose a more efficient heuristic method, by carefully tuning the parameters
(e.g., by limiting $N=1, 2$ in {\sf TIR} computation) of our {\sf N-Family} algorithm.
Based on our experimental analysis with several evolving networks, we find that the influence spread changes significantly only for those nodes
which are close to the update operation. Another seed node, which is far away from the update operation, even though its influence spread (and its marginal
gain) may change slightly, it almost always remains as a seed node in the updated graph.
Hence, we further improve the efficiency of our {\sf N-Family} algorithm  by limiting $N=1, 2$ in {\sf TIR} computation.
Indeed, the major difference in influence spreads between the new seed set and the old one comes from those seed nodes in the first two infected regions (i.e., {\sf 1-IR}
and {\sf 2-IR}), which can also be verified from our experimental results (Section~\ref{sec:sens_anl}).

\vspace{-2mm}
\section{Experimental Results}
\label{sec:experiments}
\vspace{-1.5mm}
\subsection{Experimental Setup}
\label{exp setup}

\vspace{-1mm}
$\bullet$ {\bf Datasets.}
We download four real-world graphs (Table~\ref{tab:data}) from the Koblenz Network Collection (http://konect. uni-koblenz.de/ networks/). All
these graphs have directed edges, together with time-stamps; and hence, we consider them as evolving networks. If some edge appears for multiple times,
we only consider the first appearance of that edge as its insertion time in the graph. The edge counts in Table~\ref{tab:data} are given considering
distinct edges only.
\begin{figure*}[t!]
	\vspace{-2mm}
	\centering
	\subfigure[{\em Edge add., Digg (\textsf{DWA})}] {
		\includegraphics[scale=0.15, angle=270]{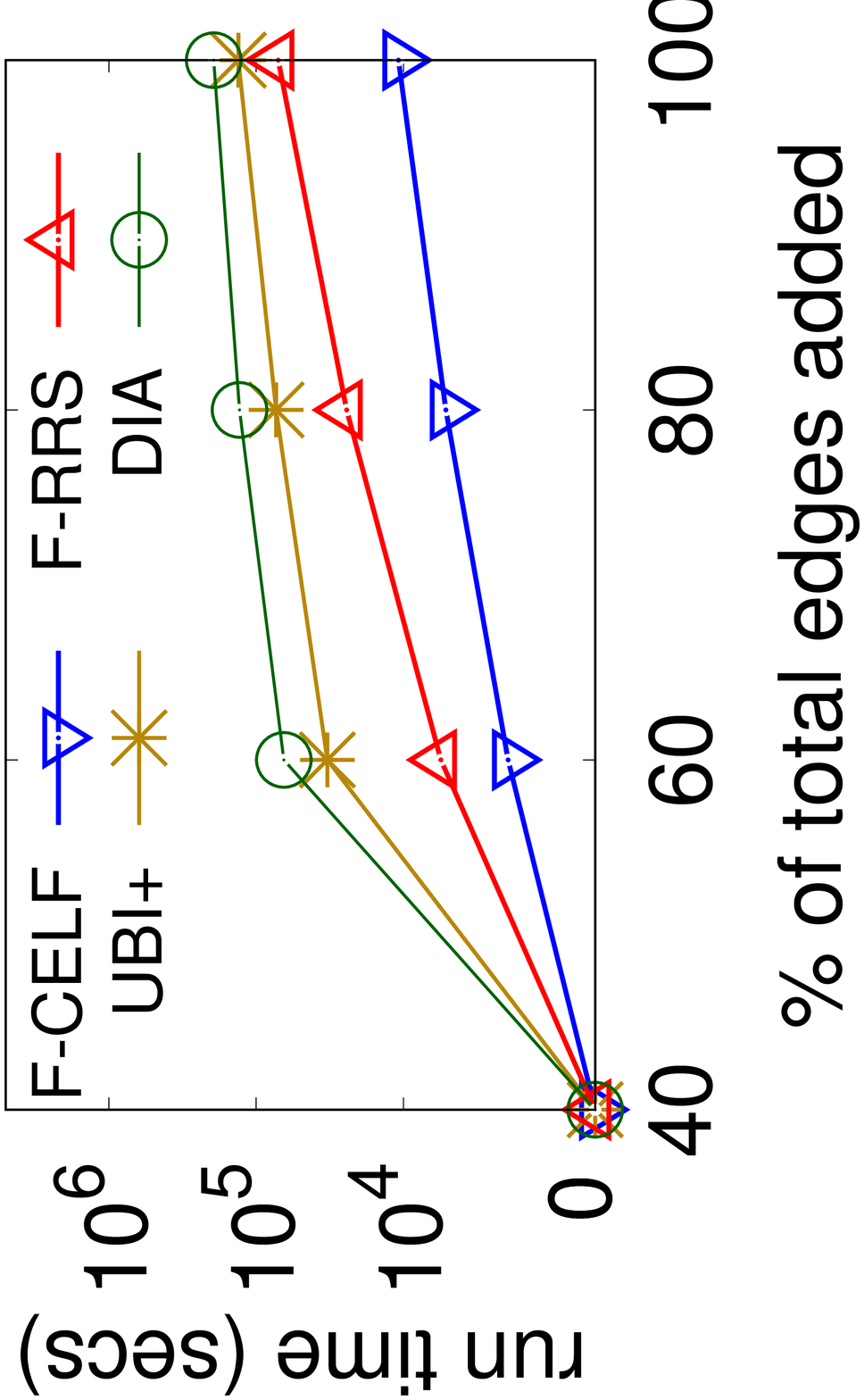}
		\label{fig:edgeadd_digg}
	}
	\subfigure[{\em Edge del., Slashdot (\textsf{TV})}]  {
		\includegraphics[scale=0.15, angle=270]{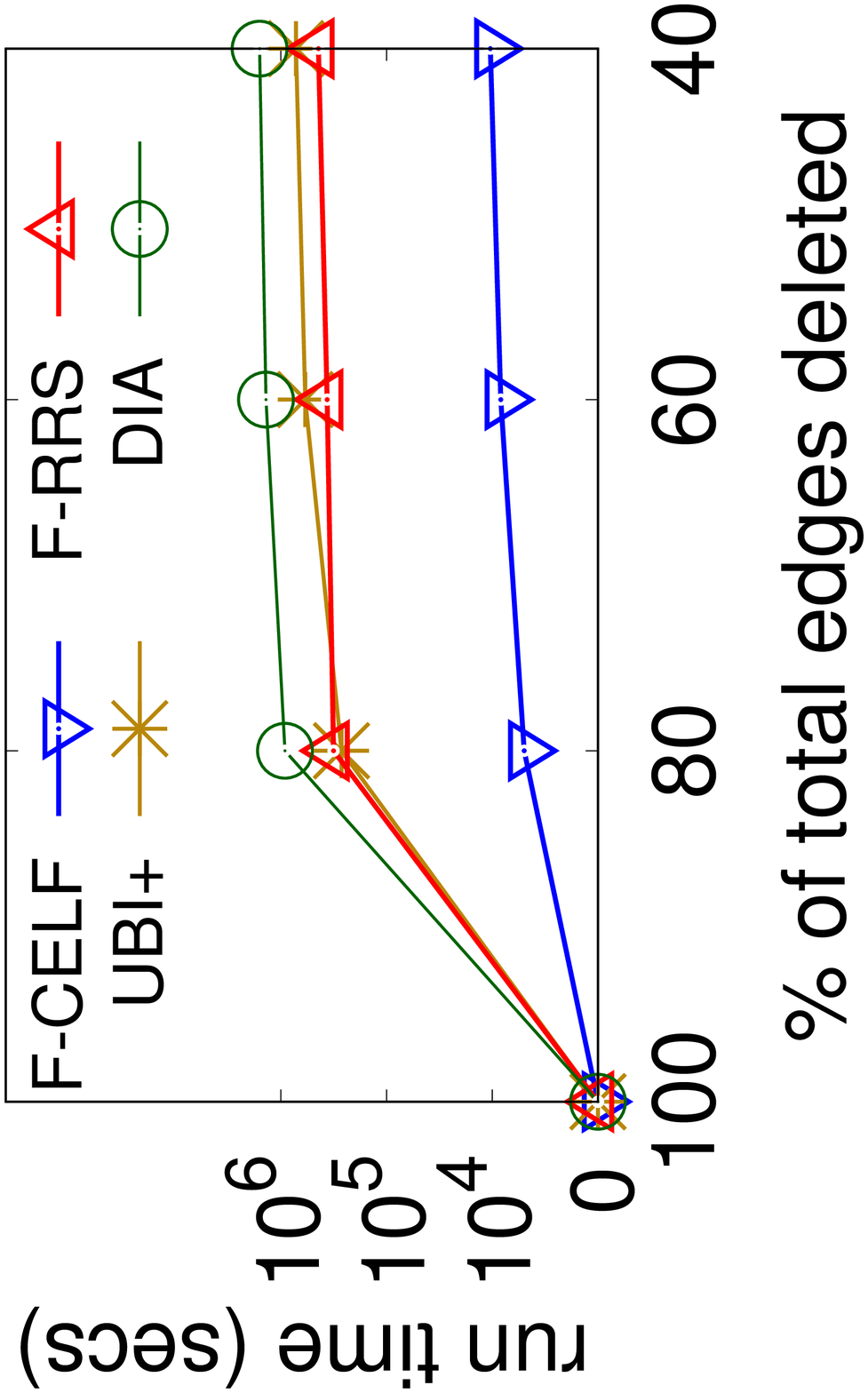}
		\label{fig:edgedel_slash}
	}
	\subfigure[{\em Node add., Epinions (\textsf{TV})}]  {
		\includegraphics[scale=0.15, angle=270]{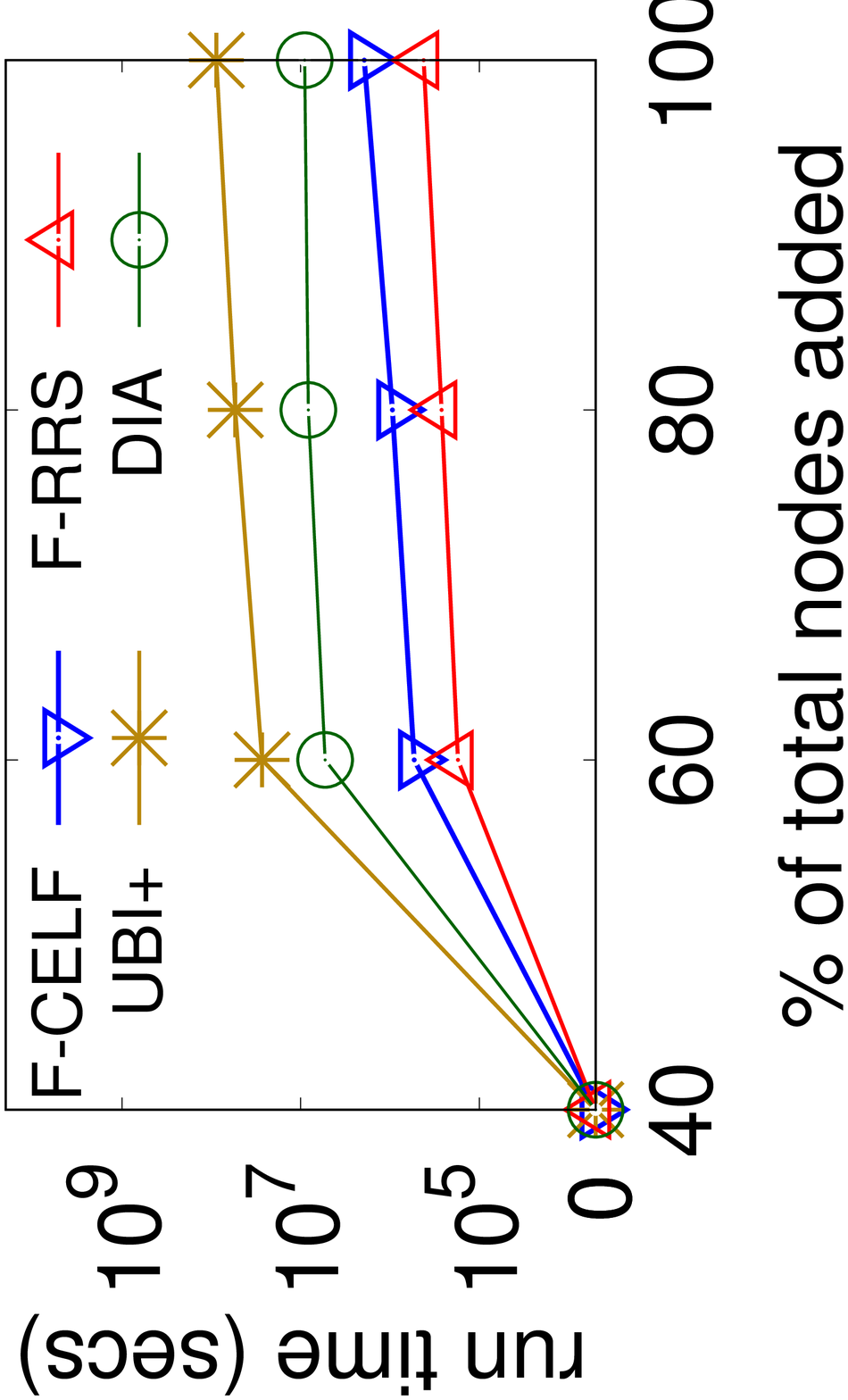}
		\label{fig:nodeadd_epin}
	}	
	\subfigure[{\em Node del., Flickr (\textsf{DWA})}] {
		\includegraphics[scale=0.15, angle=270]{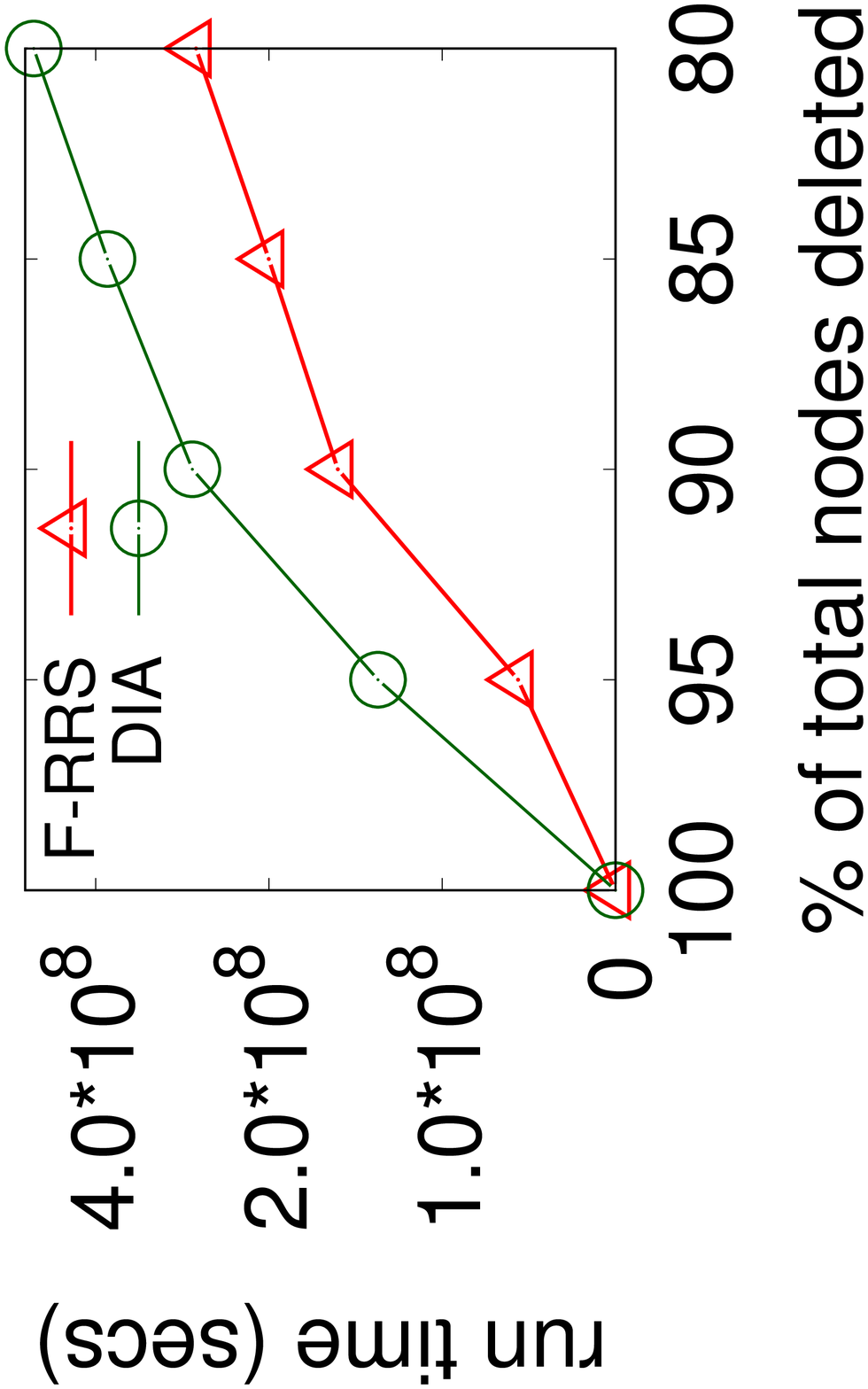}
		\label{fig:nodedel_flickr}
	}
	\vspace{-6mm}
	\caption{\small Run time to adjust seed set, IC model, seed sets are adjusted after every update}
	\label{fig:eff_ic}
	\vspace{-6mm}
\end{figure*}
\begin{figure}[t!]
	\vspace{-2mm}
	\centering
	\subfigure[{\em Node add., Slashdot (\textsf{DWA})}] {
		\includegraphics[scale=0.14, angle=270]{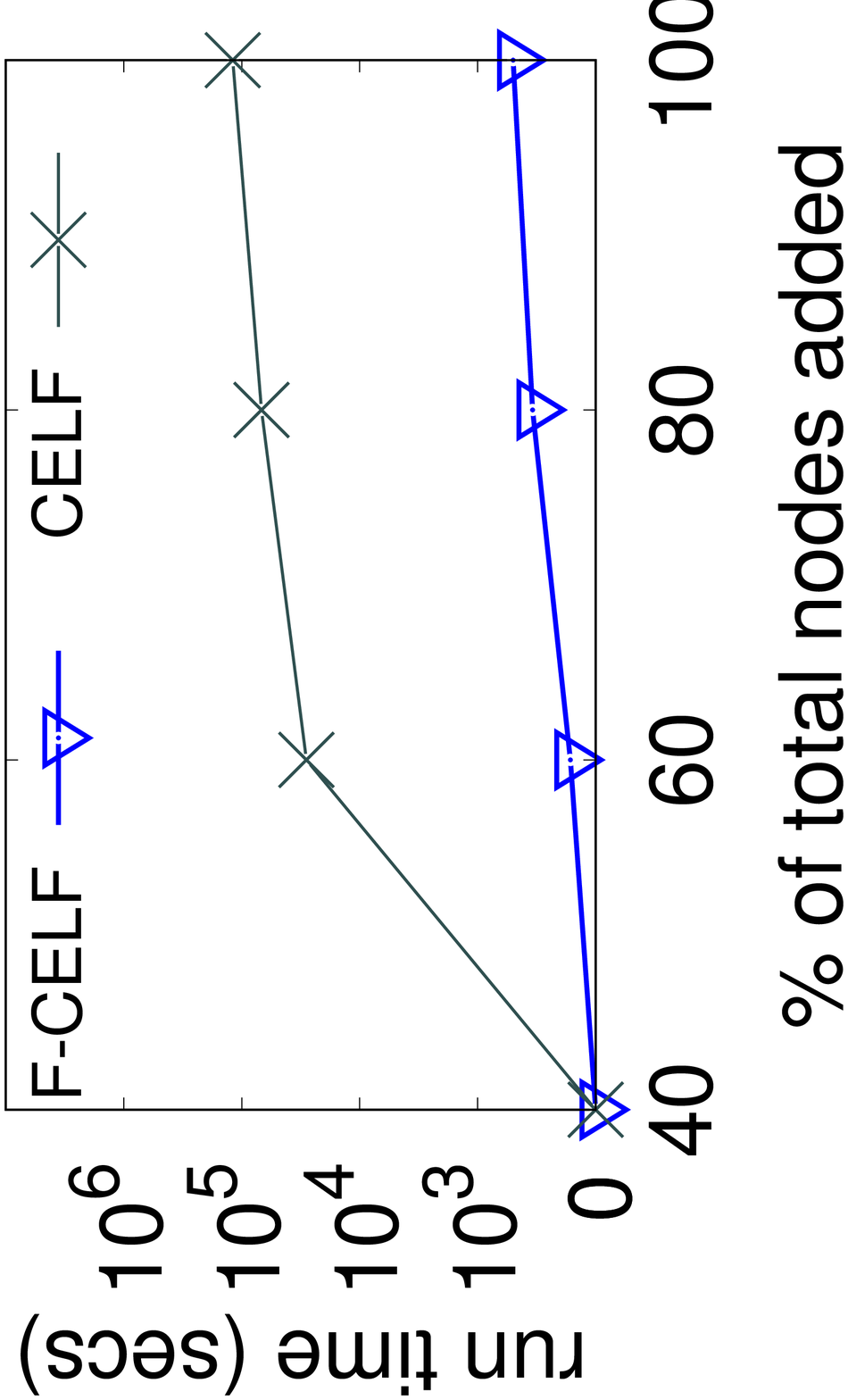}
		\label{fig:mia_nodeadd_slash}
	}
	\hspace{-1mm}
	\subfigure[{\em Node del., Epinions (\textsf{DWA})}]  {
		\includegraphics[scale=0.14, angle=270]{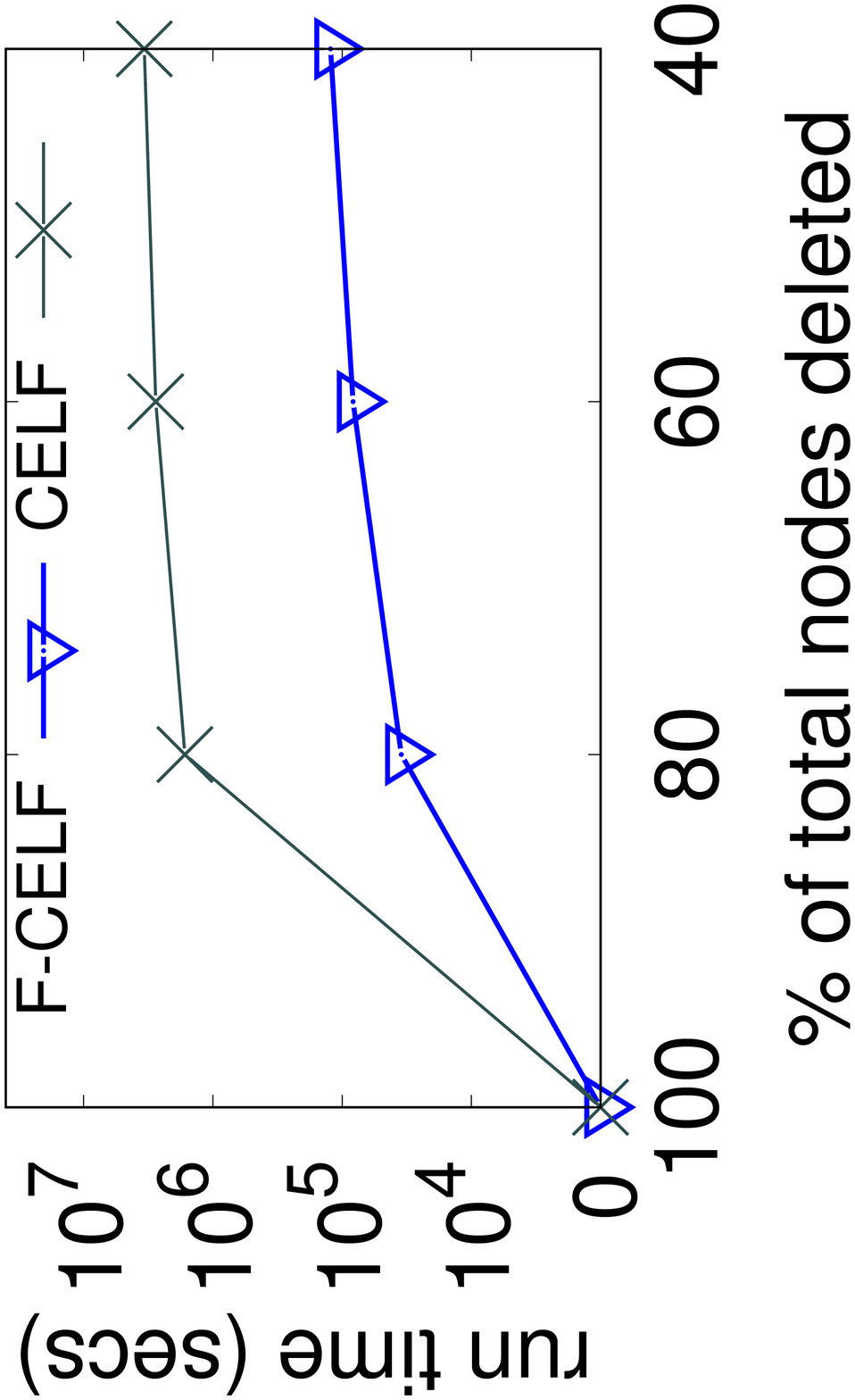}
		\label{fig:mia_nodedel_epin}
	}	
	\vspace{-6mm}
	\caption{\small Run time to adjust seed set, MIA model, seed sets are adjusted after every update}
	\label{fig:eff_MIA}
	\vspace{-6mm}
\end{figure}
\begin{figure}[t!]
	\vspace{-2mm}
	\centering
	\subfigure[{\em Inf. spread, edge add., \newline Digg (\textsf{DWA}) in IC model}]  {
		\includegraphics[scale=0.14, angle=270]{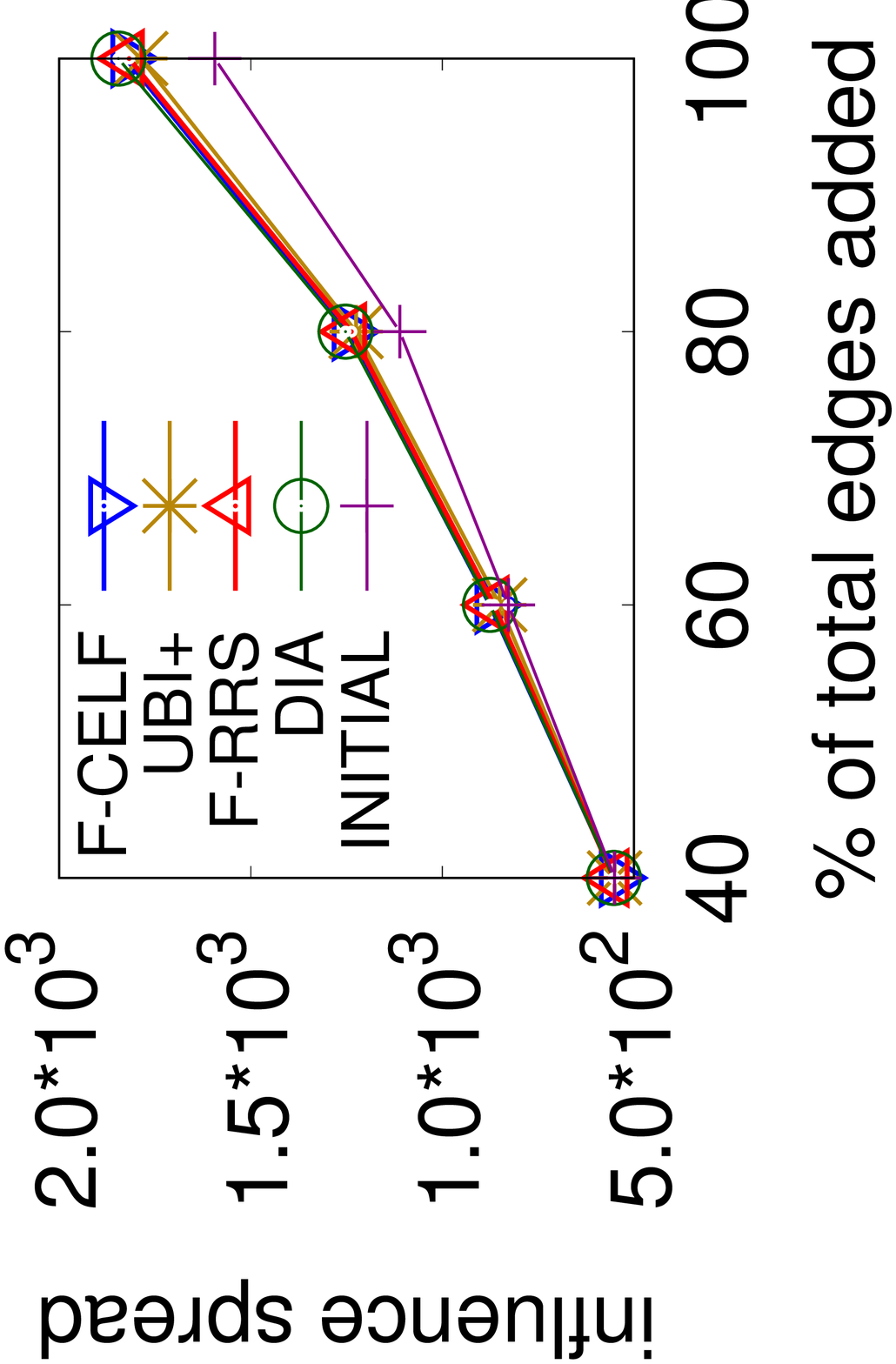}
		\label{fig:inf_sprd_digg}
	}\hspace{0mm}
	\subfigure[{\em Inf. spread, node del., \newline Epinions (\textsf{DWA}) in MIA model}] {
		\includegraphics[scale=0.14, angle=270]{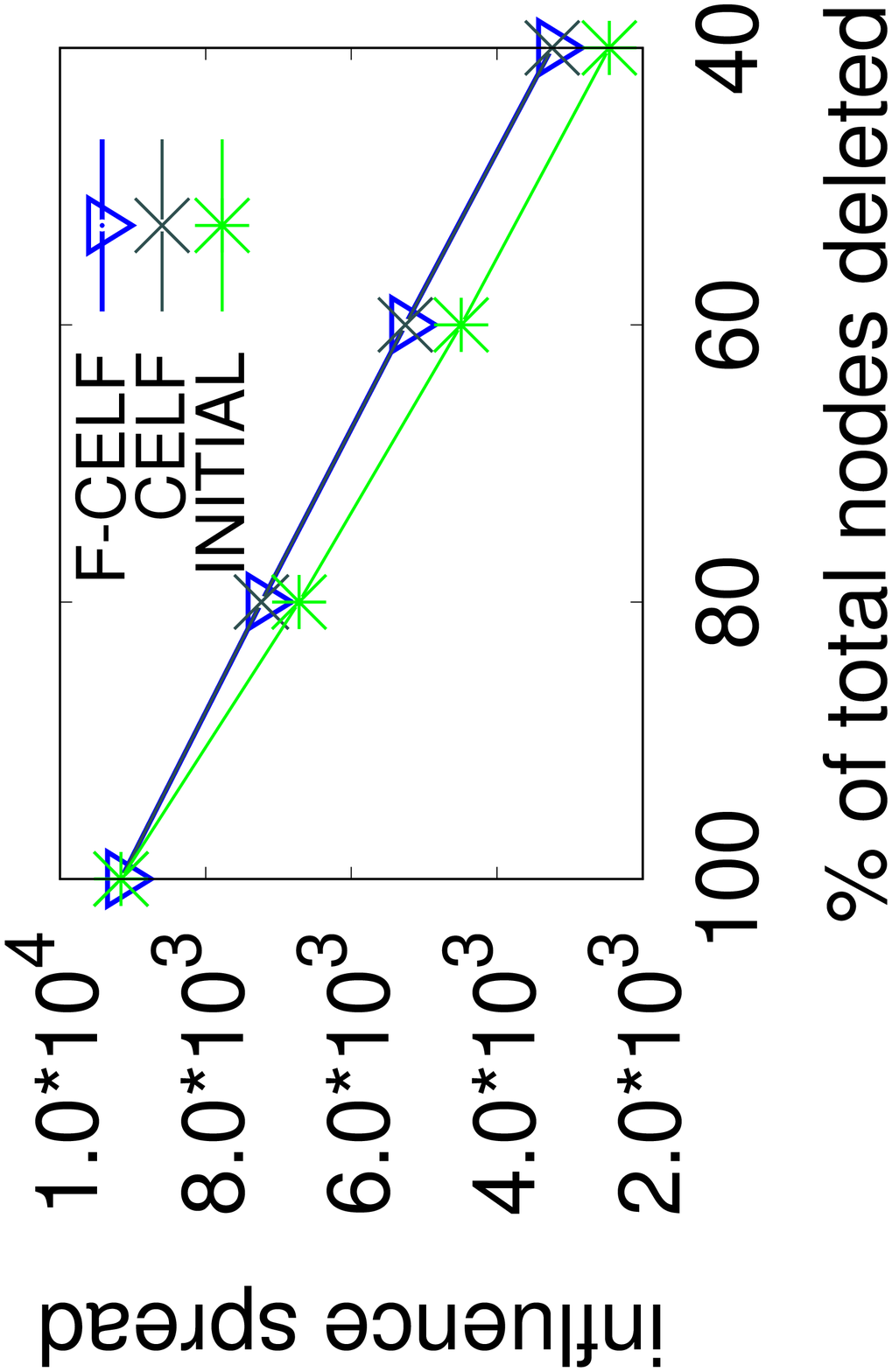}
		\label{fig:inf_sprd_epin}
	}
	\vspace{-6mm}
	\caption{\small Influence spread, seed sets are adjusted after every update}
	\label{fig:inf_sprd}
	\vspace{-5mm}
\end{figure}
\vspace{-0.5mm}
$\bullet$ {\bf Influence strength models.}
We adopt two popular edge probability models for our experiments.
{\em Those are exactly the same models used by our competitors:} {\sf UBI+} \cite{SLCHT17} {\em and} {\sf DIA} \cite{OAYK16}.
\textbf{(1) Degree Weighted Activation (\textsf{DWA}) Model.} In this model \cite{KKT03,OAYK16,SLCHT17} (also known as weighted cascade model),
the influence strength of the edge $(uv)$ is equal to $1/d_{in}(v)$, where $d_{in}(v)$ is the in-degree of
the target node $v$.
\textbf{(2) Trivalency (TV) Model.} In this model \cite{KKT03,OAYK16}, each edge is assigned with a probability, chosen uniformly at random, from $(0.1, 0.01, 0.001)$.

\vspace{-0.5mm}
$\bullet$ {\bf Competing Algorithms.}
\textbf{(1) FAMILY-CELF ({\sf F-CELF}).}
This is an implementation of our proposed {\sf N-FAMILY} framework, on top of the CELF influence maximization algorithm.
\textbf{(2) FAMILY-RR-Sketch ({\sf F-RRS}).}
This is an implementation of our proposed {\sf N-FAMILY} framework, on top of the RR-Sketch influence maximization algorithm.
\textbf{(3) DIA.}
The {\sf DIA} algorithm was proposed in \cite{OAYK16}, on top of the RR-Sketch. The method generates
all RR-sketches only once; and after every update, quickly modifies those existing sketches.
After that, {\em all seed nodes are identified from ground} using the modified sketches.
This is the key difference with our algorithm {\sf F-RRS}, since we generally need to identify
only a limited number of new seed nodes, based on the affected region due to the update.
\textbf{(4) UBI+.}
The {\sf UBI+} algorithm \cite{SLCHT17} performs greedy exchange for multiple times --- every time an old seed node is replaced
with the best possible non-seed node. If one continues such exchanges until there is no improvement, the method will guarantee
0.5-approximation. However, due to efficiency reasons, \cite{SLCHT17} limits the number of exchanges to $k$,
where $k$ is the cardinality of the seed set. An upper bounding method is used to find such best possible non-seed nodes
at every round.

\vspace{-0.5mm}
$\bullet$ {\bf Parameters Setup.}
\textbf{(1) \#Seed nodes.} We varied seed set size from 5$\sim$100 (default 30 seed nodes).
\textbf{(2) \#RR-Sketches.} Our total number of sketches are roughly bounded by $\beta(|V|+|E|)\log |V|$ as given in \cite{OAYK16}, and we varied $\beta$
from 2$\sim$512 (default $\beta=2^5=32$ \cite{OAYK16}).
\textbf{(3) Size of family.} The family size $|F_1(u)|$ of a node $u$ is decided by the parameter $\theta$, and we varied $\theta$ from 1$\sim$0.01
(default $\theta$=0.1).
\textbf{(4) \#IR to compute TIR.} We consider upto {\sf 3-IR} to compute {\sf TIR} (default upto {\sf 2-IR}).
\textbf{(5) Influence diffusion models.} We employ IC \cite{KKT03} and MIA \cite{CWW10} models for influence cascade. Bulk of our empirical
results are provided with the IC model, since this is widely-used in the literature.
\textbf{(6) \#MC samples.} We use MC simulation 10\,000 times to compute the influence spread in IC model \cite{KKT03}.

The code is implemented in Python, and the experiments are performed on a single core of a 256GB, 2.40GHz Xeon server. All results are averaged over 10 runs.
\vspace{-3mm}
\subsection{Single Update Results}
\label{sec:single_update_exp}
\vspace{-1mm}
First, we show results for single update queries related to edge addition, edge deletion, node addition, and node deletion. We note that adding an edge $uv$ can also be considered as an increase in the edge probability from $0$ to $P_e(uv)$. Analogously, deleting an edge can be regarded as a decrease in edge probability. Moreover, for the DWA edge influence model, when an edge is added or deleted, the probabilities of multiple adjacent edges are updated (since, inversely proportional to node degree).
\textbf{(1) Edge addition.} We start with initial 40\% of the edges in the graph data, and then add all the remaining edges as dynamic updates.
We demonstrate our results with the {\sf Digg} dataset and the \textsf{DWA} edge influence model (Figure~\ref{fig:edgeadd_digg}).
\textbf{(2) Edge deletion.} We delete the last 60\% of edges from the graph as update operations. We use the {\sf Slashdot} dataset, with \textsf{TV} model, for showing our results (Figure~\ref{fig:edgedel_slash}).
\textbf{(3) Node addition.}
We start with the first $40$\% of nodes and all their edges in the dataset. We next added the remaining nodes sequentially, along with their associated edges. We present our results over {\sf Epinions}, along with the \textsf{TV} model (Figure~\ref{fig:nodeadd_epin}).
\textbf{(4) Node deletion.}
We delete the last $20$\% of nodes, with all their edges from the graph. We use our largest dataset {\sf Flickr} and the \textsf{DWA} model for demonstration (Figure~\ref{fig:nodedel_flickr}).

For the aforesaid operations, we adjust the seed set after every update, since one does not know apriori when the seed set actually changes, and hence, it can be learnt only after updating them.

\vspace{-0.5mm}
\spara{Efficiency.}
In Figure~\ref{fig:eff_ic}, we present the running time to dynamically adjust the top-$k$ seed nodes, under the IC influence cascade model.
We find that {\sf F-CELF} and {\sf F-RRS} are always faster than {\sf UBI+} and {\sf DIA}, respectively, by 1$\sim$2 orders of magnitude.
As an example, for node addition over {\sf Epinions} in Figure~\ref{fig:nodeadd_epin}, the time taken by {\sf F-CELF}
is only $2\times10^6$ sec for about $80$K node additions (i.e., 24.58 sec/node add). In comparison, {\sf UBI+} takes around
$8\times10^7$ sec (i.e., 1111.21 sec/ node add). Our {\sf F-RRS} algorithm requires about $4\times10^5$ secs (i.e., 5.31 sec/ node add),
and {\sf DIA} takes $10\times10^6$ sec (i.e., 134.68 sec/node add). {\em These results clearly demonstrate the efficiency improvements by our methods}.

We also note that sketch-based methods are relatively slower (i.e., {\sf F-RRS} vs. {\sf F-CELF}, and {\sf DIA} vs. {\sf UBI+})
in smaller graphs (e.g., {\em Digg} and {\em Slashdot}). This is due to the overhead of updating sketches after graph updates.
On the contrary, in our larger datasets, {\sf Epinions} and {\sf Flickr}, the benefit of sketches is more evident as opposed to
MC-simulation based techniques. In fact, both {\sf F-CELF} and {\sf UBI+} are very slow for our largest {\em Flickr} dataset (see Table~\ref{tab:result_summary}); hence, we only show {\sf F-RRS} and {\sf DIA} for {\sf Flickr} in Figure~\ref{fig:nodedel_flickr}.

Additionally, in Figure~\ref{fig:eff_MIA}, we show the efficiency of our method under the MIA model of influence spread.
Since it is non-trivial to adapt {\sf UBI+} and {\sf DIA} for the MIA model, we compare our algorithm {\sf F-CELF}
with {\sf CELF} \cite{LKGFVG07} in these experiments. For demonstration, we consider
{\sf Slashdot} and {\sf Epinions}, together with node addition and deletion, respectively.
It can be observed from Figure~\ref{fig:eff_MIA} that {\sf F-CELF} is about 2 orders of magnitude
faster than {\sf CELF}. {\em These results illustrate the generality and effectiveness
of our approach under difference influence cascading models}.
\begin{table}
	\vspace{1mm}
	\caption{\small Memory consumed by different algorithms}
	\scriptsize
	\vspace{-3mm}
	\centering
	\begin{tabular} { |l||c|c|c|l| }
		\hline
		{\textsf{Algorithms}} & {\textsf{Digg}} & {\textsf{Slashdot}} & {\textsf{Epinions}}  &  {\textsf{Flickr}} \\
		\hline
		{\sf F-CELF}, {\sf UBI+} & 0.22 GB & 0.32 GB & 1.03 GB & 31.55 GB\\
		\hline \hline
		{\sf F-RRS}, {\sf DIA} & 3.83 GB & 5.89 GB & 25.87 GB & 142.89 GB\\
		\hline
	\end{tabular}
	\label{tab:memory}
	\vspace{-5mm}
\end{table}

\vspace{-0.5mm}
\spara{Influence spread.}
We report the influence spread with the updated seed set for both IC (Figure~\ref{fig:inf_sprd_digg}) and MIA models (Figure~\ref{fig:inf_sprd_epin}).
It can be observed that the competing algorithms, i.e., {\sf F-CELF}, {\sf F-RRS}, {\sf UBI+}, and {\sf DIA}  achieve similar influence
spreads with their updated seed nodes. Furthermore, we also show by {\sf INITIAL} the influence spread obtained by the old seed set in the modified graph. We find that {\sf INITIAL} achieves significantly less influence spread, especially with more graph updates.
{\em These results demonstrate the usefulness of dynamic IM techniques in general, and also the effectiveness of our algorithm in terms of influence coverage}.

\vspace{-0.5mm}
\spara{Memory usage.}
We show the memory used by all algorithms in Table~\ref{tab:memory}. We find that MC-sampling based algorithms (i.e., {\sf F-CELF} and {\sf UBI+}) take similar amount of memory, whereas sketch-based techniques (i.e., {\sf F-RRS} and {\sf DIA}) also have comparable memory usage. {\em Our results illustrate that the proposed methods, {\sf F-CELF} and {\sf F-RRS} improve the updating time of the top-$k$ influencers by 1$\sim$2 orders of magnitude, compared to state-of-the-art algorithms, while ensuring similar memory usage and influence spreads.}
\vspace{-3mm}
\subsection{Batch Update Results}
\label{sec:batch_update_exp}
\vspace{-1mm}
We demonstrate batch updates with a sliding window model as used in \cite{SLCHT17}. In this model, initially we consider the edges present in between $0$ to $W$ units of time (length of window) and compute the seed set. Next, we slide the window to $L$ units of time. The edges present in between $L$ and $W+L$ are considered as the updated data, and our goal is to adjust the seed set based on the updated data. We delete the edges from $0$ to $L$ and add the edges from $W$ to $W+L$. We continue sliding the window until we complete the whole data.

We conducted this experiment using the {\em Twitter} dataset downloaded from https://snap.stanford.edu/data/.
The dataset is extracted from the tweets posted between 01-JUL-2012 to 07-JUL-2012, which is during the
announcement of the Higgs-Boson particle. This dataset contains $304\,199$ nodes and $555\,481$ edges.
Probability of an edge $uv$ is given by the formula $1-e^\frac{-f}{k}$, where $f$ is the total number of
edges appeared in the window, and $k$ is the constant. We present our experimental results by
varying $W$ from 30 mins to 6 hrs and $L$ from 1 sec to 2 mins. We set the value of $k$ as $5$.
On an average, $1.8$ updates appear per second. Since the number of edges in a window is small,
we avoid showing results with {\sf F-RRS}. This is because {\sf F-CELF} performs much better
on smaller datasets. From the experimental results in Figure~\ref{fig:batch_ic}, we find that {\em {\sf F-CELF} is faster than both {\sf UBI+} and {\sf DIA} upto three orders of magnitude}.
\begin{figure}[t!]
	\vspace{-1mm}
	\centering
	\subfigure[{\em  Run time to adjust seed set, varying $L$, $W$ = 1 hour}] {
		\includegraphics[scale=0.14, angle=270]{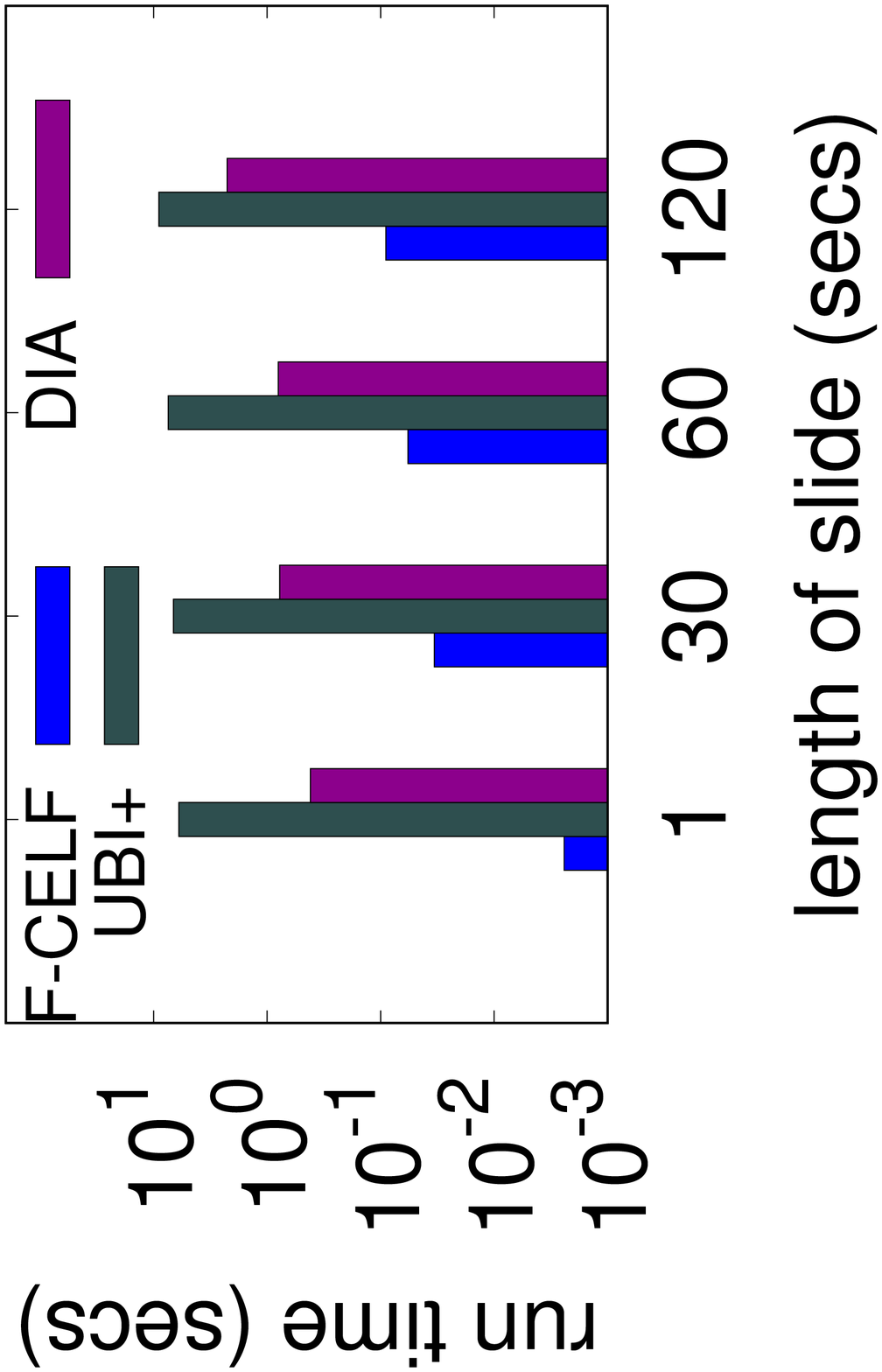}
		\label{fig:var_L}
	}
  	\subfigure[{\em Run time to adjust seed set, varying $W$, $L$ = 60 secs}]  {
		\includegraphics[scale=0.14, angle=270]{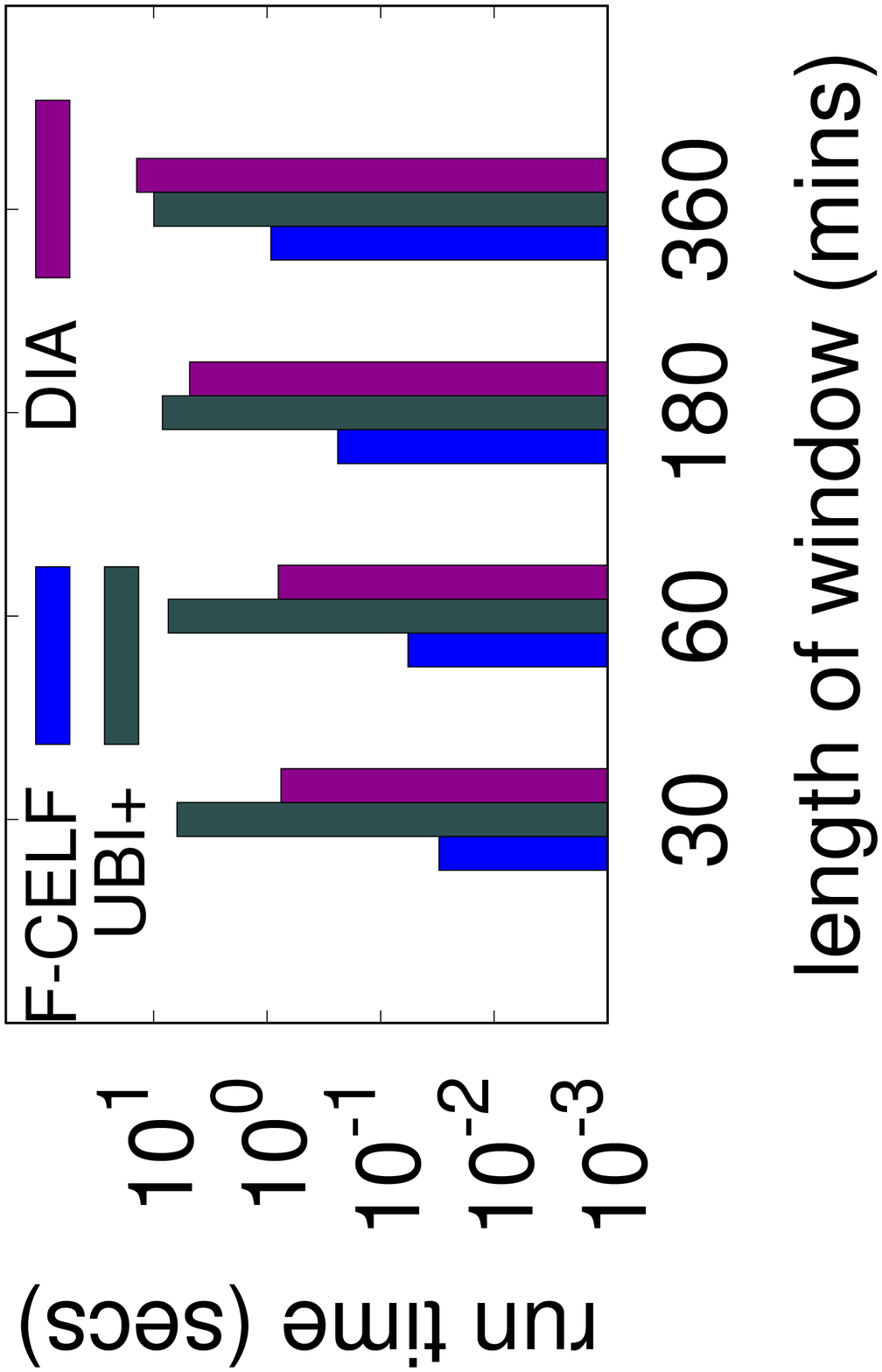}
		\label{fig:var_w}
	}
	\vspace{-6mm}
	\caption{\small Impacts of varying batch sizes, sliding window model, {\em Twitter}, IC influence prop., seed sets are adjusted after every slide}
	\label{fig:batch_ic}
	\vspace{-6mm}
\end{figure}

\vspace{-3mm}
\subsection{Sensitivity Analysis}
\label{sec:sens_anl}
\vspace{-1mm}
In these experiments, we vary the parameters of our algorithms. For demonstration,
we update the last 40$\%$ nodes in a dataset, and report the average time taken to
re-adjust the seed set per update operation, with the {\sf F-RRS} algorithm.

With increase in the size of {\sf TIR}, number of seed nodes that get infected may increase. For a given update, size of {\sf TIR} depends of two factors: $\theta$ (with decrease in $\theta$, size of family increases: Figure~\ref{fig:theta_family_size}) and $\#{\sf IR}$ to compute {\sf TIR}. Hence, we vary $\theta$ (Figure~\ref{fig:theta}) and $\#{\sf IR}$s (Figure~\ref{fig:family}) for node deletion in {\em Epinions}. We find that by selecting $\theta=10^{-1}$, influence spread increases by around $8.4$\% compared to that of $\theta=10^{0}$, and there is no significant increase in influence spread for even smaller $\theta$. Similarly, for increase in $\# {\sf IR}$ almost saturates at $\#${\sf IR}=$2$.
However, the efficiency of the algorithm decreases almost linearly with decrease in $\theta$ (Figure~\ref{fig:eff_theta_epin}) and increase in $\# {\sf IR}$s. Hence, by considering a trade off between efficiency and influence coverage we select $\theta = 10^{-1}$ and $\#${\sf 2-IR} to to compute {\sf TIR}.
\begin{figure}[t!]
    \vspace{-1mm}
	\centering
	\subfigure[{\em Inf. Spread}]  {
		\includegraphics[scale=0.09, angle=270]{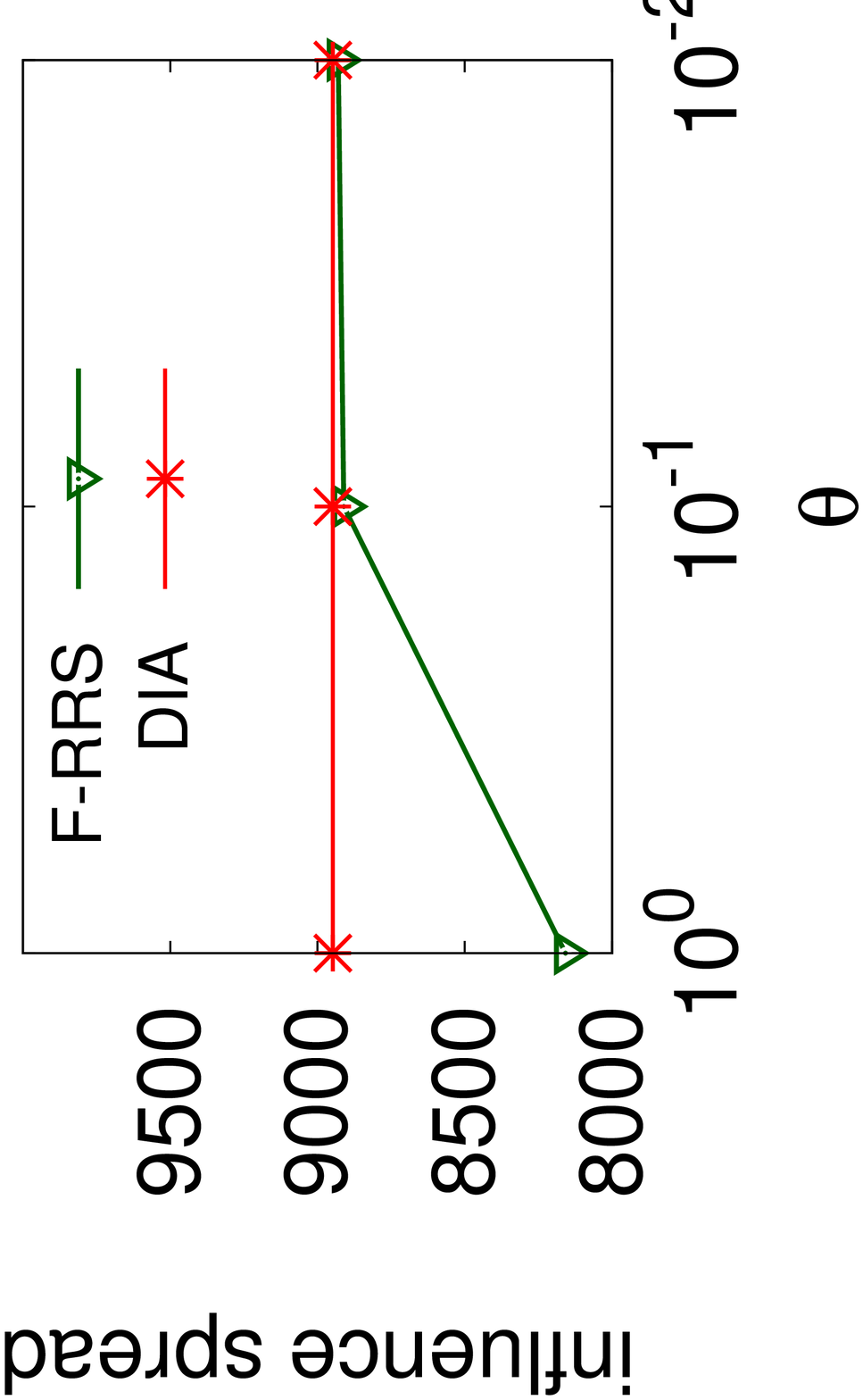}
		\label{fig:acc_theta_epin}
	}
	\subfigure[{\em Run time to adjust seeds}] {
		\includegraphics[scale=0.09, angle=270]{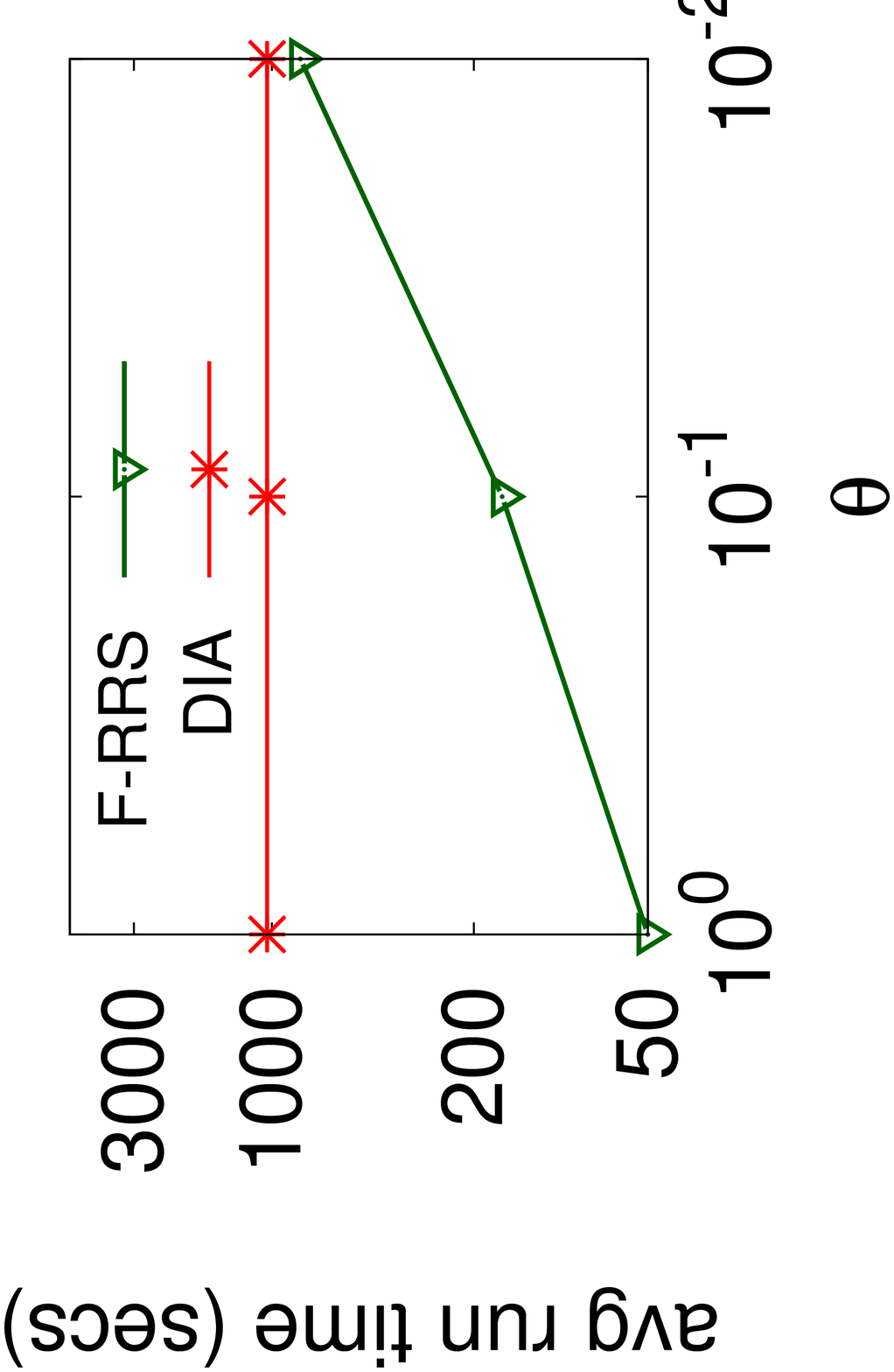}
		\label{fig:eff_theta_epin}
	}
	\subfigure[{\em Avg. family size w/ $\theta$}]  {
			\includegraphics[scale=0.09, angle=270]{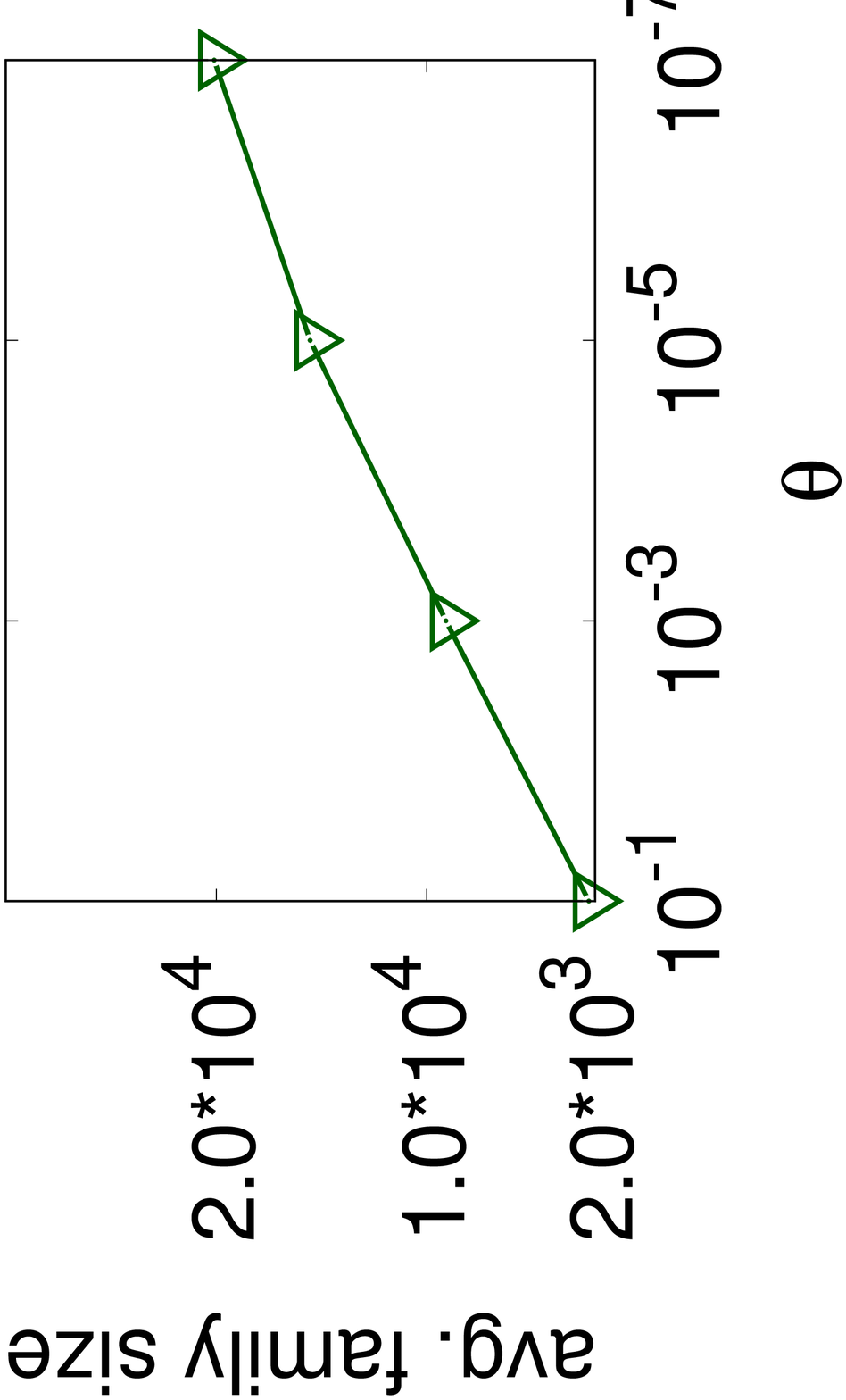}
			\label{fig:theta_family_size}
	}
	\vspace{-6mm}
	\caption{\small Impacts of $\theta$, node del., {\em Epinions} (\textsf{DWA}), IC model}
	\label{fig:theta}
	\vspace{-5mm}
\end{figure}
\begin{figure}[t!]
	\vspace{-2mm}
	\centering
	\subfigure[{\em Inf. spread}]  {
		\includegraphics[scale=0.14, angle=270]{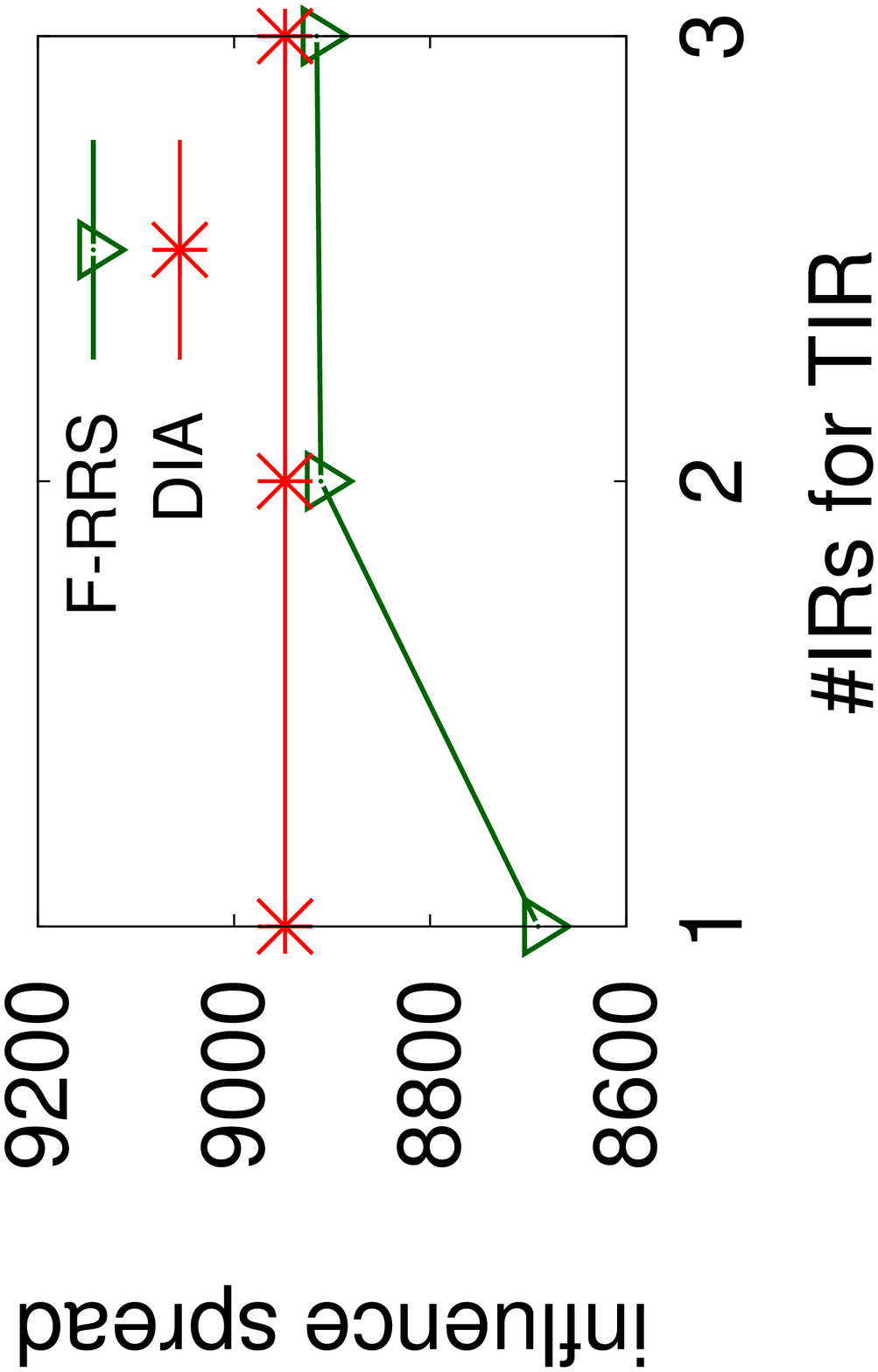}
		\label{fig:acc_family}
	}
	\subfigure[{\em Run time to adjust seed set}] {
		\includegraphics[scale=0.14, angle=270]{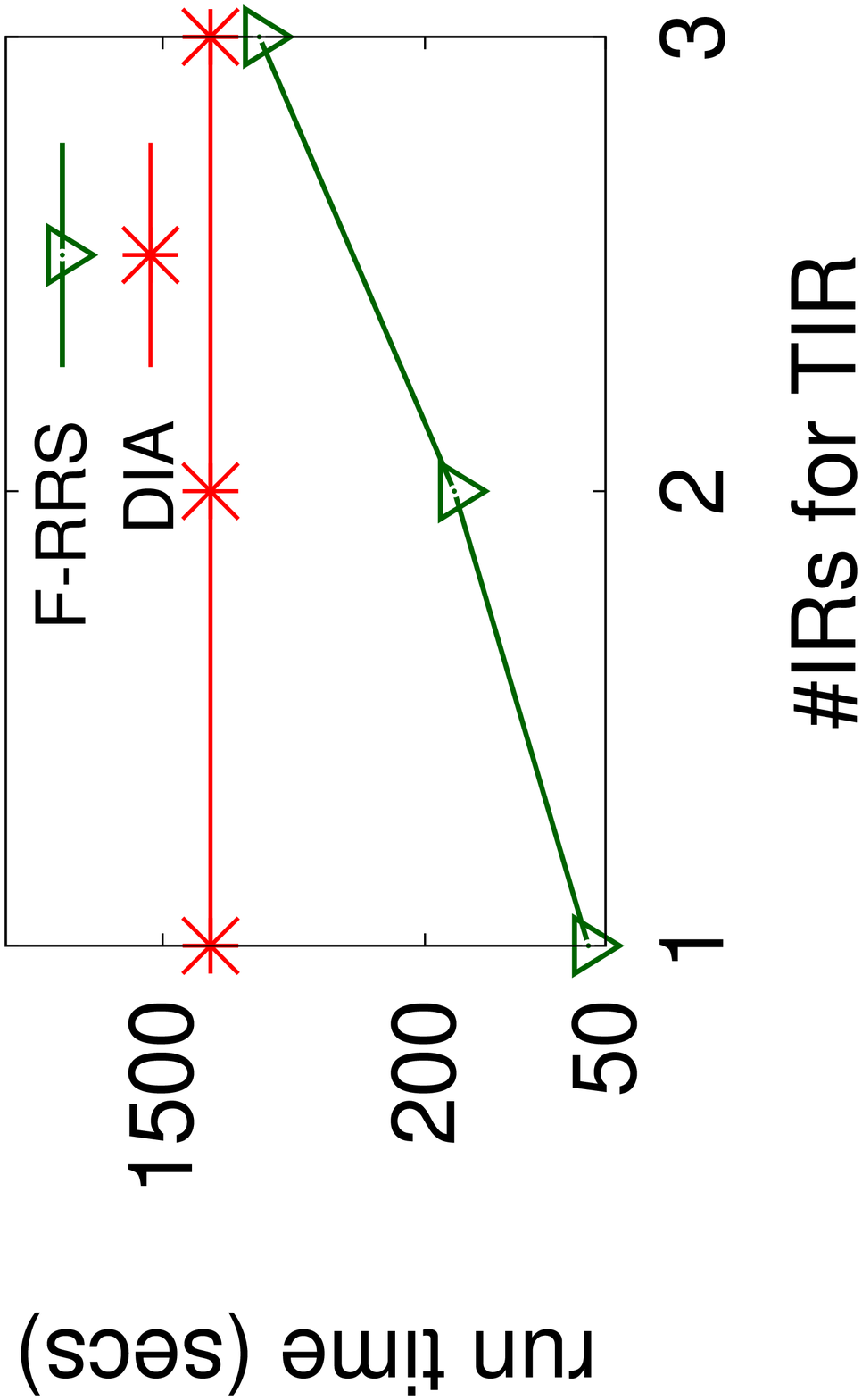}
		\label{fig:eff_family}
	}
	\vspace{-6mm}
	\caption{\small Impacts of \#IRs, node del., {\em Epinions} (\textsf{DWA}), IC model}
	\label{fig:family}
	\vspace{-5mm}
\end{figure}
\begin{figure}[t!]
	\vspace{-2mm}
	\centering
	\subfigure[{\em Run time to adjust seeds, node add., {\em Epinions}}] {
		\includegraphics[scale=0.14, angle=270]{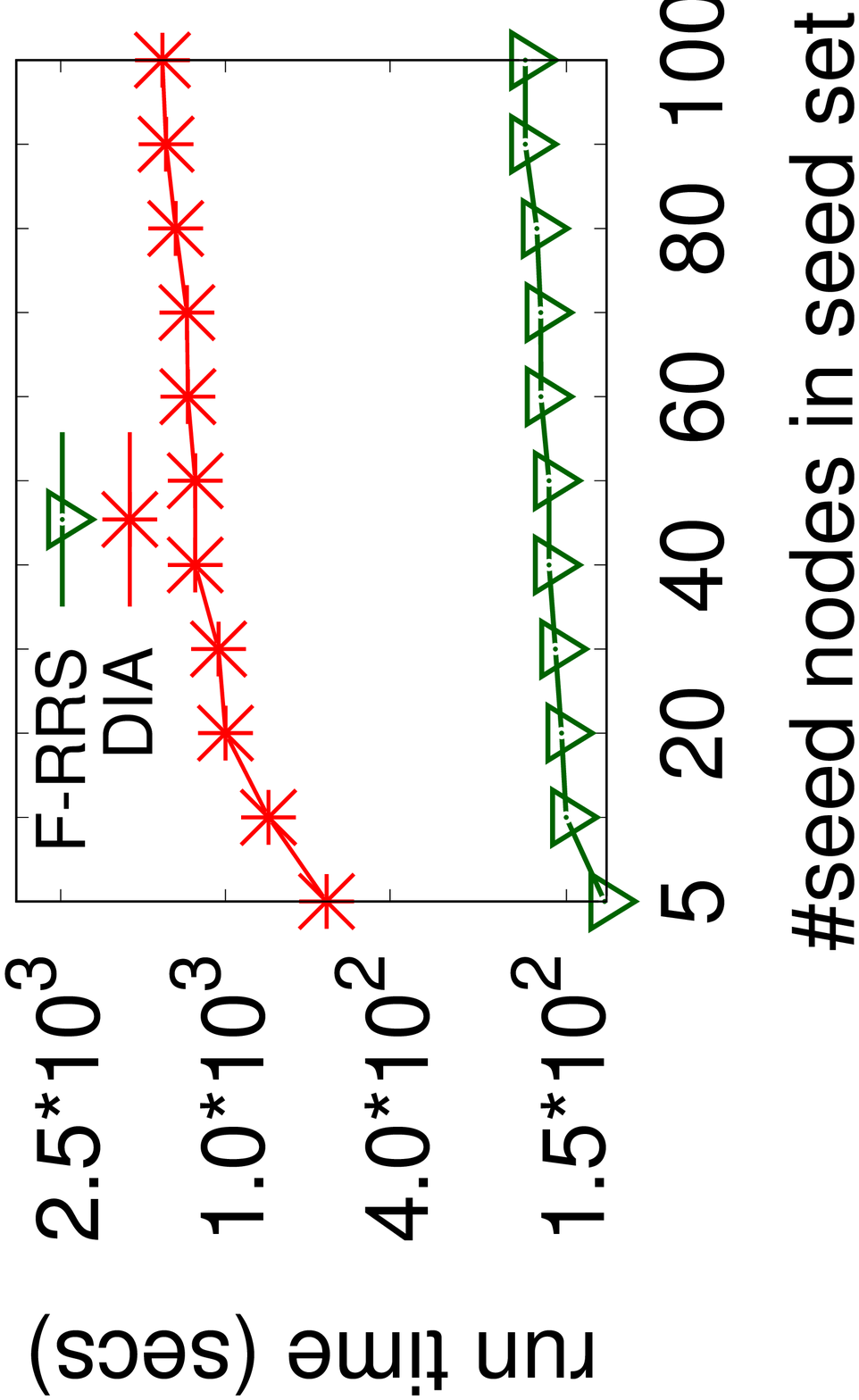}
		\label{fig:seed}
	}
	\subfigure[{\em Inf. spread w/ varying $\beta$, \newline {\em Digg}}]  {
		\includegraphics[scale=0.14, angle=270]{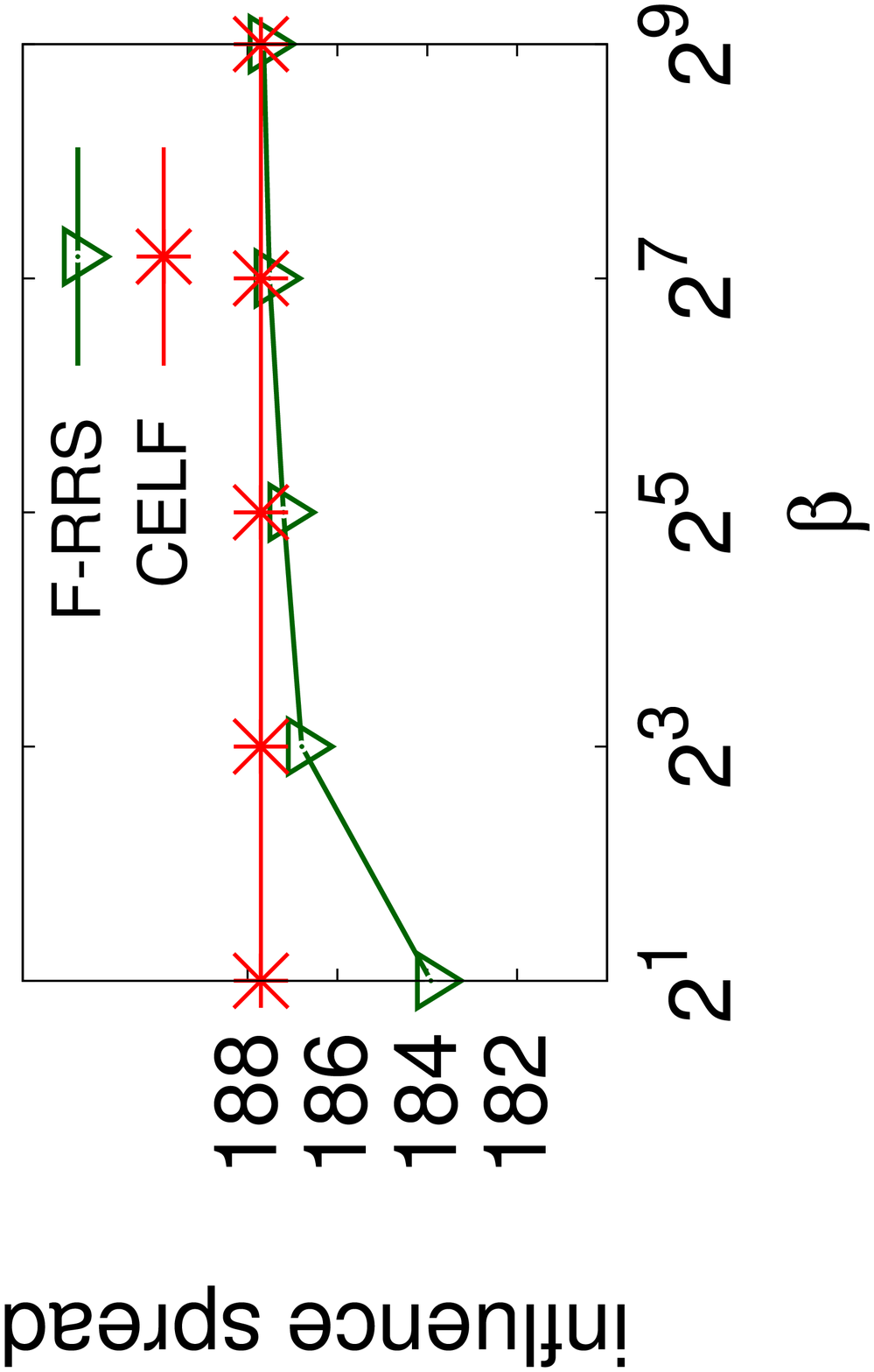}
		\label{fig:beta_digg}
	}
	\vspace{-6mm}
	\caption{\small Impacts of varying \#seeds and $\beta$, IC model}
	\label{fig:seed_beta}
	\vspace{-6mm}
\end{figure}

In Figure~\ref{fig:seed}, we show the efficiency with \textbf{varying seed set size} from $5$ to $100$. It can be observed that even for the seed set of size $100$, {\sf F-RRS} is faster than {\sf DIA} by more than an order of magnitude. {\em This demonstrates that our technique is scalable for large seed set sizes.} For sketch-based methods, choosing the optimal $\beta$ is very important. In Figure~\ref{fig:beta_digg}, we show the influence coverage of the {\sf F-RRS} with \textbf{varying $\beta$} from $2$ to $512$. We compare the influence spread with that of {\sf CELF}. We find that with increase in $\beta$, influence coverage  initially increases, and gets saturated at $\beta=32$. Hence, we set $\beta=32$ in our experiments.

\vspace{-1mm}
\section{Conclusions}
\label{sec:conclusions}
We developed a generalized, local updating framework for efficiently adjusting the top-$k$ influencers
in an evolving network. Our method iteratively identifies {\em only} the
affected seed nodes due to dynamic updates in the influence graph, and then replaces
them with more suitable ones. Our solution can be applied to a variety of
information propagation models and influence maximization techniques.
Our algorithm, {\sf N-Family} ensures $(1-\frac{1}{e})$ approximation guarantee with the MIA
influence cascade model, and works well for localized batch updates. Based on a detailed empirical
analysis over several real-world, dynamic, and large-scale networks, {\sf N-Family} improves
the updating time of the top-$k$ influencers by 1$\sim$2 orders of magnitude, compared to state-of-the-art algorithms, while ensuring
similar memory usage and influence spreads.

\vspace{-1mm}
{\scriptsize
\bibliographystyle{ACM-Reference-Format}
\bibliography{ref,ref_rel}
}
\vspace{-3mm}
\appendix
\section{Proof of Lemma 3}
\label{sec:proof_lemma3}
According to Eq.~\ref{eq:marginal}, the marginal gain of $u$ with respect to $S$ is given as:
\vspace{-1mm}
\begin{align}
	MG(S,u) &= \sigma{(S\cup \{u\})} - \sigma{(S)}  \nonumber \\
	&= \sum_{w\in V}{pp(S\cup \{u\},w)} - \sum_{w\in V}{pp(S,w)}  \nonumber
\end{align}
\begin{align}
	& = \hspace{-4mm} \sum_{w\in V\setminus F_1(s)}\hspace{-4mm}{pp(S\cup \{u\},w)} + \sum_{w\in F_1(s)}{pp(S\cup \{u\},w)} - \nonumber \\
	& \sum_{w\in V\setminus F_1(s)}\hspace{-4mm}{pp(S,w)} - \sum_{w\in F_1(s)}{pp(S,w)} \label{eq:rem_mg_1}
\end{align}
As $u \notin F_2(s)$, the influence of $u$ on any node in $F_1(s)$ is $0$. Hence, Equation~\ref{eq:rem_mg_1} can be written as:
\vspace{-1mm}
\begin{align}
	MG(S,u) & = \hspace{-2mm} \sum_{w\in V\setminus F_1(s)\hspace{-4mm}}{pp(S\cup \{u\},w)} + \sum_{w\in F_1(s)}{pp(S,w)} - \nonumber \\
	& \sum_{w\in V\setminus F_1(s)}\hspace{-4mm}{pp(S,w)} - \sum_{w\in F_1(s)}{pp(S,w)} \nonumber \\
	&  = \sum_{w\in V\setminus F_1(s)}\hspace{-4mm}{pp(S\cup \{u\},w)} - \sum_{w\in V\setminus F_1(s)}\hspace{-4mm}{pp(S,w)} \label{eq:rem_mg_2}
\end{align}
Now, the removed seed node $s$ cannot influence any node outside $F_1(s)$. Hence, Equation~\ref{eq:rem_mg_2}  can be written as:
\vspace{-1mm}
\begin{align}
	MG(S,u) & = \sum_{w\in V\setminus F_1(s)}\hspace{-4mm}{pp({S\setminus \{s\}} \cup \{u\},w)} - \hspace{-4mm} \sum_{w\in V\setminus F_1(s)}\hspace{-4mm}{pp({S \setminus \{s\}},w)} \nonumber \\
	& = \hspace{-4mm} \sum_{w\in V\setminus F_1(s)} \hspace{-4mm}{pp({S\setminus \{s\}} \cup \{u\},w)} + \hspace{-2mm} \sum_{w\in F_1(s)} \hspace{-2mm}{pp({S\setminus \{s\}} \cup \{u\},w)} \nonumber \\
	& - \hspace{-4mm} \sum_{w\in V\setminus F_1(s)}\hspace{-4mm}{pp({S \setminus \{s\}},w)} - \sum_{w\in F_1(s)}\hspace{-2mm}{pp({S\setminus \{s\}} \cup \{u\},w)} \label{eq:rem_mg_3}
\end{align}
As influence of $u$ on any node in $F_1(s)$ is $0$, Equation~\ref{eq:rem_mg_3} can be written as:
\vspace{-1mm}
\begin{align}
	MG(S,u) & = \hspace{-4mm} \sum_{w\in V\setminus F_1(s)}\hspace{-4mm}{pp({S\setminus \{s\}} \cup \{u\},w)} +  \hspace{-4mm} \sum_{w\in F_1(s)}\hspace{-2mm}{pp({S\setminus \{s\}} \cup \{u\},w)} \nonumber \\
	& - \hspace{-4mm} \sum_{w\in V\setminus F_1(s)}\hspace{-4mm}{pp({S \setminus \{s\}},w)} - \sum_{w\in F_1(s)}{pp({S\setminus \{s\}},w)} \nonumber \\
	& = \sum_{w\in V}{pp({S\setminus \{s\}}\cup \{u\},w)} - \sum_{w\in V}{pp({S\setminus \{s\}},w)}  \nonumber \\
	&= \sigma{({S\setminus \{s\}} \cup \{u\})} - \sigma{(S\setminus \{s\})}  \nonumber \\
	& = MG(S\setminus \{s\}, u)
\end{align}
Hence, the lemma.

\section{Proof of Lemma 4}
\label{sec:proof_lemma4}
Consider a node $w$ outside $MIIA(u,\theta)$ in the original graph $\mathcal{G}(V,E,P)$,
which means $w$ cannot activate $u$ with a minimum strength of $\theta$ through $MIP(w,u)$.
Then, the strength at which $w$ activates $v$ through $u$ in the updated graph is:
$pp(w,u)\times P_{u,v}$. Since, $pp(w,u) < \theta$,
we have: $pp(w,u)\times P_{u,v}< \theta$.
Thus, adding the edge $uv$ does not change the expected influence spread
of $w$, based on the MIA model. Hence, the lemma follows.

\section{Proof of Lemma 5 and 6}
\label{sec:proof_lemma56}
A seed node can influence only the nodes present in its family according to the MIA model. There is no node present in {\sf TIR} which belongs to the family of any seed node outside {\sf TIR}. This is because any uninfected seed node is more than 2-Family away from any node present in {\sf TIR} (This is how we terminate infection propagation). Hence, both Lemma 5 and 6 follow.

\section{Proof of Performance \\ Guarantee under MIA Model}
\label{sec:theo_acc}
We show that the top-$k$ seed nodes reported by our {\sf N-Family} method (Algorithm~\ref{alg:prop}) are the same as the top-$k$ seed nodes obtained by running the Greedy
on the updated graph under the MIA model. Since, the Greedy algorithm provides the approximation guarantee of $1-\frac{1}{e}$, our {\sf N-Family} also provides the same approximation guarantee.
The proof is as follows.

As described in Section~\ref{sec:update_algo}, after identifying the {\sf TIR} using Equation~\ref{eq:TIR}, we compute $S_{rem}$ (=$S\setminus {\sf TIR}$), influence spreads of all nodes $u \in {\sf TIR}$, and update the priority queue.

Now, we continue with computing the $k-k'$ new seed nodes over the updated graph, and $S_{new}$ is new seed set (of size $k$) found in this manner. Note that before we begin computing new seed nodes, $S_{new}$ contains the $k'$ seed nodes present in $S_{rem}$, and then
new nodes are added in an iterative manner. Clearly, $S_{new}^{k'}$ is same as $S_{rem}$. We consider $s_{n}^{k'+i}$ as
the seed node computed by Greedy in the $i^{th}$ iteration, where $i\le k-k'$.
Due to Greedy algorithm,
\begin{align}\label{eq:snew_greedy}
MG(S_{new}^{k-1},s_n^k) \ge MG(S_{new}^{k-1},u), \qquad \forall u\in V\setminus S_{new}
\end{align}

Next, we sort all seeds in $S_{new}$ according to the greedy inclusion order,
and the sorted seed set is denoted as $S_{order}$. Note that seed nodes present in $S_{new}$ and $S_{order}$ are same, but their
order could be different. At this stage, the important observations are as follows.

After computing $S_{order}$, and assuming $w$ the top-most node in the priority queue, we will have two mutually exclusive cases:

\spara{Case 1:} $MG(S_{order}^{k-1},s_o^k) \ge MG(S_{order},w)$\\
\spara{Case 2:} $MG(S_{order}^{k-1},s_o^k) < MG(S_{order},w)$

If we end up with Case-1, we terminate our algorithm and report $S_{order}$ as the set of new seed nodes,
which would be same as the ones computed by the Greedy algorithm on the updated graph (we shall prove this soon).
However, if we arrive at Case-2, we do iterative seed replacements until we achieve Case-1
(we prove that by iterative seed replacements for at most $k'$ times, we reach Case-1).

Moreover, there are two more mutually exclusive cases which can be derived from the following lemma.
\begin{lma}\label{lma:last_seed}
	The last seed node, $s_o^k$ present in $S_{order}$ can be either $s_r^{k'}$  (i.e., the
	last seed node in $S_{rem}$) or $s_n^k$ (i.e., the last seed node in $S_{new}$).
\end{lma}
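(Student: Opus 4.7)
The plan is to prove Lemma~\ref{lma:last_seed} by showing that $S_{order}$, viewed as the output of greedy restricted to $S_{new}$ on the updated graph, is an \emph{interleaving} of two totally-ordered chains whose respective last elements are $s_r^{k'}$ and $s_n^k$. Since the last element of any merge of two sequences must agree with the last element of one of them, this forces $s_o^k\in\{s_r^{k'},s_n^k\}$. The two chains in question are $\alpha:s_r^1,s_r^2,\ldots,s_r^{k'}$, the relative order of $S_{rem}$ inherited from the original greedy on the old graph, and $\beta:s_n^{k'+1},s_n^{k'+2},\ldots,s_n^k$, the order in which Algorithm~\ref{alg:prop} adds the new seeds on the updated graph starting from base $S_{rem}$. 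Thus it suffices to establish two sub-claims: (i) in $S_{order}$ the members of $\alpha$ appear in their inherited order, and (ii) in $S_{order}$ the members of $\beta$ appear in their insertion order.

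For sub-claim (i) I would argue by contradiction. Suppose restricted greedy picks $s_r^j$ before $s_r^i$ for some $i<j\le k'$, and let $A$ be the current seed set immediately before $s_r^j$ is chosen. The greedy max-gain rule gives $MG_{\mathrm{upd}}(A,s_r^j)\ge MG_{\mathrm{upd}}(A,s_r^i)$, with marginal gains computed on the updated graph. Since $s_r^i,s_r^j\notin\text{\sf TIR}$, invoking the natural strengthening of Lemma~\ref{lma:tir_out}, together with Lemma~\ref{lma:FIR3} to rule out perturbation of the relevant MIA paths by the update itself, I can strip the infected members out of $A$ and switch to the old graph to obtain $MG_{\mathrm{old}}(A\cap S_{rem},s_r^j)\ge MG_{\mathrm{old}}(A\cap S_{rem},s_r^i)$. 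Using the original greedy choice at step $t_i$ (when $s_r^i$ was preferred over $s_r^j$) and submodularity (Lemma~\ref{lma:submodularity}) to shift the base of comparison from $\{s_r^1,\ldots,s_r^{i-1}\}$ to $A\cap S_{rem}$, I obtain the reverse strict inequality and hence a contradiction. Sub-claim (ii) is handled symmetrically, using Eq.~\ref{eq:greedy_marginal} applied to the fresh greedy run that produced $s_n^{k'+1},\ldots,s_n^k$, and submodularity to move the base of comparison from $S_{rem}\cup\{s_n^{k'+1},\ldots,s_n^{k'+i-1}\}$ to an arbitrary current seed set $A$.

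The main obstacle is the marginal-gain invariance step in the $\alpha$-chain argument: Lemma~\ref{lma:tir_out} is stated only for the full old seed set $S$ and a single node $v\notin\text{\sf TIR}$, whereas I need it for arbitrary subsets of $S_{rem}\cup(S_{new}\setminus S_{rem})$ taken as the seed set, evaluated on the \emph{updated} graph. Justifying this extension requires unpacking the MIA-model support of an uninfected node: by construction of TIR via successive 2-Families until termination, no infected seed lies in $F_1(s_r^i)$ or $F_1(s_r^j)$, and by Lemma~\ref{lma:FIR3} applied to the originating node of the update, every MIA path feeding $s_r^i$ or $s_r^j$ is preserved across the update. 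Once this invariance is in place, the submodularity bookkeeping that aligns the two comparison bases is routine, and both chain-preservation sub-claims collapse to elementary manipulation of the greedy inequalities, yielding the lemma.
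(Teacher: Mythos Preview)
The paper's proof is far terser than yours: it simply invokes Inequality~(\ref{eq:greedy_marginal}) for the chain $s_n^{k'+1},\ldots,s_n^k$ and Inequality~(\ref{eq:order_rem}) for the chain $s_r^1,\ldots,s_r^{k'}$, notes that $s_n^k$ and $s_r^{k'}$ are the respective minima (in the sense of marginal gain at time of insertion into their own chain), and concludes directly that the last element of $S_{order}$ must be one of them. It does not establish your stronger interleaving claim at all. Your merge-of-two-chains argument is a more structural route and, if it went through, would make rigorous what the paper's ``hence'' step leaves implicit.

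However, your sub-claim~(i) has a gap beyond the obstacle you already flag. When you pass from $MG_{\mathrm{upd}}(A,\cdot)$ to $MG_{\mathrm{old}}(A\cap S_{rem},\cdot)$ by ``stripping the infected members out of $A$,'' you are implicitly assuming that every new seed already present in $A$ lies in {\sf TIR}. But that is precisely Lemma~\ref{lma:seedtir}.1, which is proved \emph{after} the present lemma and only under the additional hypothesis $s_o^k=s_r^{k'}$. If some new seed $c\in A$ lies outside {\sf TIR}, it may well sit inside $F_2(s_r^i)$ and alter $MG(\cdot,s_r^i)$, so the reduction to base set $A\cap S_{rem}$ is unjustified. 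The invariance that is actually available from Lemmas~\ref{lma:rem_mg}--\ref{lma:tir_in_out} (together with the symmetry of $F_2$) lets you drop the elements of $A\cap{\sf TIR}$, yielding $A\setminus{\sf TIR}$ rather than $A\cap S_{rem}$; once that residual set differs from any prefix of the original greedy run, submodularity alone cannot transport the comparison $MG(\cdot,s_r^i)\ge MG(\cdot,s_r^j)$ across base sets, and your contradiction does not close.
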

\begin{proof}
Since we compute new seed nodes using Greedy algorithm for $S_{new}$, $MG(S_{new}^{k-1},s_n^k) \le MG(S_{new}^{k'+i-1},s_n^{k'+i})$ for $0<i\le(k-k')$ (Inequality~\ref{eq:greedy_marginal}), and $MG(S_{rem}^{k'-1},s_r^{k'}) \le MG(S_{rem}^{k'-l},s_r^{k'-l+1})$ (Inequality~\ref{eq:order_rem})
	where $2\le l\le k'$. In other words, $s_{r}^{k'}$ and $s_n^k$ provide least marginal gains compared to other nodes in $S_{rem}$ and in $S_{new}/S_{rem}$, respectively. Hence, $s_o^k$ can be either $s_n^k$ or $s_{r}^{k'}$.
\end{proof}
Therefore, the two mutually exclusive cases are:

\spara{Case A:} $s_{o}^{k} = s_n^{k}$ (i.e., the last seed node in $S_{new}$)\\
\spara{Case B:} $s_{o}^{k} = s_r^{k'}$ (i.e., the last seed node in $S_{rem}$)

Now, we will show that the seed nodes obtained after reaching Case-1, i.e., when $MG(S_{order}^{k-1},s_o^k) \ge MG(S_{order},w)$,
and under both Case-A and Case-B, i.e., $s_{o}^{k} = s_n^{k}$ and $s_{o}^{k} = s_r^{k'}$, are exactly same as the seed nodes produced by Greedy algorithm on the updated graph. For the seed set computed by the Greedy algorithm on the updated graph, the following inequality must hold.
\begin{align}\label{eq:greedy_req}
	MG(S_{order}^{k-1},s_o^k) \ge MG(S_{order}^{k-1},v), \qquad \forall v\in V\setminus S_{order}
\end{align}
Hence, we will prove that for Case-1, Inequality~\ref{eq:greedy_req} is true in both Case-A and Case-B.

First, we will show for Case-1 ($MG(S_{order}^{k-1},s_o^k) \ge MG(S_{order},w)$) and Case-A ($s_{o}^{k} = s_n^{k})$.
\begin{lma}\label{lma:subcase1}
	If $MG(S_{order}^{k-1},s_o^k) \ge MG(S_{order},w)$ (Case-1), where $w$ is the top-most node in the priority queue, and $s_o^k = s_n^k$ (Case-A), then $S_{order}$ is the set of seed nodes computed by Greedy on the updated graph, i.e., Inequality~\ref{eq:greedy_req} holds.
\end{lma}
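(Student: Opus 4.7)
The plan is to reduce the lemma to proving Inequality~\ref{eq:greedy_req}, and to derive that inequality by chaining two observations: a set identity forced by Case-A, and the greedy maximization property at the final iteration of the $Greedy$ call on line~6 of Algorithm~\ref{alg:prop}. Concretely, I would first argue that $S_{order}^{k-1}=S_{new}^{k-1}$ as sets (so that $V\setminus S_{order}=V\setminus S_{new}$), and then transport the greedy-maximality inequality from the $S_{new}^{k-1}$ side to the $S_{order}^{k-1}$ side by substitution.

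For Step~1, since $S_{order}$ is defined in line~7 as a reordering of $S_{new}$, removing the common final element $s_o^k=s_n^k$ (which is exactly Case-A) gives $S_{order}\setminus\{s_o^k\}=S_{new}\setminus\{s_n^k\}$, i.e., $S_{order}^{k-1}=S_{new}^{k-1}$. For Step~2, the last iteration of $Greedy(\mathcal{G},S_{rem},k)$ picks $s_n^k$ as the node in $V\setminus S_{new}^{k-1}$ maximizing marginal gain with respect to $S_{new}^{k-1}$, which is exactly Equation~\ref{eq:snew_greedy}. For Step~3, combining these, for every $v\in V\setminus S_{order}=V\setminus S_{new}$,
\[
MG(S_{order}^{k-1},s_o^k)\;=\;MG(S_{new}^{k-1},s_n^k)\;\ge\;MG(S_{new}^{k-1},v)\;=\;MG(S_{order}^{k-1},v),
\]
which is precisely Inequality~\ref{eq:greedy_req}. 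Because $S_{order}$ is sorted by greedy inclusion on $\mathcal{G}$ in line~7, applying the same argument at each prefix pins down the whole ordering, so $S_{order}$ coincides with the set produced by running Greedy from scratch on the updated graph.

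The main obstacle here is conceptual rather than computational: one must notice that Case-1 (the priority-queue check) is algorithmically necessary for Algorithm~\ref{alg:prop} to terminate in this branch but is mathematically redundant in this sub-case---Case-A alone collapses $S_{order}^{k-1}$ onto $S_{new}^{k-1}$ and inherits greedy maximality over the \emph{entire} set $V\setminus S_{new}^{k-1}$, including any candidate nodes sitting inside \textsf{TIR}. The proof must therefore make explicit that the $Greedy$ call in line~6 scans the full candidate set (not merely non-\textsf{TIR} nodes), so that the substitution in Step~3 is valid for every $v\in V\setminus S_{order}$, and in particular for the nodes in the priority queue $Q$ whose upper bounds Case-1 controls. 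Once that point is nailed down, the remaining derivation is a one-line substitution.
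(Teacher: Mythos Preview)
Your proposal is correct and follows essentially the same approach as the paper: both reduce to the set identity $S_{order}^{k-1}=S_{new}^{k-1}$ (forced by $s_o^k=s_n^k$ together with $S_{order}$ being a permutation of $S_{new}$) and then transport the greedy-maximality of $s_n^k$ over $V\setminus S_{new}^{k-1}$ via substitution to obtain Inequality~\ref{eq:greedy_req}. Your additional remark that the Case-1 hypothesis is logically unused in this sub-case is accurate and worth making explicit---the paper's own argument here likewise never invokes the priority-queue comparison.
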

\begin{proof}
Given $s_{o}^k = s_n^k$. Moreover, seed nodes present in $S_{order}$ and $S_{new}$ are same. Hence,
\begin{align}\label{eq:sok_snk}
MG(S_{order}^{k-1},s_o^k) = MG(S_{new}^{k-1},s_n^k)
\end{align}
Next, by combining Equation~\ref{eq:sok_snk} and Inequality~\ref{eq:snew_greedy} we get, $MG(S_{order}^{k-1}$ $,s_o^k) \ge MG(S_{order}^{k-1},u)$,
for all $u \in V\setminus S_{order}$.
Hence, the lemma.
\end{proof}
Now, to prove the theoretical guarantee for Case-1 and Case-B, the following Lemma is very important.
\begin{lma}\label{lma:seedtir}
	If $s_o^k = s_r^{k'}$, then
	\begin{enumerate}
		\item All new seed nodes computed belong to {\sf TIR}, i.e., \\
		$S_{order}\setminus S_{rem} \in \text{\sf TIR}$.
		\item $MG(S_{order}^{k-1},s_o^{k}) =  MG(S_{rem}^{k'-1},s_r^{k'})$.
	\end{enumerate}
\end{lma}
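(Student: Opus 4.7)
My plan tackles the two claims of Lemma~\ref{lma:seedtir} in sequence, with Part~2 building on Part~1. For Part~1, I would argue by contradiction: suppose some $n \in N := S_{order}\setminus S_{rem}$ lies outside \textsf{TIR}. The key ingredients are: (a) by Lemma~\ref{lma:tir_out}, marginal gains of nodes outside \textsf{TIR} are unchanged when infected seeds are dropped from the seed set, and (since $n$'s local MIA structure lies outside the affected region) such marginal gains are also preserved between the original and updated graphs; (b) iteratively applying Inequality~\ref{eq:greedy_marginal} along the original-Greedy order for $S$, combined with Lemma~\ref{lma:rem_mg} used to strip infected seeds from $S^{j^*-1}$ (where $j^*$ is the iteration at which $s_r^{k'}$ was originally added), yields $MG(S^{j^*-1}, s_r^{k'}) = MG(S_{rem}^{k'-1}, s_r^{k'})$ and hence $MG(S, n) \le MG(S_{rem}^{k'-1}, s_r^{k'})$ in the original graph. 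Combining (a) and (b) gives $MG(S_{rem}, n) \le MG(S_{rem}^{k'-1}, s_r^{k'})$ in the updated graph as well.

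On the other hand, since $s_o^k = s_r^{k'}$, in the restricted Greedy that produces $S_{order}$ the node $n$ is selected at some step $j < k$ with $MG(S_{order}^{j-1}, n) \ge MG(S_{order}^{j-1}, s_r^{k'})$, and sub-modularity (Lemma~\ref{lma:submodularity}) gives $MG(S_{order}^{j-1}, s_r^{k'}) \ge MG(S_{order}^{k-1}, s_r^{k'})$. Pushing $n$'s marginal gain back to the baseline $S_{rem}^{k'-1}$ via sub-modularity and the bound established above produces an inequality that cannot hold unless $n$ actually lies in \textsf{TIR}, giving the desired contradiction.

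For Part~2, once Part~1 is established the argument is clean. Since $s_o^k = s_r^{k'}$, we have $S_{order}^{k-1} = S_{new}\setminus\{s_r^{k'}\} = S_{rem}^{k'-1}\cup N$. I would apply Lemma~\ref{lma:rem_mg} iteratively, removing each $n \in N$ from $S_{rem}^{k'-1}\cup N$; at each step this requires $s_r^{k'} \notin F_2(n)$, which follows from three facts: (i)~Part~1 gives $n \in \textsf{TIR}$; (ii)~the termination rule of the iterative infection propagation implies no uninfected seed lies in the $F_2$ of any \textsf{TIR} node (otherwise a further \textsf{IR} level would have been spawned); and (iii)~$F_2$ is symmetric in its two arguments because $F_1 = MIIA\cup MIOA$ is symmetric under edge reversal. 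Chaining these equalities over all $n \in N$ gives $MG(S_{rem}^{k'-1}\cup N, s_r^{k'}) = MG(S_{rem}^{k'-1}, s_r^{k'})$, which is exactly Part~2.

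The main obstacle is the inequality chain in Part~1: translating carefully between marginal gains with respect to five different seed sets ($S$, $S_{rem}$, $S_{rem}^{k'-1}$, $S_{order}^{j-1}$, $S_{order}^{k-1}$) requires repeated, bookkeeping-heavy applications of sub-modularity and Lemma~\ref{lma:rem_mg}, and showing that the resulting chain is tight enough to rule out $n \notin \textsf{TIR}$. Part~2, by contrast, reduces essentially to a single structural fact about \textsf{TIR} (closure under $F_2$ when restricted to seeds), plus symmetry of the family relation.
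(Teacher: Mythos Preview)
Your proposal is correct and the underlying logic matches the paper's, but the paper's argument for Part~1 is a one-liner: it simply invokes Inequality~\ref{eq:order_rem_2}, which was already established just before Algorithm~\ref{alg:prop} and says $MG(S_{rem}^{k'-1}, s_r^{k'}) \ge MG(S_{rem}^{k'-1}, v)$ for every $v \notin \textsf{TIR} \cup S_{rem}$. Since $s_o^k = s_r^{k'}$ sits last in the Greedy re-ordering of $S_{new}$, no such $v$ could have been chosen ahead of it. Your contradiction route---tracing back through the original Greedy run on $S$ and stripping infected seeds via Lemma~\ref{lma:rem_mg}---is essentially a hands-on re-derivation of Inequality~\ref{eq:order_rem_2} itself, so the heavy bookkeeping you anticipate is work the paper already packaged upstream.

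For Part~2 your plan and the paper's coincide: new seeds lie in \textsf{TIR} (Part~1), $s_r^{k'}$ is uninfected and hence outside the 2-Family of every \textsf{TIR} node, and Lemma~\ref{lma:rem_mg} then yields the equality. One simplification: your step~(iii) on $F_2$-symmetry is redundant. The termination rule for infection propagation already asserts that no old seed lies in $F_2(u)$ for $u$ in the last \textsf{IR} level, and the recursive construction of \textsf{TIR} extends this to every level, giving $s_r^{k'} \notin F_2(n)$ directly in the direction Lemma~\ref{lma:rem_mg} requires.
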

\begin{proof}
	As $s_r^{k'}$ provides the least marginal gain in $S_{order}$, according to Inequality~\ref{eq:order_rem_2}, any other node in $V\setminus\{{\sf TIR \cup S_{rem}}\}$ cannot be present in $S_{order}\setminus S_{rem}$. Hence, all new seed nodes come from {\sf TIR}. This completes the proof of the first part.
	
The second part of the theorem also holds, since the new seed nodes (i.e., $S_{order}\setminus S_{rem}$) present in {\sf TIR}
do not affect the marginal gain of the old seed nodes (i.e., $S_{rem}$) outside {\sf TIR}. It is because they are at least {\sf 2-Family} away from old seed nodes (Lemma~\ref{lma:rem_mg}).
\end{proof}

Now, we are ready to prove that Inequality~\ref{eq:greedy_req} holds for Case-1 (i.e., $MG(S_{order}^{k-1},s_o^k) \ge MG(S_{order},w)$) and Case-B (i.e., $s_{o}^k = s_{r}^{k'}$).
\begin{lma}\label{lma:subcase2}
	If $MG(S_{order}^{k-1},s_o^k) \ge MG(S_{order},w)$ (Case-1) where $w$ is the top-most node in the priority queue, and $s_o^k = s_r^{k'}$ (Case-B) then $S_{order}$ is the set of seed nodes computed by Greedy on the updated graph, i.e., Inequality~\ref{eq:greedy_req} holds.
\end{lma}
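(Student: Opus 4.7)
The plan is to verify Inequality~\ref{eq:greedy_req} under the two hypotheses by splitting the candidate set $V\setminus S_{order}$ into two disjoint pieces: $V_1=V\setminus(\text{\sf TIR}\cup S_{order})$, the nodes lying outside the total infected region, and $V_2=\text{\sf TIR}\setminus S_{order}$, the nodes sitting inside $\text{\sf TIR}$ that were not chosen as new seeds. The unifying rewriting on the left-hand side comes from Lemma~\ref{lma:seedtir}(2), which in Case-B lets me replace $MG(S_{order}^{k-1},s_o^k)$ by the pre-update quantity $MG(S_{rem}^{k'-1},s_r^{k'})$; this anchors both regimes in quantities that the existing lemmas already control.

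For $v\in V_1$, I would strip the new seeds $S_{order}\setminus S_{rem}$ off of $S_{order}^{k-1}$ one by one using Lemma~\ref{lma:rem_mg}. Every such new seed $s'$ lives in $\text{\sf TIR}$ by Lemma~\ref{lma:seedtir}(1); because infection propagation was iterated until no further old seed was hit by the 2-Family, the reach of $s'$ cannot cross into $v\notin\text{\sf TIR}$, so $v\notin F_2(s')$ and Lemma~\ref{lma:rem_mg} collapses $MG(S_{order}^{k-1},v)$ to $MG(S_{rem}^{k'-1},v)$. Combining this identity with Lemma~\ref{lma:seedtir}(2) and Inequality~\ref{eq:order_rem_2}---the latter transported to the updated graph via Lemma~\ref{lma:tir_out}, since marginal gains outside $\text{\sf TIR}$ are unchanged---yields $MG(S_{order}^{k-1},v)\le MG(S_{order}^{k-1},s_o^k)$, as needed.

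For $v\in V_2$, I would rely on the CELF-style invariant maintained by the priority queue $Q$ in Algorithm~\ref{alg:prop}. At the moment the termination test is reached, every $v\in\text{\sf TIR}$ satisfies $MG(S_{order}^{k-1},v)\le Q(v)$, because $Q(v)$ is either the initial $\sigma(v)$ from line~3 or the refreshed value $MG(S',v)$ for some $S'\subseteq S_{order}^{k-1}$ from line~11, and submodularity points the inequality in the needed direction. Since $w$ is the top of $Q$, we have $Q(v)\le Q(w)$, and the termination hypothesis $MG(S_{order}^{k-1},s_o^k)\ge MG(S_{order},w)$ closes the gap provided $Q(w)$ can be identified with $MG(S_{order},w)$; this identification is precisely the CELF pop-and-reinsert invariant at the current top of $Q$.

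The main obstacle is two-fold. First, the termination reasoning in Part (i) must be made watertight for the \emph{new} seeds $s'\in S_{order}\setminus S_{rem}$, because the recursive definition of $N$-IR enumerates only the 2-Families of \emph{old} seeds; one has to argue that a new seed picked inside $\text{\sf TIR}$ cannot let its $F_2$ escape $\text{\sf TIR}$ without contradicting the stopping rule of the infection iteration. Second, Part (ii) requires aligning the termination test, which compares against $MG(S_{order},w)$, with both the queue value $Q(w)$ and $MG(S_{order}^{k-1},v)$ for the competing $v$; pushing the bound across all three quantities without losing a factor to submodularity in the wrong direction is the delicate step, and it mirrors the standard correctness argument for CELF lifted to the dynamic setting.
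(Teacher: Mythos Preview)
Your decomposition into $V_1=V\setminus(\text{\sf TIR}\cup S_{order})$ and $V_2=\text{\sf TIR}\setminus S_{order}$ is exactly the split the paper uses, and your use of Lemma~\ref{lma:seedtir}(2) to anchor the left-hand side is also the paper's move. The divergence is in how each piece is handled.

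For $V_1$, the obstacle you flag is real and, as stated, not resolvable: the recursive construction of $\text{\sf TIR}$ only absorbs $F_2(s)$ for \emph{old} seeds $s$, so a freshly chosen seed $s'\in\text{\sf TIR}$ can have $F_2(s')$ leak outside $\text{\sf TIR}$, and your stripping argument via Lemma~\ref{lma:rem_mg} breaks. The paper sidesteps this entirely: it does not seek the equality $MG(S_{order}^{k-1},v)=MG(S_{rem}^{k'-1},v)$ but only the inequality $MG(S_{order}^{k-1},v)\le MG(S_{rem}^{k'-1},v)$, which is immediate from submodularity (Lemma~\ref{lma:submodularity}) since $S_{rem}^{k'-1}\subseteq S_{order}^{k-1}$. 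That one line replaces your whole stripping procedure and dissolves the first obstacle.

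For $V_2$, your CELF-invariant route is workable but more delicate than necessary; the second obstacle you name (aligning $Q(w)$, $MG(S_{order},w)$, and $MG(S_{order}^{k-1},v)$) is again bypassed in the paper by a single application of Lemma~\ref{lma:rem_mg}. Since every $u\in\text{\sf TIR}\setminus S_{order}$ lies outside $F_2(s_r^{k'})$ by the termination rule for infection propagation, one gets the \emph{equality} $MG(S_{order}^{k-1},u)=MG(S_{order},u)$ directly, after which the Case-1 hypothesis and the fact that $w$ tops the queue give $MG(S_{order}^{k-1},s_o^k)\ge MG(S_{order},w)\ge MG(S_{order},u)=MG(S_{order}^{k-1},u)$ with no submodularity juggling. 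So both of your obstacles are artifacts of the chosen route rather than intrinsic to the lemma.
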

\begin{proof}
We prove this lemma for the nodes present in {\sf TIR} and outside {\sf TIR} separately.
	
First, we will prove that the lemma is true for $u \in V\setminus\{{\sf TIR}\cup S_{order}\}$.
As $S_{rem}^{k'-1} \subseteq S_{order}^{k-1}$, due to Lemma~\ref{lma:submodularity} (i.e., sub-modularity):
\begin{align}\label{eq:sremsord}
		MG(S_{rem}^{k'-1},u) \ge MG(S_{order}^{k-1},u), \quad \forall u \in V\setminus\{{\sf TIR}\cup S_{order}\}
\end{align}
When $s_o^k = s_r^{k'}$, $S_{order}\setminus S_{rem} \in {\sf TIR}$ (Lemma~\ref{lma:seedtir}.1). Hence, $TIR\cup S_{order} = TIR\cup S_{rem}$. From Inequality~\ref{eq:order_rem_2},
\begin{align}\label{sremsrk'u}
		MG(S_{rem}^{k'-1},s_r^{k'}) \ge MG(S_{rem}^{k'-1},u), \quad \forall u  \in V\setminus\{{\sf TIR}\cup S_{order}\}
\end{align}
By combining Lemma~\ref{lma:seedtir}.2, Inequality~\ref{eq:sremsord}, and Inequality~\ref{sremsrk'u}, we get $MG(S_{order}^{k-1},s_o^k) \ge MG(S_{order}^{k-1},u)$, for all $u \in V\setminus \{\text{\sf TIR}\cup S_{order}\}$.

Now, what is left to be proved is that Inequality~\ref{eq:greedy_req} holds for all nodes $u \in \text{\sf TIR}\setminus S_{order}$.
As every such node $u$ is at least {\sf 2-Family} away from $s_r^{k'}$, according to Lemma~\ref{lma:rem_mg},
\begin{align}
MG(S_{order}^{k-1},u) = MG(S_{order}^{k},u), \quad \forall u \in \text{\sf TIR}\setminus S_{order}
\end{align}
Since $w$ is the top-most node in the priority queue, and our assumption is that $MG(S_{order}^{k-1},s_o^k) \ge MG(S_{order},w)$, then
\begin{align}
MG(S_{order}^{k-1},s_o^k) \ge MG(S_{order},w) & \ge MG(S_{order},u)  \nonumber \\
& = MG(S_{order}^{k-1},u) \label{eq:sord_so_u}
\end{align}
From the Inequality~\ref{eq:sord_so_u}, we get $MG(S_{order}^{k-1},s_o^k) \ge MG(S_{order}^{k-1},u)$ for all nodes $u \in \text{\sf TIR}\setminus S_{order}$. This completes the proof.
\end{proof}
Now, we show that for Case-2, i.e., $MG(S_{order}^{k-1},s_o^k)$ $< MG(S_{order},$ $w)$, where $w$ is the top node in the priority queue, by doing iterative seed replacement for a maximum of $k'$ times, we achieve Case-1. Hence, our {\sf N-Family} method
provides the seed set same as the one provided by Greedy on the updated graph. First, we prove that for Case-2, only Case-B (i.e., $s_o^k$ = $s_r^{k'}$) holds, and $w \in {\sf TIR}\setminus S_{order}$.
\begin{lma}\label{lma:iter_seed_last}
	Consider $w$ as the top-most node in the priority queue, and $MG(S_{order}^{k-1},s_o^{k}) < MG(S_{order},w)$ (Case-2). Then,\\
	1. $s_o^{k} = s_{r}^{k'}$ (Case-A)\\ 
	2. $w \in TIR\setminus S_{order}$
\end{lma}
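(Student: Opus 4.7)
The plan is to prove the two parts in succession via contradiction, leveraging Lemma~\ref{lma:subcase1} (for Part 1), Lemma~\ref{lma:seedtir} (for Part 2), and the submodularity of $\sigma$ throughout. Recall from Lemma~\ref{lma:last_seed} that $s_o^k$ must equal either $s_n^k$ (Case-A) or $s_r^{k'}$ (Case-B), so ruling out Case-A immediately gives Part 1.

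For Part 1, I would suppose for contradiction that $s_o^k = s_n^k$. Then Lemma~\ref{lma:subcase1} applies and yields Inequality~\ref{eq:greedy_req}, namely $MG(S_{order}^{k-1}, s_o^k) \geq MG(S_{order}^{k-1}, v)$ for every $v \in V \setminus S_{order}$. Specializing to $v = w$ (legitimate since $w$ sits atop the priority queue of non-seed nodes, so $w \notin S_{order}$) and then chaining with submodularity $MG(S_{order}^{k-1}, w) \geq MG(S_{order}, w)$ (Lemma~\ref{lma:submodularity}, using $S_{order}^{k-1} \subseteq S_{order}$), I obtain $MG(S_{order}^{k-1}, s_o^k) \geq MG(S_{order}, w)$, which directly contradicts the Case-2 hypothesis. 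Hence $s_o^k = s_r^{k'}$.

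For Part 2, membership in $V \setminus S_{order}$ is immediate; what remains is to show $w \in \text{\sf TIR}$. Suppose for contradiction that $w \in V \setminus (\text{\sf TIR} \cup S_{order})$. Part 1 places us in Case-B, so Lemma~\ref{lma:seedtir}.1 gives $S_{order} \setminus S_{rem} \subseteq \text{\sf TIR}$, and because $s_o^k = s_r^{k'}$ is the last element of $S_{order}$, we have $S_{rem}^{k'-1} \subseteq S_{order}^{k-1}$. Submodularity then yields $MG(S_{rem}^{k'-1}, w) \geq MG(S_{order}^{k-1}, w) \geq MG(S_{order}, w)$. Since $w \notin S_{order}$ implies $w \notin S_{rem}$, Inequality~\ref{eq:order_rem_2} applies to give $MG(S_{rem}^{k'-1}, s_r^{k'}) \geq MG(S_{rem}^{k'-1}, w)$. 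Finally, invoking the identity $MG(S_{order}^{k-1}, s_o^k) = MG(S_{rem}^{k'-1}, s_r^{k'})$ from Lemma~\ref{lma:seedtir}.2 and chaining the three relations produces $MG(S_{order}^{k-1}, s_o^k) \geq MG(S_{order}, w)$, contradicting Case-2. Thus $w \in \text{\sf TIR} \setminus S_{order}$.

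The main obstacle is not any single deep step but rather the careful bookkeeping in Part 2: one must justify (i) the containment $S_{rem}^{k'-1} \subseteq S_{order}^{k-1}$ so that submodularity is applied in the correct direction, which relies on both Case-B identifying $s_r^{k'}$ as the last element of $S_{order}$ and on Lemma~\ref{lma:seedtir}.1 placing $S_{order} \setminus S_{rem}$ inside \textsf{TIR}; (ii) that $w \notin S_{rem}$ so Inequality~\ref{eq:order_rem_2} is legitimately invoked at $v = w$; and (iii) that the entire chain of inequalities strictly points against the Case-2 hypothesis. Part 1 is comparatively short, being little more than a direct application of Lemma~\ref{lma:subcase1} followed by a single submodular step.
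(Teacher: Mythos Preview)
Your overall strategy matches the paper's, and Part~2 is essentially identical to the paper's argument (same contradiction via Lemma~\ref{lma:seedtir}, Inequality~\ref{eq:order_rem_2}, and submodularity). One small remark there: the justification you give for invoking Inequality~\ref{eq:order_rem_2} should be that $w \notin \text{\sf TIR}$ (that inequality is stated only for $v$ outside {\sf TIR}), not merely $w \notin S_{rem}$; fortunately your contradiction hypothesis already gives you $w \notin \text{\sf TIR}$, so the step goes through.

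Part~1, however, has a genuine gap in the citation. You invoke Lemma~\ref{lma:subcase1}, but that lemma's hypothesis is the conjunction of Case-1 \emph{and} Case-A. You are working under Case-2 (that is the standing assumption of the present lemma), so Lemma~\ref{lma:subcase1} as stated does not apply. It is true that the \emph{proof} of Lemma~\ref{lma:subcase1} happens never to use the Case-1 hypothesis and derives Inequality~\ref{eq:greedy_req} from Case-A alone, but you cannot cite the lemma itself. The paper avoids this by arguing directly: from $s_o^k = s_n^k$ it uses Inequality~\ref{eq:snew_greedy} (which holds because the new seeds were chosen greedily) together with submodularity $MG(S_{new}^{k-1},u) \ge MG(S_{new},u)$ and the identification $S_{new}=S_{order}$ to obtain $MG(S_{order}^{k-1},s_o^k) \ge MG(S_{order},u)$ for all $u \in V\setminus S_{order}$, contradicting Case-2 at $u=w$. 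Replace your appeal to Lemma~\ref{lma:subcase1} with this two-line direct argument and Part~1 is complete.
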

\begin{proof}
We prove both parts of this lemma by contradiction.
	
For the first part, let us assume $s_o^k \ne s_{r}^{k'}$, which means $s_o^k = s_{n}^k$(Case-B). For all nodes $u \in V\setminus{S_{new}}$,

From Lemma~\ref{lma:submodularity}, we get:
\begin{align}\label{eq:snew_1_u}
MG(S_{new}^{k-1},u) \ge MG(S_{new},u).
\end{align}
Since $S_{new} = S_{order}$, and by combining Inequality~\ref{eq:snew_greedy} and Inequality~\ref{eq:snew_1_u}, we get $MG(S_{order}^{k-1},s_o^k) \ge MG(S_{order},u)$. This contradicts the given condition. Hence, $s_o^k = s_r^k$.
	
For the second part of the lemma, let us assume that $w \in V\setminus\{{\sf TIR} \cup S_{order}\}$ (obviously, $w \notin S_{order}$).
From the first part of the lemma, we have $s_{o}^k = s_{r}^{k'}$.

Since $S_{order}\setminus S_{rem} \in {\sf TIR}$, ${\sf TIR} \cup S_{order} = {\sf TIR} \cup S_{rem}$. From Inequality~\ref{eq:order_rem_2}, for $w \in V\setminus{TIR \cup S_{order}}$, we get:
\begin{align}\label{eq:lma11_1}
MG(S_{rem}^{k'-1},s_r^{k'}) \ge MG(S_{rem}^{k'-1},w)
\end{align}
Since $S_{rem}^{k-1} \subset S_{order}$, from Lemma~\ref{lma:submodularity}, we have:
\begin{align}\label{eq:lma11_2}
MG(S_{rem}^{k'-1},w) \ge MG(S_{order},w)
\end{align}
Following Lemma~\ref{lma:seedtir}.2, Inequality~\ref{eq:lma11_1} and Inequality~\ref{eq:lma11_2}, we get that $MG(S_{order}^{k-1},s_o^k)\ge MG(S_{order},w)$, which contradicts our assumption. Hence, $w \in TIR\setminus S_{order}$. This completes the proof.
\end{proof}
Now, in our iterative seed replacement phase, we begin with removing $s_o^k$ ($=s_r^{k'}$) from $S_{order}$; for every node $u\in F_2(s_o^k)\setminus S_{order}$, compute the marginal gain $MG(S_{order}^{k-1},u)$, and update the priority queue.
After updating the priority queue, we compute the new seed node from the updated graph by running Greedy over it. The new seed node computed comes from ${\sf TIR}\setminus S_{order}^{k-1}$, more specifically, it is the top-most node $w$ in the priority queue, as demonstrated below.
\begin{lma}\label{lma:w_iterseed}
If $w$ is the top-most node in the priority queue and $MG(S_{order}^{k-1},s_o^{k}) < MG(S_{order}\cup \{s_o^{k}\},w)$, then the new seed node that replaces $s_o^k$ ($= s_r^{k'}$) is $w$. To prove the lemma, we prove that $MG(S_{order}^{k-1},w) \ge MG(S_{order}^{k-1},u)$ for all nodes $u \in V\setminus S_{order}^{k-1}$.
\end{lma}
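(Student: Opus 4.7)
The plan is to establish $MG(S_{order}^{k-1},w) \ge MG(S_{order}^{k-1},u)$ for every $u \in V\setminus S_{order}^{k-1}$ by a case analysis on where $u$ sits relative to $TIR$ and to $F_2(s_o^k)$, leaning on Lemma~\ref{lma:iter_seed_last} (which forces $s_o^k = s_r^{k'}$ and $w \in TIR\setminus S_{order}$) together with Lemma~\ref{lma:rem_mg} (which controls when removing a seed changes a marginal gain).

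First I would prove the structural claim $F_2(s_o^k) \cap TIR = \emptyset$: if some $u \in TIR$ had $s_o^k$ in its $F_2$, the infection propagation described in Section~\ref{sec:inf_reg} would have ingested $s_o^k$ as an infected seed, contradicting $s_o^k = s_r^{k'} \in S_{rem}$. Symmetry of the $F_1$/$F_2$ relation (immediate because $MIIA$ and $MIOA$ are reverses of each other) gives the contrapositive. Lemma~\ref{lma:rem_mg} applied with $s = s_o^k$ then yields the identity $MG(S_{order}^{k-1},w) = MG(S_{order},w)$, since $w \in TIR$ implies $w \notin F_2(s_o^k)$.

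Next I would partition $V \setminus S_{order}^{k-1}$ into three pieces. Case (a): $u = s_o^k$ is direct from the hypothesis $MG(S_{order}^{k-1},s_o^k) < MG(S_{order},w)$ combined with the identity above. Case (b): $u \in V \setminus S_{order}$ with $u \notin F_2(s_o^k)$; Lemma~\ref{lma:rem_mg} again gives $MG(S_{order}^{k-1},u) = MG(S_{order},u)$, and the conclusion reduces to the priority-queue invariant $MG(S_{order},w) \ge MG(S_{order},u)$. Note this subcase already absorbs every $u \in TIR \setminus S_{order}$, because the structural claim excludes such $u$ from $F_2(s_o^k)$.

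The main obstacle is Case (c): $u \in F_2(s_o^k)\setminus S_{order}$. Here removing $s_o^k$ can only raise $u$'s marginal gain by submodularity, so the priority-queue argument fails directly. My plan is to route the comparison through $s_r^{k'}$ using the old-graph greedy ordering. The structural claim forces $u \notin TIR$, and because $S_{rem}\subseteq S_{order}$ we also get $u \notin S$; so $u$ is an old non-seed outside $TIR$. Applying Lemma~\ref{lma:tir_out}-style cancellations (Lemma~\ref{lma:rem_mg} peeled off one infected seed in $S_{order}^{k-1}\setminus S_{rem}^{k'-1}$ at a time, each of which lies in $TIR$ and therefore outside $F_2(u)$ and outside $F_2(s_o^k)$) yields both $MG(S_{order}^{k-1},u) = MG(S_{rem}^{k'-1},u)$ and $MG(S_{order}^{k-1},s_o^k) = MG(S_{rem}^{k'-1},s_r^{k'})$. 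Inequality~\ref{eq:order_rem_2} then supplies $MG(S_{rem}^{k'-1},u) \le MG(S_{rem}^{k'-1},s_r^{k'})$, and chaining with the Case-2 hypothesis and the identity for $w$ closes the case. The trickiest subtlety is the clean use of Inequality~\ref{eq:order_rem_2}: one must observe that the old greedy selected $s_r^{k'}$ at some step $i$ with seed set $S^{i-1}$ sandwiched as $S_{rem}^{k'-1} \subseteq S^{i-1} \subseteq S_{rem}^{k'-1} \cup TIR$, and then argue that the max-marginal-gain property at step $i$ transfers to $S_{rem}^{k'-1}$ via the same Lemma~\ref{lma:rem_mg} cancellations used above, which is where most of the care is needed.
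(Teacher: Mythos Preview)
Your overall strategy mirrors the paper's: invoke Lemma~\ref{lma:iter_seed_last} to pin down $s_o^k = s_r^{k'}$ and $w \in {\sf TIR}$, then compare $w$ against every candidate $u$ by case analysis. The paper splits on whether $u$ lies inside or outside ${\sf TIR}$; you split on whether $u$ lies inside or outside $F_2(s_o^k)$. Since your structural claim $F_2(s_o^k) \cap {\sf TIR} = \emptyset$ is exactly the termination rule for infection propagation, your Case~(b) subsumes the paper's inside-${\sf TIR}$ argument, and your Case~(c) is a subset of the paper's outside-${\sf TIR}$ argument.

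There is, however, a genuine gap in Case~(c). You claim the equality $MG(S_{order}^{k-1},u) = MG(S_{rem}^{k'-1},u)$ by peeling off each new seed $s' \in S_{order}^{k-1}\setminus S_{rem}^{k'-1}$ via Lemma~\ref{lma:rem_mg}, asserting that ``$s' \in {\sf TIR}$ and therefore $s' \notin F_2(u)$.'' But the termination condition for ${\sf TIR}$ only guarantees that \emph{old seeds} in $S_{rem}$ stay outside $F_2(v)$ for every $v \in {\sf TIR}$; it says nothing about an arbitrary non-seed $u \in F_2(s_o^k)$ sitting just outside ${\sf TIR}$. Such a $u$ can perfectly well have $F_2(u)$ reach into ${\sf TIR}$ and meet a new seed $s'$, in which case removing $s'$ \emph{does} change $MG(\cdot,u)$ and your peeling equality fails. (Your parallel equality $MG(S_{order}^{k-1},s_o^k) = MG(S_{rem}^{k'-1},s_r^{k'})$ is fine, because $s_o^k \in S_{rem}$ and the structural claim applies directly---this is the paper's Lemma~\ref{lma:seedtir}.2.) The fix is immediate: you only need the inequality $MG(S_{order}^{k-1},u) \le MG(S_{rem}^{k'-1},u)$, which follows from submodularity (Lemma~\ref{lma:submodularity}) since $S_{rem}^{k'-1} \subseteq S_{order}^{k-1}$. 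This is precisely how the paper closes its outside-${\sf TIR}$ case, sidestepping the peeling issue entirely.

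A smaller concern: in Case~(b) you invoke the priority-queue comparison $MG(S_{order},w) \ge MG(S_{order},u)$ for \emph{all} $u \notin F_2(s_o^k)$, including nodes outside ${\sf TIR}$. The paper only invokes this for $u \in {\sf TIR}$ (its Inequality~\ref{eq:w_iter3}) and routes every outside-${\sf TIR}$ node through Inequality~\ref{eq:order_rem_2} plus submodularity instead. Whether the queue invariant extends to outside-${\sf TIR}$ nodes depends on exactly which stale values sit in $Q$ after the update; the safer move is to follow the paper's split and send those nodes through the submodularity route as well.
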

\begin{proof}
From Lemma~\ref{lma:iter_seed_last}.1, $s_o^k = s_r^{k'}$ and from Lemma~\ref{lma:iter_seed_last}.2, $w\in {\sf TIR}\setminus S_{order}$. We prove this lemma for the nodes present in {\sf TIR} and outside {\sf TIR} seperately.

First, we shall prove that $MG(S_{order}^{k-1},w) \ge MG(S_{order}^{k-1},u)$ holds for $u \in V\setminus \{TIR \cup S_{order}^{k-1}\}$.
From Lemma~\ref{lma:submodularity} (i.e., sub-modularity), we get:
\begin{align}
MG(S_{order}^{k-1},w) &\ge MG(S_{order}^{k-1}\cup \{s_o^{k'}\},w)  & \nonumber \\
&> MG(S_{order}^{k-1},s_o^{k}) \quad \text{(due to Case-2 condition)}& \label{eq:w_iter1}
\end{align}
Combining Inequality~\ref{eq:w_iter1}, Lemma~\ref{lma:seedtir}.2, and Inequality~\ref{eq:order_rem_2} we get:
\begin{align}\label{eq:w_iter2}
MG(S_{order}^{k-1},w) > MG(S_{rem}^{k'-1},u)
\end{align}
%
According to Lemma~\ref{lma:submodularity} (submodularity), Inequality~\ref{eq:w_iter2} can be written as $MG(S_{order}^{k-1},w) > MG(S_{order}^{k-1},u)$ for all nodes $u \in V\setminus \{TIR \cup S_{order}^{k-1}\}$.
	
What is left to be proved is that $MG(S_{order}^{k-1},w) \ge MG(S_{order}^{k-1},u)$ holds for all $u \in {\sf TIR}\setminus S_{order}^{k-1}$. Since $w$ is the top-most node in the priority queue, we get:
\begin{align}\label{eq:w_iter3}
MG(S_{order}^{k-1}\cup \{s_r^{k'}\}, w) \ge MG(S_{order}^{k-1}\cup \{s_r^{k'}\}, u)
\end{align}
Moreover, $u$ and $w$ are at least {\sf 2-Family} away from $s_r^{k'}$. Thus, according to  Lemma~\ref{lma:rem_mg}, Inequality~\ref{eq:w_iter3} can be written as
\begin{align}
MG(S_{order}^{k-1},w) \ge MG(S_{order}^{k-1},u)
\end{align}
This completes the proof.
\end{proof}

After computing the new seed node, we check if we arrived at Case-1. If so, we terminate the algorithm, and report $S_{order}$ as the set of new seed nodes. Otherwise, we execute this iterative process for maximum of $k'$ times to reach Case-1. The following lemma ensures that the iterative seed replacement for a maximum of $k'$ times leads us to Case-1.
\begin{lma}\label{lma:maxiter}
Iterative seed replacement for a maximum of $k'$ times leads us to Case-1, i.e., $MG(S_{order}^{k-1},s_o^k) \ge MG(S_{order},w)$, $w$ is the top-most node in the priority queue.
\end{lma}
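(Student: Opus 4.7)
The plan is to establish a potential function that strictly decreases by one per iteration of the seed replacement loop, thereby bounding the total number of iterations by $k' = |S_{rem}|$. Define $\Phi := |S_{order} \cap S_{rem}^0|$, where $S_{rem}^0 = S \setminus \text{\sf TIR}$ is the fixed set of old seed nodes lying outside the total infected region. Initially, when Greedy is first run starting from $S_{rem}^0$, we have $\Phi = k'$; I will show that each execution of the ``then'' branch (i.e., a Case-2 triggered replacement step) drops $\Phi$ by exactly one, so after at most $k'$ such executions either $\Phi = 0$ or the algorithm has already terminated at Case-1.

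For the strict-decrease step I would invoke two already-established facts. First, Lemma~\ref{lma:iter_seed_last}.1 asserts that whenever Case-2 fires, the node being removed satisfies $s_o^k = s_r^{k'}$, the last surviving element of $S_{rem}^0$ in $S_{order}$; thus the removal strips one element of $S_{rem}^0$ out of $S_{order}$. Second, by Lemma~\ref{lma:w_iterseed} combined with Lemma~\ref{lma:iter_seed_last}.2, the new seed added in its place is the top-of-queue node $w \in \text{\sf TIR} \setminus S_{order}$, which in particular lies outside $S_{rem}^0$. So each iteration exchanges an $S_{rem}^0$ member for a TIR member in $S_{order}$, and $\Phi$ decreases by precisely one.

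Once $\Phi = 0$, no element of $S_{rem}^0$ remains in $S_{order}$, so the last seed $s_o^k$ of $S_{order}$ cannot coincide with any $s_r^{j}$ for $j \le k'$. By Lemma~\ref{lma:last_seed}, the only alternative is $s_o^k = s_n^k$, which is Case-A. The contrapositive of Lemma~\ref{lma:iter_seed_last}.1 (Case-2 implies Case-B, so Case-A implies not-Case-2, i.e., Case-1) now forces
\[
MG(S_{order}^{k-1}, s_o^k) \;\geq\; MG(S_{order}, w),
\]
which is exactly the termination condition. Since $\Phi$ is a nonnegative integer that strictly decreases by one per iteration, we reach this termination state after at most $k'$ iterations.

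The main obstacle is a notational/bookkeeping one rather than a mathematical one: the algorithm's internal variable $S_{rem}$ is overwritten to $S_{order}^{k-1}$ after every failed iteration (line~12 of Algorithm~\ref{alg:prop}), so ``$s_r^{k'}$, the last seed in $S_{rem}$'' refers to a moving target. One must carefully separate the mutable in-algorithm variable from the immutable reference set $S_{rem}^0 = S \setminus \text{\sf TIR}$ against which the potential $\Phi$ is measured, and verify that Lemmas~\ref{lma:iter_seed_last} and~\ref{lma:w_iterseed} continue to apply after the overwrite --- in particular, that each subsequent Greedy call in line~6 indeed picks its new seed from $\text{\sf TIR}$ rather than re-introducing a previously discarded member of $S_{rem}^0$. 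Given the structural guarantee of Lemma~\ref{lma:seedtir}.1 that new seeds under Case-B come from \text{\sf TIR}, this invariant carries through the iterations cleanly, and the bound of $k'$ follows.
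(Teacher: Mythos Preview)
Your potential-function framing --- tracking $\Phi = |S_{order} \cap S_{rem}^0|$ and arguing it drops by one per Case-2 iteration --- matches the paper's opening observation (``Since $|S_{rem}| = k'$, we can perform a maximum of $k'$ replacements''), and your invocation of Lemmas~\ref{lma:iter_seed_last} and~\ref{lma:w_iterseed} for the per-step decrement is exactly what the paper uses. The divergence is in how you close the argument once $\Phi$ hits zero.

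Your terminal step reads: $\Phi = 0$ implies $s_o^k \notin S_{rem}^0$; then by Lemma~\ref{lma:last_seed} the only remaining option is $s_o^k = s_n^k$ (Case-A); then the contrapositive of Lemma~\ref{lma:iter_seed_last}.1 gives Case-1. The last of these implications (Case-A $\Rightarrow$ Case-1) does carry through to later iterations, since its proof only uses the Greedy optimality of $s_n^k$ and submodularity. The gap is the middle step. Lemma~\ref{lma:last_seed}, read at the $(k'{+}1)$-th check, says $s_o^k$ is either the last element of the \emph{current} $S_{rem}$ or $s_n^k$; but by that point the in-algorithm $S_{rem}$ has been overwritten to a $(k{-}1)$-element subset of {\sf TIR}, disjoint from $S_{rem}^0$. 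So knowing $s_o^k \notin S_{rem}^0$ rules out neither alternative, and you cannot conclude Case-A. Lemma~\ref{lma:seedtir}.1, which you cite to carry the invariant through, only pins down where newly added seeds live --- it says nothing about which seed ends up last after re-sorting a set that now lies entirely inside {\sf TIR}.

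The paper handles the terminal step differently: rather than trying to force Case-A, it assumes $k'{-}1$ replacements have occurred (so only $s_r^1 \in S_{rem}^0$ survives in $S_{order}$), performs the $k'$-th replacement $s_r^1 \mapsto x$ explicitly, and then directly verifies the Case-1 inequality $MG(S_{order}^{k-1}, x) \ge MG(S_{order}^{k-1} \cup \{x\}, u)$ for all $u$, splitting into $u \in {\sf TIR}$ versus $u \notin {\sf TIR}$ and exploiting that $s_r^1$ is at least 2-Family away from every node in {\sf TIR} together with submodularity. That direct verification is the substantive content of the proof; your sketch correctly isolates the bookkeeping obstacle but stops short of supplying the argument that resolves it.
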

\begin{proof}
According to Lemma~\ref{lma:iter_seed_last}.1, for Case-2, the seed node with the least marginal gain belongs to $S_{rem}$.
Since $|S_{rem}| = k'$, we can perform a maximum of $k'$ replacements. Assume that we executed iterative seed replacement for $k'-1$ times, and we are still at Case-2. According to Lemma~\ref{lma:w_iterseed}, the new seed nodes computed for the past $k'-1$ times came from {\sf TIR}. At this stage, $s_r^1$ is the remaining old seed node outside {\sf TIR}, and $S_{order}^{k-1}$ are the set of seed nodes inside {\sf TIR} (Lemma~\ref{lma:w_iterseed}). Let $x$ be the top-most node in the priority queue; hence, $x \in TIR\setminus S_{order}^{k-1}$ (Lemma~\ref{lma:last_seed}.1), and $s_o^k = s_r^1$ (Lemma~\ref{lma:last_seed}.2). According to Lemma~\ref{lma:w_iterseed}, $x$ would be the new seed node. Hence, we shall prove that $MG(S_{order}^{k-1}, x) \ge MG(S_{order}^{k-1}\cup \{x\}, u)$, for all $u \in V\setminus S_{order}^{k-1}$. Since we prove for all nodes  $u \in V\setminus S_{order}^{k-1}$, it is true for $w$ also. We will prove the lemma for the nodes present in {\sf TIR} and outside {\sf TIR} separately.
	
First, we will prove that $MG(S_{order}^{k-1}, x) > MG(S_{order}^{k-1}\cup \{x\}, u)$ for all nodes $u \in V\setminus TIR$.
Since, $s_r^1$ is at least {\sf 2-Family} away from $x$, according to Lemma~\ref{lma:rem_mg} and our assumption that $x$ is the top-most node in the priority queue, we get:
\begin{align}\label{eq:lma13_1_1}
		MG(S_{order}^{k-1}, x)	= MG(S_{order}^{k-1}\cup \{s_r^1\}, x) > MG(S_{order}^{k-1}, s_r^1)
\end{align}
Since $\{S_{order}^{k-1}\cup\{{s_r^1}\}\}\setminus \{s_r^1\} \in {\sf TIR}$ (Lemma~\ref{lma:seedtir}.1), $s_r^1$ is the only seed node outside {\sf TIR}, and is at least {\sf 2-Family} away from all seed nodes in {\sf TIR}, the Inequality~\ref{eq:lma13_1_2} can be written as
\begin{align}\label{eq:lma13_1_2}
        MG(S_{order}^{k-1}, x) > MG(S_{order}^{k-1}, s_r^1) = \sigma(s_r^1) \ge \sigma(u)
\end{align}
Since the influence spread of a node is always greater than or equal to its marginal gain with respect to any seed set, Inequality~\ref{eq:lma13_1_2} can be written as
\begin{align}
        MG(S_{order}^{k-1}, x) > MG(S_{order}^{k-1}\cup \{x\},u)
\end{align}
What is left to be proved is that $MG(S_{order}^{k-1}, x) \ge MG(S_{order}^{k-1}\cup \{x\}, u)$ for all nodes $u \in {\sf TIR}\setminus S_{order}^{k-1}$.
According to our assumption that $x$ is the top-most node in the priority queue, we get:
\begin{align}\label{eq:lma13_2_1}
		MG(S_{order}^{k-1}\cup \{s_r^1\},x) \ge MG(S_{order}^{k-1}\cup \{s_r^1\},u)
\end{align}
We also have $x \in {\sf TIR}\setminus \{S_{order}^{k-1}\}$ (Lemma~\ref{lma:last_seed}.2). Moreover, $u$ and $x$ are at least {\sf 2-Family} away from $s_r^1$. Following Lemma~\ref{lma:rem_mg}, the Inequality~\ref{eq:lma13_2_1} can be written as
\begin{align}\label{eq:lma13_2_2}
		MG(S_{order}^{k-1},x) \ge MG(S_{order}^{k-1},u)
\end{align}
On the other hand, from Lemma~\ref{lma:submodularity} (submodularity), we get:
\begin{align}\label{eq:lma13_2_3}
		MG(S_{order}^{k-1},u) \ge MG(S_{order}^{k-1}\cup \{x\},u)
\end{align}
Following Inequality~\ref{eq:lma13_2_2} and Inequality~\ref{eq:lma13_2_3}, we get:
$MG(S_{order}^{k-1}, x) > MG(S_{order}^{k-1}\cup \{x\}, u)$, for all nodes $u \in {\sf TIR}\setminus S_{order}^{k-1}$.

After completing $k'$ iterations, $S_{order}^{k-1}\cup \{x\}$ becomes $S_{order}$, $s_o^k = x$, and $MG(S_{order}^{k-1}, x) \ge MG(S_{order}, u)$ for all nodes $u \in V\setminus {S_{order}}^{k-1}$. Hence,  $MG(S_{order}^{k-1}, s_o^k) > MG(S_{order}, w)$, where $w$ is the top-most node in the priority queue. Hence, the lemma.
\end{proof}
\begin{thrm}
The top-$k$ seed nodes reported by our {\sf N-Family} method provides $(1-\frac{1}{e})$ approximation guarantee to the optimal solution,
under the MIA model.
\end{thrm}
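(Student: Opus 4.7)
The plan is to reduce the theorem to a structural equivalence: show that the output $S_{order}$ of Algorithm~\ref{alg:prop} is identical to the seed set that standard {\sf Greedy} would return when executed from scratch on the updated graph under the MIA model. Once that equivalence is established, the $(1-\frac{1}{e})$ bound follows immediately from the classical analysis of {\sf Greedy} in~\cite{CWW10}. All the structural ingredients are already in place: Lemma~\ref{lma:rem_mg} controls which marginal gains change when a seed is removed (only those inside its 2-Family), Lemma~\ref{lma:FIR3} bounds each per-update affected region by $MIIA$, and Lemmas~\ref{lma:tir_out}--\ref{lma:tir_in_out} state that marginal gains are invariant outside {\sf TIR} and that seeds outside {\sf TIR} have no influence on nodes inside {\sf TIR}.

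My first step is to characterize the last seed $s_o^k$ of $S_{order}$. Since $S_{order}$ is the greedy-inclusion sort of $S_{new} = S_{rem} \cup \{s_n^{k'+1},\ldots,s_n^k\}$, and marginal gains are non-increasing both within $S_{rem}$ (inherited from the original Greedy run, still valid by Lemma~\ref{lma:tir_out}) and within the newly added block (by submodularity, Lemma~\ref{lma:submodularity}), $s_o^k$ must equal either $s_r^{k'}$ or $s_n^k$. This gives two exclusive subcases, which I then intersect with the algorithm's termination test: \emph{Case~1}, $MG(S_{order}^{k-1}, s_o^k) \ge MG(S_{order}, w)$, versus \emph{Case~2}, the opposite inequality, where $w$ is the queue top.

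For the termination branch, in Case~1 with $s_o^k = s_n^k$ the Greedy choice of $s_n^k$ in $S_{new}$ together with $S_{new}=S_{order}$ immediately yields $MG(S_{order}^{k-1}, s_o^k) \ge MG(S_{order}^{k-1}, u)$ for every $u\notin S_{order}$, which is precisely the optimality criterion that the updated-graph Greedy would impose. In Case~1 with $s_o^k = s_r^{k'}$, I would first show that every new seed in $S_{order}\setminus S_{rem}$ lies in {\sf TIR} (otherwise the ordering of $S_{rem}$ would be violated), so $\text{\sf TIR}\cup S_{order} = \text{\sf TIR}\cup S_{rem}$; Lemma~\ref{lma:rem_mg} then gives that the marginal gain of $s_r^{k'}$ is undisturbed by the newly inserted TIR-seeds. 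I would then split the quantifier over $u\notin S_{order}$ into $u\notin\text{\sf TIR}$ (handled by the old $S_{rem}$-ordering plus submodularity) and $u\in\text{\sf TIR}\setminus S_{order}$ (handled by the queue-top dominance together with Lemma~\ref{lma:rem_mg}). For the iterative branch (Case~2), I would argue by contradiction that Case~2 forces both $s_o^k = s_r^{k'}$ and $w\in\text{\sf TIR}\setminus S_{order}$: the alternative $s_o^k=s_n^k$ would, via submodularity and the Greedy choice of $s_n^k$, already place us in Case~1; and $w\notin\text{\sf TIR}$ would, via the $S_{rem}$-ordering and submodularity, do the same. Lemma~\ref{lma:rem_mg} then certifies that swapping $s_o^k$ out for $w$ is exactly the step Greedy on the updated graph would take after fixing $S_{order}^{k-1}$. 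Because each swap removes one element of $S_{rem}$ without reintroducing any non-{\sf TIR} node, the loop runs at most $k'=|S_{rem}|$ rounds; for the terminal round I would verify the dominance $MG(S_{order}^{k-1},x) \ge MG(S_{order}^{k-1}\cup\{x\},u)$ for all $u$, again splitting on $u\in\text{\sf TIR}$ versus $u\notin\text{\sf TIR}$ and using that the one remaining $S_{rem}$-seed is 2-Family-separated from every node in {\sf TIR} and that influence spread upper-bounds marginal gain against any seed set.

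The main obstacle I expect is the book-keeping in the $s_o^k=s_r^{k'}$ sub-case and throughout the iterative phase: every inequality reduces to showing that the ``far'' remaining seeds in $S_{rem}$ and the ``near'' seeds drawn from {\sf TIR} do not interact in their marginal gains, which in turn requires repeatedly invoking Lemma~\ref{lma:rem_mg} with the correct 2-Family separation and splitting quantifiers along the {\sf TIR}-vs-outside boundary. The termination condition of the infection-propagation process (which guarantees that every remaining seed outside {\sf TIR} is at least 2-Family away from every node inside {\sf TIR}) is the invariant that makes these invocations legal; keeping that invariant visible at every algorithmic step is where the proof is delicate, while the rest is a mechanical case split on the location of the comparison node and on which of $\{s_r^{k'}, s_n^k, w, x\}$ is currently the last seed.
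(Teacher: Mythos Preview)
Your proposal is correct and mirrors the paper's own proof almost exactly: the same reduction to ``$S_{order}$ equals Greedy on the updated graph,'' the same dichotomy $s_o^k\in\{s_r^{k'},s_n^k\}$, the same Case~1/Case~2 split on the termination test, the same contradiction argument forcing $s_o^k=s_r^{k'}$ and $w\in\text{\sf TIR}$ in Case~2, and the same $k'$-bounded replacement loop, all driven by the 2-Family separation via Lemma~\ref{lma:rem_mg} and Lemmas~\ref{lma:tir_out}--\ref{lma:tir_in_out}. The paper packages these steps as a sequence of auxiliary lemmas (Lemmas~7--13 in the appendix), but the logical content and the {\sf TIR}-inside/outside case splits you describe are identical.
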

\begin{proof}
The top-$k$ seed nodes reported by our {\sf N-Family} method are the same as the top-$k$ seed nodes obtained by running the Greedy on the updated graph
under the MIA model (by following Lemma~\ref{lma:subcase1} and Lemma~\ref{lma:subcase2}). Since, the Greedy algorithm provides the approximation guarantee of $1-\frac{1}{e}$
under the MIA model \cite{CWW10},  our {\sf N-Family} also provides the same approximation guarantee.
\end{proof}

\end{document}